\newif\ifmapx
  \begingroup\color{violet!80!black}
\long\def\apxonl#1{\ifmapx{\color{violet!80!black}#1}\fi}
\newif\ifshowtodo
\long\def\todo#1{\ifshowtodo{\color{red!80!black}TODO: #1}\fi}
\newif\ifshowdetail
\long\def\detail#1{\ifshowdetail{\color{green!50!black}#1}\fi}
\title{Data compression with low distortion \\and finite blocklength}
\author{
Victoria Kostina, \IEEEmembership{Member, IEEE}  \\
\thanks{
V. Kostina is with California Institute of Technology (e-mail: \href{mailto:vkostina@caltech.edu}{vkostina@caltech.edu}). 
This research is supported in part by the National Science Foundation (NSF)
under Grant CCF-1566567, and by the Simons Institute for the Theory of Computing. 
Parts of this paper were presented at the Allerton conference \cite{kostina2015allerton,kostina2016slb}. 
}
}
\begin{document}
\maketitle
\begin{abstract}

This paper considers lossy source coding of $n$-dimensional memoryless sources and shows an explicit approximation to the minimum source coding rate required to sustain the probability of exceeding distortion $d$ no greater than $\epsilon$, which is simpler than known dispersion-based approximations.  

Our approach takes inspiration in the celebrated classical result stating that the Shannon lower bound to rate-distortion function becomes tight in the limit $d \to 0$. We formulate an abstract version of the Shannon lower bound that recovers both the classical Shannon lower bound and the rate-distortion function itself as special cases. Likewise, we show that a nonasymptotic version of the abstract Shannon lower bound recovers all previously known nonasymptotic converses.

 A necessary and sufficient condition for the Shannon lower bound to be attained exactly is presented. It is demonstrated that whenever that condition is met, 
the rate-dispersion function is given simply by the varentropy of the source. Remarkably, all finite alphabet sources with balanced distortion measures satisfy that condition in the range of low distortions.  

Most continuous sources violate that condition. Still, we show that lattice quantizers closely approach the nonasymptotic Shannon lower bound, provided that the source density is smooth enough and the distortion is low. This implies that fine multidimensional lattice coverings are nearly optimal in the rate-distortion sense even at finite $n$.  The achievability proof technique is based on a new bound on the output entropy of lattice quantizers in terms of the differential entropy of the source, the lattice cell size and a smoothness parameter of the source density. The technique avoids both the usual random coding argument and the simplifying assumption of the presence of a dither signal.  

\end{abstract}

\begin{IEEEkeywords}
 Lossy source coding, lattice coding, rate-distortion function, Shannon's lower bound, low distortion, high resolution, finite blocklength regime, dispersion, rate-dispersion function, nonasymptotic analysis.
\end{IEEEkeywords}

\section{Introduction}

We showed in \cite{kostina2011fixed} that for the compression of a stationary memoryless source under a single-letter distortion measure, the minimum achievable source coding rate $R(n,d,\epsilon)$ comparable with blocklength $n$ and the probability $\epsilon$ of exceeding distortion $d$ is given by
\begin{equation}
R(n,d,\epsilon) =  R(d) + \sqrt{\frac {\mathcal V(d)} n }\Qinv{\epsilon} + \bigo{ \frac{\log n} n }, \label{block2order}
\end{equation}
where $Q$ is the complementary Gaussian cdf, $R(d)$ is the rate-distortion function, and $\mathcal V(d)$ is the rate-dispersion function of the source. The rate-dispersion function quantifies the overhead over the rate-distortion function incurred by the finite blocklength constraint. 
Dropping the $\bigo{ \frac{\log n} n }$ remainder term in \eqref{block2order}, we obtain a simple approximation to the minimum achievable coding rate. That approximation provides good accuracy even at short blocklengths, as evidenced by the numerical results in \cite{kostina2011fixed}. 
 
In this contribution, we derive a simplification of \eqref{block2order} in the regime of low $d$, which corresponds to the practically relevant regime of high resolution data compression. 
The interest in pursuing such a simplification stems from the fact that 
closed-form formulas for $R(d)$ and $\mathcal V(d)$ are rarely available. 
Indeed, as shown in \cite{kostina2011fixed}, both the rate-dispersion and the rate-distortion function are described parametrically in terms of the solution to the rate-distortion convex minimization problem, defined for a source distribution $P_X$ and a distortion measure $\sd \colon \mathcal X  \times \mathcal Y \mapsto \mathbb R_+$  as
\begin{equation}
\mathbb R_{ X}(d) \triangleq \inf_{\substack{P_{Y|X}\colon \mathcal X \mapsto \mathcal Y \\ \E{\sd(X, Y)} \leq d}} I(X; Y) \label{eq:RR(d)}.
\end{equation} 
The rate-distortion and the rate-dispersion function are given by the mean and the variance of the $\mathsf d$-tilted information, the random variable that is defined for a distribution $P_X$ and a distortion measure $\sd$ as
 \begin{equation}
\jmath_{X}(x, d) \triangleq \log \frac 1 {\E{ \exp\left\{ \lambda^\star d - \lambda^\star \sd(x, Y^\star)\right\}}} \label{eq_idball}, 
\end{equation}
where $\lambda^\star = - \mathbb R^\prime_X(d)$ is the negative of the slope of the rate-distortion function of $X$ at distortion $d$, and the expectation is with respect to the unconditional distribution of $Y^\star$, the random variable that attains the minimum in \eqref{eq:RR(d)}, i.e. $\mathbb R_{ X}(d)  = I(X; Y^\star)$. \footnote{We assume for now that such a random variable exists.} Although the convexity of the problem in \eqref{eq:RR(d)} often allows for an efficient numerical computation of its optimum \cite{blahut1972computation}, closed-form expressions are available only in special cases. In those cases, the distortion measure is carefully tailored to match the source.  

The absence of an explicit expression for the $\mathsf d$-tilted information motivates a closer look into the behavior of  \eqref{eq_idball}. This paper shows that under regularity conditions and as long as $d$ is small enough, the $\sd$-tilted information in a random variable $X \in \mathbb R^n$ is closely approximated by, with high probability, 
\begin{equation}
 \jmath_X(X, d) \approx \log \frac 1 {f_X(X)} - \phi(d)  \label{eq:jappx},
\end{equation}
 where $f_X$ is the source density, and $\phi(d)$ is a term that depends only on the distortion measure and distortion threshold $d$. For example, for the mean-square error (MSE) distortion, 
\begin{equation}
\phi(d) = n \log \sqrt{2 \pi e d}. 
\end{equation}

If the source alphabet is finite and all columns of the
distortion matrix $\{\sd(x,y)\}_{x, y}$ consist of the same set of entries (balanced distortion measure) an even stronger claim holds, namely, 
\begin{equation}
 \jmath_X(X, d) = \log \frac 1 {P_X(X)} - \phi(d)  \qquad \text{a.s.}, \label{eq:jfinite}
\end{equation}
as long as $d \leq d_c$, where $d_c > 0$ is a function of $P_X$ and the distortion measure only.

The value of $\jmath_{X}(x, d)$ can be loosely interpreted as the amount of information that needs to be stored about $x$ in order to restore it with distortion $d$ \cite{kostina2011fixed}.
The explicit nature of \eqref{eq:jappx} illuminates the tension between the likelihood of $x$ and the target distortion: the likelier realization $x$ is, the fewer bits are required to store it; the lower tolerable $d$ is, the more bits are required in order to represent the source with that distortion.\footnote{$\phi(d)$ is strictly increasing in $d$.} This intuitively pleasing insight is not afforded by the general formula \eqref{eq_idball}.

To gain further understanding of the form of \eqref{eq:jappx}, recall that the Shannon lower bound \cite{shannon1959coding} states that the rate-distortion function is bounded below by the difference between the differential entropy of the source and a term that depends only on the distortion measure and distortion threshold $d$: 
\begin{equation}
 \mathbb R_X(d) \geq  \ushort {\mathbb R}_X(d) = h(X) - \phi(d), \label{slb}
\end{equation}
where $h(X)$ is the differential entropy of the source. Due to its simplicity and because it becomes increasingly tight in the limit of low distortion \cite{linkov1965evaluation,linder1994asymptotic}, the Shannon lower bound is often used as a convenient proxy for ${\mathbb R}_X(d)$. The statement in \eqref{eq:jappx} can be viewed as a nonasymptotic refinement of those results. 
More precisely, this paper proposes a nonasymptotic version of the Shannon lower bound, valid at any $d$, and demonstrates that at low $d$, the bound can be approached by a lattice quantizer followed by a lossless coder.  A careful analysis of those bounds reveals that for a class of difference distortion measures and stationary memoryless sources with sufficiently smooth  densities, as $d \to 0$ and $n \to \infty$, the nonasymptotically achievable source coding rate admits the following expansion: 
\begin{equation}
 R(n,d,\epsilon) =  \ushort{R}(d) + \sqrt{\frac {\ushort{\mathcal V}} n }\Qinv{\epsilon} + \bigo{\sqrt d} + \bigo{ \frac{\log n} n }, \label{eq:Rapprox}
\end{equation}
where $\ushort{R}(d) = \ushort{\mathbb R}_{\sX}(d)$ is Shannon's lower bound for $P_{\sX}$, the single-letter distribution of the source, and  
\begin{equation}
 \ushort{\mathcal V} \triangleq \Var{ \log f_\sX(\sX) } \label{vlb}.
\end{equation}
Thus, similar to \eqref{block2order}, $\ushort{R}(d)$ and  $ \ushort{\mathcal V} $ are given by the mean and the variance of the random variable on the right side of \eqref{eq:jappx} (particularized to $P_{\sX}$). The term $\bigo{\sqrt d}$ in \eqref{eq:Rapprox}, which is always nonnegative, is the penalty due to the density of the source not being completely flat within each lattice quantization cell. Naturally, this term vanishes as the sizes of quantization cells decrease, and its magnitude depends on smoothness of the source density.   

Since \eqref{eq:Rapprox} is attained by lattice quantization,  lattice quantizers are nearly optimal at high resolution even at finite blocklength. The implication for engineering practice is that, in a search for good codes, it is unnecessary to consider more complex structures than lattices if the goal is high resolution digital representation of the original analog signal. Due to the regularity of
the code vector locations, lattice quantizers offer a great reduction in the complexity of encoding
algorithms (e.g. \cite{conway1982fast,gersho2012vector}). Therefore, both their performance and their regular algebraic structure make lattices a particularly appealing choice for an efficient analog-to-digital conversion.

This paper also develops new results on the rate-distortion performance of lattice quantization of continuous sources with memory. 
We prove that for a class of sources satisfying a smoothness condition, variable-length lattice quantization attains Shannon's lower bound in the limit of $n \to \infty$ and $d \to 0$, even if the source is nonergodic or nonstationary.   Furthermore, if the source density is log-concave, we show that Shannon's lower bound is attained at a speed $\bigo{\frac 1 {\sqrt n}}$ with increasing blocklength, providing the first result of this sort for lossy compression of sources with memory. 

The key to our study of lattice quantization is an explicit nonasymptotic bound on the probability distribution (and, in particular, the entropy) observed at the output of a lattice quantizer for $X$. The bound is a function of the lattice cell size and a smoothness parameter of the source density. The bound provides an estimate of the speed of convergence in the classical results by R\'enyi \cite[Theorem 4]{renyi1959dimension} and 
 Csisz\'ar \cite{csiszar1971entropyquantization}. 

Another essential ingredient  of our development is a new, abstract formulation of the Shannon lower bound that encompasses the classical Shannon lower bound as a special case and that does not impose any symmetry conditions on the distortion measure. An appropriate choice of an auxiliary measure in the abstract Shannon lower bound makes it equal to the rate-distortion function.   Likewise, a nonasymptotic version of the abstract Shannon lower bound recovers all previously known nonasymptotic converses. 

 We state necessary and sufficient conditions for the abstract Shannon lower bound to hold with equality.  In particular, those conditions allow us to establish the validity of \eqref{eq:jfinite} for finite alphabet sources with balanced distortion. Inserting \eqref{eq:jfinite} into \eqref{block2order}, we conclude that for discrete memoryless sources with balanced distortion, for all $d \leq d_c$, the nonasymptotic fundamental limit is given simply by
\begin{equation}
 R(n,d,\epsilon) =  \ushort{R}(d) + \sqrt{\frac {\ushort {\mathcal V}} n }\Qinv{\epsilon} + \bigo{ \frac{\log n} n },\label{eq:Rapproxf}
\end{equation}
 which is a sharpened version of \eqref{eq:Rapprox} without the $\bigo{\sqrt d}$ term.  More generally, \eqref{eq:Rapproxf} holds for any stationary memoryless source whose rate-distortion function meets the Shannon lower bound with equality. 


Notable prior contributions to the understanding of lattice quantizers in large dimension include the works by Rogers \cite{rogers1964packing}, Gersho \cite{gersho1979asymptotically}, Zamir and Feder \cite{zamir1992universal} and Linder and Zeger \cite{linder1994tessellating}. Rogers \cite{rogers1964packing} showed the existence of efficient lattice coverings of space. Using a heuristic approach, Gersho \cite{gersho1979asymptotically} studied tessellating vector quantizers, i.e. quantizers whose regions are congruent with some tessellating convex polytope $P$.\footnote{A polytope $P$ is called {\it tessellating} if there exists a partition of $\mathbb R^n$ consisting of translated and rotated copies of $P$.} Although every lattice quantizer is a tessellating quantizer, the converse is not true.
Gersho \cite{gersho1979asymptotically} showed heuristically that in the limit of low distortion, tessellating vector quantizers approach $n$-dimensional Shannon's lower bound to within a term of order $\bigo{\frac {\log n}{n}}$. Relying on a conjecture by Gersho,  Linder and Zeger \cite{linder1994tessellating} streamlined the proof of Gersho's result and reported that the minimum entropy among all $n$-dimensional tessellating vector quantizers approaches the $n$-letter Shannon's lower bound in the limit of low $d$, provided that the Gersho conjecture is true. Zamir and Feder \cite{zamir1992universal} considered the setting in which a
signal called a dither is added to an input signal prior to quantization, namely, dithered quantization, and showed an upper bound on the achievable conditional (on the dither) output entropy of dithered lattice quantizers. Their result relied on a rather restrictive assumption on the source density violated even by the Gaussian distribution. That assumption was later relaxed by Linder and Zeger \cite{linder1994tessellating}. Zamir and Feder \cite{zamir1996onlatticenoise} went on to study the properties of a vector uniformly distributed over a lattice quantization cell, and showed that the normalized second moment of the optimal lattice quantizer approaches that of a ball, $\frac 1 {2 \pi e}$, as the dimension increases. While the assumption of the availability of the dither signal both at the encoder and the decoder greatly simplifies the analysis and improves the performance somewhat by smoothening the underlying densities, it can also substantially complicate the engineering implementation. This paper does not consider dithered quantization. 

Historically, theoretical performance analysis of lossy compressors proceeded in two disparate directions: bounds derived from Shannon theory \cite{shannon1959probability}, and bounds derived from high resolution approximations \cite{bennett1948spectra,zador1982quantizationerror}. The former provides asymptotic results as the sources are coded using longer and longer blocks. The latter assumes fixed input block size but estimates the performance as the encoding rate becomes arbitrarily large. This paper fuses the two approaches to study the performance of high resolution block compressors from the Shannon theory viewpoint.

So as not to clutter notation, in those statements in which the Cartesian structure of the space is unimportant, we will denote random vectors simply by $X$, $Y$, etc., omitting the dimension $n$. Wherever necessary, we will make the dimension explicit, writing $X^n$, $Y^n$ in lieu of $X$, $Y$.  For a stationary memoryless process $X_1, X_2, \ldots$, we denote by $\sX$ the random variable that is distributed the same as $X_i, i = 1,2, \ldots$. All logarithms are arbitrary common base.  The Euclidean norm is denoted by $\| \cdot \|$ and the $L_p$-norm is denoted by $\| \cdot \|_p$.


\apxonl{
Gerrish \cite{gerrish1964nongaussian} showed that the Shannon
lower bound is attainable if and only if a process can be decomposed
into two statistically independent processes with one
of them being a white Gaussian process

Blahut's \cite{blahut1972computation} comment on the computation of $R(d)$ in the non-discrete case: the search for extremizing probability distributions is
now a problem in the calculus of variations with constraints,
but otherwise closely follows the discrete case.
}

The rest of the paper is organized as follows. \secref{sec:slb} presents the abstract Shannon lower bound together with its nonasymptotic counterpart, shows a necessary and sufficient condition for the abstract Shannon lower bound to be attained with equality and demonstrates \eqref{eq:jfinite}.   \secref{sec:lattice} presents a nonasymptotic (in both $n$ and $d$) analysis of lattice quantization and shows an upper bound on the output entropy of lattice quantizers. \secref{sec:asymp} presents an asymptotic analysis of the bounds in  \secref{sec:slb} and \secref{sec:lattice}.  For a class of sources with memory, \secref{sec:asymp1} shows that lattice quantization attains the Shannon lower bound in the limit of large $n$ and small $d$. For stationary memoryless sources, \secref{sec:asymp2} presents a refined asymptotic analysis that quantifies how fast that limit is approached and establishes \eqref{eq:Rapprox}.  For clarity of presentation, the exposition in \secref{sec:lattice} and \secref{sec:asymp} is restricted to the MSE distortion. The generalization to non-MSE distortion measures is postponed until Section \ref{sec:beyondMSE}.

\section{The abstract Shannon lower bound and its nonasymptotic counterpart}
\label{sec:slb}
 \subsection{The abstract Shannon lower bound}

The (informational) rate-distortion function in \eqref{eq:RR(d)} admits the following parametric representation. 

\begin{thm} [{Parametric representation of $\mathbb R_X(d)$, \cite{csiszar1974extremum}}]
 \label{sc:thm:csiszarg}
Suppose that the following basic assumptions are satisfied. 
\begin{enumerate}[(a)]
 
  \item  \label{sc:item:a}
 $\mathbb R_X(d)$ is finite for some $d$, i.e. $ d_{\min} < \infty$, where
\begin{equation}
 d_{\min} \triangleq \inf \left\{ d\colon ~ \mathbb R_X(d) < \infty \right\} \label{sc:dmin}.
\end{equation}

 \item \label{sc:item:c} The distortion measure is such that there exists a finite set $E \subset \mathcal Y$ such that
\begin{equation}
 \E{ \min_{y \in E} \mathsf d(X, y)} < \infty \label{sc:dcsiszar}.
\end{equation}
\end{enumerate}

For each $d > d_{\min}$, it holds that
\begin{equation}
 \mathbb R_X(d) = \max_{g(x), ~ \lambda}\left\{- \E{ \log  {g(X)} } - \lambda d\right\} \label{eq_RR(d)csiszar},
\end{equation}
where the maximization is over $g(x) \geq 0$ and $\lambda\geq 0$ satisfying the constraint
\begin{equation}
\E{ \frac {\exp\left(  - \lambda \mathsf d(X, y)\right) } {g(X)} } \leq 1 ~ \forall y \in \mathcal Y \label{eq_csiszarg}.
\end{equation}
\end{thm}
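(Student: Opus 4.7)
The plan is to prove the two inequalities separately: weak duality (the $\geq$ direction, holding under no extra assumption) and strong duality (the $\leq$ direction, using assumptions \eqref{sc:item:a} and \eqref{sc:item:c} to invoke existence of the rate-distortion optimizer). The whole result is a Lagrangian-duality statement for the convex program \eqref{eq:RR(d)}, where the dual variables are the nonnegative multiplier $\lambda$ (for the distortion constraint) and the nonnegative function $g$ (for the pointwise constraint in $y$ arising from the KKT analysis of the $P_Y$-marginal).

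For the converse direction, I would fix an arbitrary feasible pair $(g,\lambda)$ and an arbitrary $P_{Y|X}$ achieving $\E\sd(X,Y)\leq d$, and show the single inequality
\[
I(X;Y) + \E[\log g(X)] + \lambda\,\E\sd(X,Y) \;\geq\; 0.
\]
Writing $I(X;Y)=\E[\log dP_{X|Y}/dP_X]$ and combining with the other two terms, the expression is $\E[\log Z]$ where $Z = \tfrac{dP_{X|Y}(X|Y)\,g(X)\,e^{\lambda\sd(X,Y)}}{dP_X(X)}$. By Jensen, $\E[\log Z]\geq -\log\E[1/Z]$, and a routine change of measure from $P_{XY}$ to $P_X\otimes P_Y$ (exploiting the cancellation $dP_{X|Y}\cdot dP_Y=dP_X\cdot dP_{Y|X}$) yields
\[
\E[1/Z] \;=\; \int\!\!\int \frac{e^{-\lambda\sd(x,y)}}{g(x)}\,dP_X(x)\,dP_Y(y) \;\leq\; 1
\]
by the feasibility constraint \eqref{eq_csiszarg}. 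Using $\lambda\geq 0$ and $\E\sd(X,Y)\leq d$ to replace $\E\sd(X,Y)$ by $d$, taking the infimum over admissible $P_{Y|X}$ and then the supremum over feasible $(g,\lambda)$ gives $\mathbb R_X(d)\geq \sup\{-\E[\log g(X)]-\lambda d\}$.

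For the achievability direction, I would construct a dual feasible pair attaining equality. Under assumptions \eqref{sc:item:a}--\eqref{sc:item:c}, Csiszár's existence theory \cite{csiszar1974extremum} guarantees an optimal $P_{Y^\star|X}$ and a well-defined slope $\lambda^\star=-\mathbb R'_X(d)\geq 0$. Take
\[
g^\star(x)\;=\;\E\bigl[\exp(-\lambda^\star\sd(x,Y^\star))\bigr],
\]
so that by definition \eqref{eq_idball} one has $\log(1/g^\star(X))=\jmath_X(X,d)+\lambda^\star d$, whence $-\E[\log g^\star(X)]-\lambda^\star d = \E[\jmath_X(X,d)] = \mathbb R_X(d)$. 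The nontrivial step is verifying feasibility: the KKT/complementary-slackness condition for the rate-distortion optimizer is precisely $\E_X[e^{-\lambda^\star\sd(X,y)}/g^\star(X)]\leq 1$ for every $y\in\mathcal Y$ (with equality on $\mathrm{supp}\,P_{Y^\star}$), which is exactly \eqref{eq_csiszarg}. Combined with the converse, this proves the maximum is attained and equals $\mathbb R_X(d)$.

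The main obstacle is the existence and KKT characterization of $P_{Y^\star|X}$ in the general, possibly non-discrete setting; this is where assumptions \eqref{sc:item:a} and \eqref{sc:item:c} enter, and the cleanest route is to invoke Csiszár's original argument rather than re-derive it. The converse side is self-contained and essentially a one-line Jensen computation once the reference change of measure is set up correctly.
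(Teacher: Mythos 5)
The paper offers no proof of this statement: it is quoted directly from Csisz\'ar \cite{csiszar1974extremum}, so the only comparison possible is against what that citation actually supplies. Judged on its own, your weak-duality half is correct and essentially the standard argument: writing the three terms as $\mathbb{E}[\log Z]$, applying Jensen, and changing measure from $P_{XY}$ to $P_X\otimes P_Y$ so that the constraint \eqref{eq_csiszarg} bounds $\mathbb{E}[1/Z]$ by one, gives $I(X;Y)\geq -\mathbb{E}[\log g(X)]-\lambda d$ for every feasible $(g,\lambda)$ and every admissible $P_{Y|X}$ (with the routine caveats that the cases $I(X;Y)=\infty$ or $g(X)=0$ on a set of positive probability are trivial, and that in abstract spaces one should work with $dP_{XY}/d(P_X\times P_Y)$ rather than $dP_{X|Y}/dP_X$).

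The achievability half, however, has a genuine gap. You invoke ``Csisz\'ar's existence theory'' to obtain an optimal $P_{Y^\star|X}$, but assumptions (a)--(b) of the theorem do \emph{not} guarantee that the infimum in \eqref{eq:RR(d)} is attained; the paper itself stresses, right after Remark 2, that the value of Csisz\'ar's result is precisely that the identity \eqref{eq_RR(d)csiszar} holds on abstract alphabets \emph{without} assuming the existence of $P_{Y^\star|X}$ (and, relatedly, the paper's Theorem \ref{thm:lbeq} notes that differentiability of $\mathbb R_X(d)$ --- which you need to define $\lambda^\star=-\mathbb R_X'(d)$ --- is a consequence of that existence, not of (a)--(b)). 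So your construction $g^\star(x)=\mathbb{E}[\exp(-\lambda^\star\mathsf d(x,Y^\star))]$ is only available in a special case, and even there the remaining step --- that the constraint holds for \emph{all} $y\in\mathcal Y$, not merely $P_{Y^\star}$-a.s.\ --- is exactly the nontrivial variational argument (perturbation of $P_{Y^\star}$ toward a point mass at an arbitrary $y$), which you assert as ``precisely the KKT condition'' rather than prove. To establish the theorem as stated, including attainment of the maximum, you must either reproduce Csisz\'ar's construction of the maximizing $(g,\lambda)$ (which proceeds via finite subsets of $\mathcal Y$ and limits of dual solutions, with no primal optimizer in sight) or cite the duality theorem itself; citing an existence theorem that does not hold under the stated hypotheses does not close the gap.
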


\begin{remark}
 The maximization over $g(x) \geq 0$ in \eqref{eq_RR(d)csiszar} can be restricted to only $0 \leq g(x) \leq 1$\cite{csiszar1974extremum}. Equality in \eqref{eq_csiszarg} holds for  $P_{Y^\star}$-a.s. $y$.   
 \apxonl{
 Csisz\'ar showed that the maximum in \eqref{eq_RR(d)csiszar} remains the same if $0 \leq g(x) \leq 1$ is allowed. Assume that the maximizer $0 \leq g^\star(x) \leq 1$ attains  the maximum. Letting  $g(x) = a > 1$ for $x \in A$, where the Lebesgue measure of $A$ is nonzero, and $g(x) = g^\star(x)$ everywhere else. While $g(x)$ still satisfies \eqref{eq_csiszarg}, the value of $- \E{ \log  {g(X)} }$ can only decrease.  
 }
 \label{rem:1}
\end{remark}

\begin{remark}
The $\sd$-tilted information (defined in \eqref{eq_idball},  \cite{kostina2011fixed}) can be alternatively defined as
\begin{equation}
 \jmath_X(x, d) = - \log g(x) - \lambda^\star d, \label{eq:jg}
\end{equation}
where the pair $(g(\cdot), \lambda)$ attains the maximum in \eqref{eq_RR(d)csiszar}. So, 
\begin{equation}
\mathbb R_X(d) = \E{\jmath_X(X, d)}. \label{eq:Ej} 
\end{equation}
Furthermore, if the infimum in \eqref{eq:RR(d)} is attained by some $Y^\star$, i.e. $\mathbb R_{X}(d) = I(X; Y^\star)$, then
\begin{equation}
g(x) =  \E{ \exp\left( - \lambda^\star \sd(x, Y^\star)\right)}
\end{equation}
leads to the definition in \eqref{eq_idball}.  
\label{rem:2}
\end{remark}

\apxonl{
TODO: add prefix and non-prefix converses. 
}

For finite alphabet sources,  a parametric representation of $\mathbb R_X(d)$  is contained in Shannon's paper \cite{shannon1959coding}; both Gallager's \cite[Theorem 9.4.1]{gallager1968information} and Berger's \cite{berger1971rate} texts contain parametric representations of $\mathbb R_X(d)$ for discrete and continuous sources. However, it was Csisz\'ar \cite{csiszar1974extremum} who gave rigorous proofs of \eqref{eq_RR(d)csiszar} in the following much more general setting: $X$ belongs to a general abstract probability space, and the existence of the conditional distribution $P_{Y^\star | X}$ attaining $\mathbb R_X(d)$ is not required.

Here, we leverage the result of Csisz\'ar to state a generalization of the Shannon lower bound to abstract probability spaces. 

Each choice of $\lambda \geq 0$   and $g$ satisfying \eqref{eq_csiszarg} gives rise to a lower bound to $\mathbb R_X(d)$. The Shannon lower bound corresponds to a particular choice of $(\lambda, g)$. 

Let $\mu$ be a measure on $\mathcal X$ such that the distribution of $X$ is absolutely continuous with respect to $\mu$. Denote the density of the distribution of $X$ with respect to $\mu$ (Radon-Nikodym derivative) by
\begin{equation}
f_{X \| \mu}(x) \triangleq \frac{dP_X}{d \mu} (x) \label{eq:density},
\end{equation}
and the corresponding log-likelihood ratio by
\begin{align}
\imath_\mu(x) \triangleq \log \frac{d\mu}{d P_X}(x). 
\end{align}

The differential entropy with respect to $\mu$ can be defined as
\begin{align}
h_\mu(X) &\triangleq  \E{\imath_\mu(X)} \label{eq:hmu}\\
&= - D(f_X \| \mu).
\end{align}

If $X$ is a continuous random variable, a natural choice for $\mu$ is the Lebesgue measure. Then, the density in \eqref{eq:density} is known as the probability density function, and $h_\mu(X)$ is simply $h(X)$, the differential entropy of $X$.  If $X$ is a discrete random variable, a natural choice for $\mu$ is the counting measure. Then, the density in \eqref{eq:density} is the probability mass function, and $h_\mu(x)$ is equal to $H(X)$, the entropy of $X$.

It is easy to verify that the choice of $\lambda$ and $g$ in Table \ref{tab:slb} satisfies \eqref{eq_csiszarg}.  The Shannon lower bound can now be generalized to abstract spaces and arbitrary distortion measures as follows. 
\begin {table}
\centering
\resizebox{.8 \linewidth}{!}
{%
\begin{tabu}{  c  }
$
\begin{aligned}
\Sigma \triangleq &~ \sup_{y\in \mathcal Y} \int \exp(-\lambda \sd(x, y) ) d\mu(x) \notag\\
=&~ \int \exp(-\lambda \sd(x, y_\lambda) ) d\mu(x) \label{eq:sup}
\end{aligned} $\\ 
\hline
$
\dfrac{ dP_{X|Y^\star = y}}{d\mu}(x) \triangleq \dfrac {\exp(-\lambda \mathsf d(x, y))} {\int \exp(-\lambda \sd(x, y) ) d\mu(x) }
$\\
\hline
 $
 \begin{aligned}
 \phi_\mu(d) &\triangleq  \log \Sigma + \lambda d\\
 g(x) &= f_{X \| \mu}(x)  \Sigma \\
 \end{aligned}
 $  \\
\hline
$\lambda > 0$: arbitrary 
\\
\hline
\end{tabu}
}
\caption {The choice of $(g(x), \lambda)$ in \eqref{eq_RR(d)csiszar} that leads to the abstract Shannon's lower bound in Theorem \ref{thm:slb}. } 
\label{tab:slb} 
\end{table}


\begin{thm}[the abstract Shannon lower bound]
 Fix a measure $\mu$ such that the distribution of $X$ is absolutely continuous with respect to $\mu$. 
 For all $d > d_{\min}$, 
 \begin{equation}
\mathbb R_X(d) \geq h_\mu(X) - \phi_\mu(d). \label{eq:slbgen}
\end{equation} 
\label{thm:slb}
\end{thm}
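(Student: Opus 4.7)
The plan is to apply Csisz\'ar's parametric representation (Theorem \ref{sc:thm:csiszarg}) by exhibiting the specific feasible pair $(g, \lambda)$ from Table \ref{tab:slb}: $g(x) = f_{X\|\mu}(x)\,\Sigma$ with $\Sigma = \sup_{y \in \mathcal Y} \int \exp(-\lambda \sd(x, y))\,d\mu(x)$ and $\lambda > 0$ arbitrary. Since \eqref{eq_RR(d)csiszar} expresses $\mathbb R_X(d)$ as a maximum, every feasible pair automatically produces a lower bound, so no optimization over $(g,\lambda)$ is required.

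First, I would verify feasibility. Substituting $g$ into \eqref{eq_csiszarg} and using $dP_X = f_{X\|\mu}\,d\mu$ to collapse the expectation into a $\mu$-integral yields
\begin{equation}
\E{\frac{\exp(-\lambda \sd(X, y))}{g(X)}} = \frac{1}{\Sigma}\int \exp(-\lambda \sd(x, y))\,d\mu(x) \leq 1
\end{equation}
for every $y \in \mathcal Y$, directly from the definition of $\Sigma$ as a supremum.

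Next, I would evaluate the objective. Splitting the logarithm,
\begin{equation}
-\E{\log g(X)} - \lambda d = -\E{\log f_{X\|\mu}(X)} - \log \Sigma - \lambda d = h_\mu(X) - \phi_\mu(d),
\end{equation}
since $-\E{\log f_{X\|\mu}(X)} = \E{\imath_\mu(X)} = h_\mu(X)$ by \eqref{eq:hmu} and $\phi_\mu(d) = \log \Sigma + \lambda d$. Inserting this into Theorem \ref{sc:thm:csiszarg} yields \eqref{eq:slbgen}.

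The main technical concern is handling degenerate cases: if $\Sigma = +\infty$ the bound is vacuous and trivially valid; if $h_\mu(X) = -\infty$ the bound is again trivial; and $f_{X\|\mu}(X) > 0$ holds $P_X$-a.s.\ by absolute continuity, so the division by $g(X)$ in the feasibility check is harmless. In every non-degenerate case the computation above is rigorous, and since $\lambda > 0$ is arbitrary, the sharpest form of the bound follows by infimizing over $\lambda$ in $\phi_\mu(d)$.
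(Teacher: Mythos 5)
Your proposal is correct and matches the paper's own argument: the paper proves Theorem \ref{thm:slb} precisely by noting that the pair $(g,\lambda)$ of Table \ref{tab:slb} satisfies the constraint \eqref{eq_csiszarg} and then reading off the resulting lower bound $-\E{\log g(X)}-\lambda d = h_\mu(X)-\phi_\mu(d)$ from Csisz\'ar's representation \eqref{eq_RR(d)csiszar}. Your feasibility check and evaluation of the objective (plus the remarks on degenerate cases) are exactly the verification the paper leaves implicit.
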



 Theorem \ref{thm:slb} provides a family of lower bounds parameterized by the choice of base measure $\mu$. In the classical Shannon lower bound, $\mu$ is a Lebesgue measure (or a counting measure, if the alphabet is discrete) and the distortion measure satisfies a symmetry condition, so that the integral in the definition of $\Sigma$ in Table \ref{tab:slb} does not depend on the choice of $y$. 
 Shannon's original derivation  \cite{shannon1959probability} applied to continuous sources under the mean-square error distortion, and it did not use a parametric representation of $\mathbb R_X(d)$.   A decade later, Pinkston \cite{pinkston1969application} derived a version of the bound for finite alphabet sources with a distortion measure such that all the columns of the per-letter distortion matrix $\{\sd(x,y)\}_{x, y}$ consist of the same set of entries. A generalization of the discrete Shannon lower bound to distortion measures not satisfying any symmetry conditions was put forth by Gray \cite{gray1971stationary}. The bound in \thmref{thm:slb} is more general than these results and recovers them as special cases. 
 
The right-side of \eqref{eq:slbgen} can be made equal to $\mathbb R_X(d)$ by choosing $\mu$ to satisfy: 
\begin{align}
\imath_\mu(x) 
 \label{eq:muopt} 
&=  \jmath_X(x, d). 
\end{align}
To verify that the choice of $\mu$ in \eqref{eq:muopt} results in equality in \eqref{eq:slbgen}, observe using \eqref{eq:hmu} that 
\begin{equation}
 h_\mu(X) = \E{\jmath_X(X, d)},
\end{equation}
and that
\begin{align}
 \phi_\mu(d) &= \log \Sigma + \lambda d\\
 &= \sup_{y \in \mathcal Y} \log \E{ \exp\left(  - \lambda \mathsf d(X, y) + \jmath_{X}(X, d) \right) } + \lambda d\\
 &= 0, \label{eq:phiopt}
\end{align}
where to obtain \eqref{eq:phiopt} we used \eqref{eq_csiszarg}, \eqref{eq:jg} and \remref{rem:1}.

The long-standing appeal of the Shannon lower bound is that one can obtain a tight bound on the rate-distortion function even without the knowledge of the distribution that attains it, as \eqref{eq:muopt} demands. For an illustration of such a calculation, suppose that $\mathcal X$ is a set endowed with a group operation, $``+''$, satisfying the group axioms. Then, it makes sense to talk about $x + y$ and $x - y = x + (-y)$, where $-y$ is the inverse of $y$  (according to the group operation). 
Distortion measures of the form
\begin{equation}
\sd(x, y)  = \sd( x - y ) \label{eq:difference}
\end{equation}
are called {\it difference} distortion measures. If $\mathcal X = \mathbb R^n$ and $\sd$ is a difference distortion measure, 
then letting $\mu$ be the Lebesgue measure, we obtain
\begin{equation}
\Sigma = \int \exp(-\lambda \sd(x - y) ) dx, 
\end{equation}
regardless of the choice of $y$. So, we may set $y = 0$, and obtain the classical Shannon lower bound:
\begin{align}
\mathbb R_X(d) &\geq
\ushort{\mathbb R}_X(d) 
= h(X) - \phi(d) \label{eq:uR}
\end{align}
  -  see Table \ref{tab:slbdiff} for the definition of $\phi(d)$. 
In the same fashion, if $\mathcal X$ is a discrete group, letting $\mu$ be the counting measure on $\mathcal X$, we notice that
\begin{equation}
\Sigma = \sum_{x \in \mathcal X} \exp(-\lambda \sd(x - y) ), 
\end{equation}
for all $y \in \mathcal X$. Therefore, we may let $y = 0$ (the identity element of group $\mathcal X$) and obtain Pinkston's variant of the Shannon lower bound \cite{pinkston1969application}.  See Table \ref{tab:slbdiff}.

Throughout the paper, we denote by 
\begin{align}
\ushort {\mathbb R}(d) = \E{\ushort{\jmath}_{X}(X, d)} \label{eq:Rbar}
\end{align}
 the classical Shannon's lower bound, obtained with the choice of the auxiliary measure $\mu$ in Table \ref{tab:slbdiff}. The random variable $\ushort{\jmath}_{X}(X, d)$ is also defined in Table \ref{tab:slbdiff}. 


\begin {table}
\centering
\resizebox{1 \linewidth}{!}{%
\begin{tabu}{  c |  c }
$\mathcal X$ is discrete group, $X \in \mathcal X$ & $X \in \mathbb R^n$ is continuous \\
\hline 

$\Sigma =  \sum_{z \in \mathcal X} \exp(-\lambda \sd(z) )$
&
$
\Sigma =  \int_{\mathbb R^n} \exp(-\lambda \sd(z) ) dz
$\\

\hline
  $
  \begin{aligned}
  P_{Z_\lambda}(z) =   \frac{ \exp\left(-\lambda \sd(z)\right)}{\sum_{z \in \mathcal X} \exp\left(-\lambda \sd(z)\right)}
  \end{aligned} 
  $ & $f_{Z_\lambda}(z) = \dfrac{\exp(-\lambda \mathsf d(z))}{ \int_{\mathbb R^n} \exp(-\lambda \mathsf d(z)) dz}$  \\
  \hline 
  
  $  \begin{aligned} \phi(d) &= H(Z_{ \lambda}) \\
&= \log \sum_{z \in \mathcal X} \exp\left( \lambda d - \lambda \sd(z) \right)\\
 \end{aligned}$
 &
  $\begin{aligned} \phi(d) &=  h(Z_{ \lambda})\\& = \log \int_{\mathbb R^n} \exp\left(   \lambda d -  \lambda \sd(z) \right) dz \end{aligned}$ 
\\
\hline
$g(x) = P_X(x)  \exp\left( \phi(d) - \lambda d  \right)$ & $g(x) = f_X(x)  \exp\left( \phi(d) - \lambda d  \right)$\\
\hline
$
\ushort{\jmath}_{X}(x, d) \triangleq  \log \dfrac 1 {P_X(x)} - \phi(d)
$
&
$
\ushort{\jmath}_{X}(x, d) \triangleq   \log \dfrac 1 {f_X(x)} - \phi(d)
$\\
\hline
\multicolumn{2}{c}{$\lambda > 0$: solution  to equation $\E{\sd(Z_{ \lambda} )} = d$} \\
\hline
\end{tabu}
}
\caption {In the case where $\sd$ is difference distortion measure, the classical Shannon lower bound is obtained by letting the base measure $\mu$ in Table \ref{tab:slb} to be the counting measure if $\mathcal X$ is a discrete group, and the Lebesgue measure if $\mathcal X = \mathbb R^n$. 
} 
\label{tab:slbdiff} 
\end{table}

The calculation in Table \ref{tab:slbdiff} tacitly assumes that there exists solution $\lambda > 0$ to 
\begin{equation}
 \E{Z_\lambda} = d. \label{eq:lambda}
\end{equation}
An important question is under what conditions this solution exists.  
For continuous $X$, Linkov \cite[Lemma 1]{linkov1965evaluation} showed that \eqref{eq:lambda}
has a unique solution for all sufficiently small $d$, as long as $\mathsf d(\cdot)$  satisfies the following mild regularity conditions: 
\begin{eqenum}
 \item $\sd(r) = 0$ only at $r = 0$, and $\sd(r)$ is nondecreasing. \label{lin:a}
 \item There exists such $\nu > 0$ that $\lim_{r \to 0} r^{-\nu} \sd(r) < \infty$.
 \item $\int_{\mathbb R_+} \sd^2(r) \exp(- \sd(r)) dr < \infty$. \label{lin:c}
\end{eqenum}

For discrete sources, we show that if
\begin{equation}
 \sd(0) = 0, \quad \sd(z) > 0, z \neq 0. \label{eq:d00}
\end{equation}
and $|\mathcal X| = m$, then \eqref{eq:lambda} has a solution for all $ d \in (0, \E{\sd(Z_0)}]$. Indeed, 
observe using \eqref{eq:d00} that 
\begin{align}
\E{\sd(Z_0)} &= \frac 1 m \sum_{z \in \mathcal X} \sd(z),\\
\lim_{\lambda \to + \infty} \E{\sd(Z_\lambda)} &= 0. 
\end{align}
Since $\E{\sd(Z_\lambda)}$ is continuous as a function of $\lambda$ on $[0, + \infty)$, it follows that \eqref{eq:lambda} has a solution for all $ d \in (0, \E{\sd(Z_0)}]$. \apxonl{What about uniqueness?}

We proceed to list several examples of the calculation of the Shannon lower bound for difference distortion measures. 
\begin{example}
In the special case of $X \in \mathbb R^n$ and mean-square error distortion, 
we recover the classical Shannon lower bound \cite{shannon1959probability} as follows. 
Let $\mathsf d$ be the mean-square error distortion:
\begin{equation}
\mathsf d(x, y) = \frac 1 n \|x - y\|^2_2.  \label{eq:mse}
\end{equation}
A straightforward calculation using Table \ref{tab:slbdiff} reveals that,
\begin{align}
\lambda &= \frac {n}{2d} \log e,\\
\phi(d) &=  n \log \sqrt{2 \pi e d},
\end{align}
so if $X$ is a continuous real-valued random vector of length $n$, 
\begin{equation}
\ushort{\mathbb R}_X(d) =  h(X)  - n \log \sqrt{ 2 \pi e d}   \label{slb2}.
\end{equation}
\end{example}

 \begin{example}
For weighted mean-square error distortion measure,  
 \begin{equation}
\mathsf d(x, y) =  \frac 1 n \|  \mathsf W(x - y) \|_2^2 \label{eq:dnormMSEW},
\end{equation}
where $\mathsf W$ is an invertible $n \times n$ matrix, Shannon's lower bound is given by
\begin{equation}
\ushort{\mathbb R}_X(d) = h(X)  -n \log \sqrt{2\pi e d} + \log |\det \mat W|. 
\end{equation}
\end{example}

\begin{example}
Let $\mathsf d$ be the scaled $L^p$ norm distortion:
 \begin{equation}
\mathsf d(x, y) = n^{- \frac s p} \| x - y \|^s_p \label{eq:dLps},
\end{equation}
where $s > 0$. 
A direct calculation using Table \ref{tab:slbdiff} shows that Shannon's lower bound  is given by 
\begin{align}
\ushort{\mathbb R}_X(d) &= h(X) + \frac n s \log \frac 1 d - \frac n p \log n  - \log b_{n, p} \notag \\
 & + \frac n s \log \frac{n }{ s e} - \log \Gamma\left( \frac n s + 1\right),
\end{align}
where $b_{n, p}$ is the volume of a unit $L_p$ ball: 
\begin{equation}
b_{n, p} \triangleq \frac{ \left( 2 \Gamma \left(\frac 1 p +1\right) \right)^n }{ \Gamma \left(\frac n p+1 \right) }. \label{eq:bnp}
\end{equation}
\end{example}

\begin{example}
Assume that the alphabet is finite, $|\mathcal X| = m$, and consider the symbol error distortion
\begin{equation}
\sd(z) = 1\{ z = 0\}. 
\end{equation}
Then, 
\begin{equation}
\ushort{\mathbb R}_X(d) =  H(X) - h(d) - d \log(m - 1). \label{eq:rfinite}
\end{equation}
\end{example}

\subsection{The nonasymptotic abstract Shannon lower bound}
As it turns out, the abstract Shannon lower bound in Theorem \ref{thm:slb} has a nonasymptotic kin expressed in terms of the Neyman-Pearson function. 

The optimal performance achievable among all randomized tests $P_{W|X}\colon A \rightarrow \left\{ 0, 1\right\}$ between measures $P$ and $Q$ on $A$ is denoted by ($1$ indicates that the test chooses $P$):
\begin{equation}
\label{eq_beta}
\beta_{\alpha}(P, Q) = \min_{\substack{P_{W|X}\colon \\ P[W = 1] \geq \alpha}}  Q \left[ W = 1\right].
\end{equation}
Note that the Neyman-Pearson function $\beta_{\alpha}(P, Q)$ is well defined even if $P$ and $Q$ are not probability measures. 

An $(M, d, \epsilon)$ fixed-length lossy compression code is a pair of mappings $\mathsf f\colon \mathcal X \mapsto \{1, \ldots, M\}$ and $\mathsf g \colon \mathcal X \mapsto \{1, \ldots, M\}$, such that 
\begin{equation}
\Prob{\sd(X, \mathsf g(\mathsf f (X) ) ) > d} \leq \epsilon. 
\end{equation}

\begin{thm}
\label{thm:C}
Let $P_X$ be the source distribution defined on the alphabet $\mathcal X$. Any $(M,d,\epsilon)$ code must satisfy, for any measure $\mu$ on $\mathcal X$:
\begin{enumerate}[a)]
\item  \begin{equation}
\label{eq:C}
M \geq  \frac {\beta_{1 - \epsilon}(P_X, \mu)}{\sup_{y \in \mathcal Y}  \mu \left[ \sd(X, y) \leq d\right]}.
\end{equation}

\item \begin{align}
\epsilon 
\geq&~  \sup_{\gamma > 0} \left\{ \Prob{ \imath_\mu(X) - \phi_\mu(d) \geq \log M + \gamma} - \exp(-\gamma)\right\}, \label{eq:Ca} 
\end{align}

\end{enumerate}
\end{thm}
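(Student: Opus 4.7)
The plan is to associate to any $(M,d,\epsilon)$ code a single subset $\mathcal S \subseteq \mathcal X$ that can be simultaneously used as (i) a feasible hypothesis test between $P_X$ and $\mu$, and (ii) a geometric object whose $\mu$-measure is dominated by $M$ distortion balls. Let $\mathsf f$ be the encoder and let $c_i \triangleq \mathsf g(i)$, $i = 1, \ldots, M$, be the codewords in the reproduction alphabet. Define the ``success region''
\begin{equation*}
\mathcal S \triangleq \bigl\{x \in \mathcal X : \sd(x, c_{\mathsf f(x)}) \leq d\bigr\},
\end{equation*}
so that $P_X[\mathcal S] \geq 1 - \epsilon$ by the definition of an $(M,d,\epsilon)$ code, and note that $\mathcal S \subseteq \bigcup_{i=1}^M B_i$, where $B_i \triangleq \{x : \sd(x, c_i) \leq d\}$.

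For part~(a), I would observe that $\mathbf 1_{\mathcal S}$ is a feasible test for $P_X$ vs.\ $\mu$ at level $1-\epsilon$, so $\beta_{1-\epsilon}(P_X, \mu) \leq \mu[\mathcal S]$. A union bound then yields $\mu[\mathcal S] \leq \sum_i \mu[B_i] \leq M \sup_{y \in \mathcal Y} \mu[\sd(X,y) \leq d]$, and rearranging gives \eqref{eq:C}.

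For part~(b), I would derive the inequality directly rather than invoke a generic $\beta$-to-information-density bound. Fix $\gamma > 0$ and split the success probability according to whether $\imath_\mu(X) - \phi_\mu(d)$ exceeds $\log M + \gamma$:
\begin{align*}
1 - \epsilon
&\leq P_X\bigl[\imath_\mu(X) - \phi_\mu(d) < \log M + \gamma\bigr] \\
&\quad + P_X\bigl[\mathcal S \cap \{\imath_\mu(X) - \phi_\mu(d) \geq \log M + \gamma\}\bigr].
\end{align*}
On the intersection event, $\tfrac{dP_X}{d\mu}(x) \leq M^{-1}\exp(-\gamma - \phi_\mu(d))$, so changing measure from $P_X$ to $\mu$ and then applying the union bound of part~(a) to $\mu[\mathcal S]$ produces
\begin{equation*}
P_X\bigl[\mathcal S \cap \{\ldots\}\bigr] \leq \frac{\sup_{y \in \mathcal Y} \mu[\sd(X,y) \leq d]}{\exp(\gamma + \phi_\mu(d))}.
\end{equation*}
Finally, a Chernoff step using the definition of $\phi_\mu(d)$ in Table~\ref{tab:slb}, namely $\mathbf 1\{\sd(x,y) \leq d\} \leq \exp(\lambda d - \lambda \sd(x,y))$ for the $\lambda$ associated with $\mu$, gives $\sup_y \mu[\sd(X,y) \leq d] \leq \exp(\phi_\mu(d))$, so the second term is at most $\exp(-\gamma)$. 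Rearranging and taking the supremum over $\gamma > 0$ yields \eqref{eq:Ca}.

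The crux that ties the two parts together is the Chernoff-type bound $\sup_y \mu[\sd(X,y) \leq d] \leq \exp(\phi_\mu(d))$: this is the only place the tilting parameter $\lambda$ underlying Theorem~\ref{thm:slb} enters, and it is precisely what converts the purely geometric volume estimate of part~(a) into the information-density estimate of part~(b). Everything else is the standard meta-converse splitting argument; the subtlety is merely ensuring that the union bound $\mu[\mathcal S] \leq M \exp(\phi_\mu(d))$ is applied \emph{after} the change of measure, so that the $\phi_\mu(d)$ terms inside and outside the exponent cancel cleanly.
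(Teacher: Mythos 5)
Your proof is correct and follows essentially the same route as the paper's: the paper obtains part~(a) by citing \cite[Theorem 8]{kostina2011fixed}, whose proof is exactly your meta-converse argument (success set $\mathcal S$ as a feasible test, plus union bound over the $M$ distortion balls), and derives part~(b) by combining (a) with the generic inequality $\zeta\,\beta_{1-\epsilon}(P_X,\mu) \geq \Prob{\imath_\mu(X)\geq -\log\zeta} - \epsilon$ and the same Markov/Chernoff step $\sup_y \mu[\sd(X,y)\leq d] \leq \exp(\phi_\mu(d))$, then reparameterizing $\zeta$. Your inlined change-of-measure derivation of (b) is just that $\beta$-inequality unpacked on the particular test $\mathbf 1_{\mathcal S}$, so the ingredients and the order of the key cancellation are identical.
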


\begin{proof}
The inequality in \eqref{eq:C} is due to \cite[Theorem 8]{kostina2011fixed}. To show \eqref{eq:Ca}, note that for all $\zeta > 0$ (e.g. \cite{polyanskiy2012notes}), 
\begin{align}
 \zeta \beta_{1 - \epsilon}(P_X, \mu) \geq \Prob{ \imath_\mu(X) \geq - \log \zeta} - \epsilon \label{eq:betalb}.
\end{align}
On the other hand, by Markov's inequality, the $\mu$-volume of the distortion $d$-ball is linked to $\phi_\mu(d)$ as follows. 
\begin{align}
 \mu \left[ d(X, y) \leq d\right] &=\int d\mu(x) 1\left\{ \sd(x, y) \leq d \right\}\\
 &\leq \int d\mu(x) \exp(\lambda d - \lambda \sd(x, y)) \\
 &\leq \sup_{y \in \mathcal Y} \int d\mu(x) \exp(\lambda d - \lambda \sd(x, y)) \label{eq:markov}\\
 &= \exp (\phi_\mu(d)). \label{eq:markova}
\end{align}

Applying \eqref{eq:betalb} and \eqref{eq:markova} to \eqref{eq:C}, we conclude that for all $\zeta > 0$,
\begin{equation}
\epsilon \geq \Prob{\imath_\mu(X) \geq - \log \zeta} -\zeta M \exp(\phi_\mu(d)) . \label{eq:C3}
\end{equation}
Re-parameterizing \eqref{eq:C3} through
\begin{align}
 \zeta = \frac{1}{M \exp(\phi_\mu(d) + \gamma)},
\end{align}
we obtain \eqref{eq:Ca}. 
\end{proof}

As clear from the proof of \thmref{thm:C}, the bound in \eqref{eq:Ca} is a weakening of the bound in \eqref{eq:C}. 

Note the striking parallels between \thmref{thm:C} and the abstract Shannon lower bound in \thmref{thm:slb}. Both bounds require a choice of the base measure $\mu$. The optimal binary hypothesis test in \eqref{eq:C} is a function of the log-likelihood ratio $\imath_\mu(X)$ only, whose expectation is equal to $h_\mu(X)$, the first term in \eqref{eq:slbgen}. Furthermore, the denominator in the right side of \eqref{eq:C} is linked to  $\phi_\mu(d)$,  the second term in \eqref{eq:slbgen}, through \eqref{eq:markova}. 

The similarities between between \thmref{thm:slb} and \thmref{thm:C} become even more apparent if we look at the bound in \eqref{eq:Ca}. In a typical usage of \eqref{eq:Ca}, $\gamma$ is chosen so that  its contribution to the both terms in \eqref{eq:Ca} is negligible, and thus the excess-distortion probability is bounded through the distribution of $\imath_\mu(X) - \phi_\mu(d)$ as
\begin{equation}
 \epsilon \gtrapprox  \Prob{ \imath_\mu(X) - \phi_\mu(d) \geq \log M}. \label{eq:Capprox}
\end{equation}

As discussed above, the abstract Shannon lower bound in \thmref{thm:slb} attains its largest value, the rate-distortion function, with the choice of $\mu$ in \eqref{eq:muopt}. The same choice of $\mu$ in \eqref{eq:C} and \eqref{eq:Ca} leads to
 \begin{align}
M &\geq \frac {\beta_{1 - \epsilon}(P_X, P_X \exp(\jmath_X(X, d)))}{\sup_{y \in \mathcal Y} \E{\exp(\jmath_X(X, d)) \1{\sd(X, y) \leq d}} } \label{eq:Cj},\\
\epsilon &\geq \sup_{\gamma > 0} \left\{ \Prob{ \jmath_X(X, d) \geq \log M + \gamma} - \exp(-\gamma)\right\} \label{eq:Caj}.
\end{align}
The bound in \eqref{eq:Caj} is just \cite[Theorem 7]{kostina2011fixed}. This bound is first- and second-order optimal, that is, for memoryless sources and separable distortion measures the converse part of the result in \eqref{block2order} can be recovered using \eqref{eq:Caj}. The bound in \eqref{eq:Cj}, which is new, is always better than \eqref{eq:Caj}, as the proof of \thmref{thm:C} shows. In the special cases of a binary source with Hamming distortion and the Gaussian source with mean-square distortion,  \eqref{eq:Cj} reduces to the bounds in \cite[Theorem 20]{kostina2011fixed} and \cite[Theorem 36]{kostina2011fixed}. A bound that is numerically tighter than \cite[Theorem 20]{kostina2011fixed} and \cite[Theorem 36]{kostina2011fixed} in some cases was recently proposed by Palzer and Timo \cite{palzer2016converse}. The bound in \cite{palzer2016converse} involves an optimization over an auxiliary scalar, while \eqref{eq:Cj} provides the tightest known general converse bound to date that does not require an optimization over auxiliary parameters. 

\detail{
Palzer and Timo \cite{palzer2016converse} propose
\begin{equation}
d\mu(x)  =  \1{\jmath_X(x, d) \geq \zeta} dP_X(x)
\end{equation}
}

\subsection{The necessary and sufficient condition}
\label{sec:condition}
The following result pins down the necessary and sufficient conditions for equality in \eqref{eq:slbgen} to hold. 
\begin{thm}
Assume that the infimum in \eqref{eq:RR(d)} is achieved by some $P_{Y^\star|X}$. Then, 
the following statements are equivalent. 
\begin{enumerate}[A.]
 \item The rate-distortion function is equal to Shannon's lower bound,
\begin{equation}
\mathbb R_X(d) =  h_\mu(x) - \phi_\mu(d). \label{eq:lbeq}
\end{equation}
\label{item:A}
\item For $P_X$-a.s. $x$, 
\begin{equation}
\jmath_X(x, d) =  \imath_\mu (x) - \phi_\mu(d). \label{eq:jlbeq}
\end{equation}
\label{item:B}
\item The backward conditional distribution\footnote{That is, $P_{X|Y^\star}$ such that $P_X P_{Y^\star|X} = P_{X | Y^\star} P_X$.} that achieves $\mathbb R_X(d)$  satisfies, for $P_{Y^\star}$-a.s. $y$, 
\begin{equation}
 \frac{d P_{X|Y^\star = y}}{d \mu}(x) = \frac{\exp(- \lambda \sd(x, y))}{\Sigma}.
\end{equation}
\label{item:back}

\end{enumerate}
\label{thm:lbeq}
\end{thm}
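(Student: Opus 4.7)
The plan is to close the cycle $\mathrm{A} \Rightarrow \mathrm{C} \Rightarrow \mathrm{B} \Rightarrow \mathrm{A}$. The direction $\mathrm{A}\Rightarrow\mathrm{C}$ refines the proof of \thmref{thm:slb} by tracing the equality conditions in a Gibbs (Donsker--Varadhan) inequality for $h_\mu(X\mid Y^\star)$; $\mathrm{C}\Rightarrow\mathrm{B}$ reduces to a short marginalization substituted into the definition \eqref{eq_idball} of $\jmath_X$; and $\mathrm{B}\Rightarrow\mathrm{A}$ is immediate by taking $P_X$-expectations and invoking \eqref{eq:Ej}.

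For $\mathrm{A}\Rightarrow\mathrm{C}$: write $\mathbb R_X(d) = I(X;Y^\star) = h_\mu(X) - h_\mu(X\mid Y^\star)$, so that A forces $h_\mu(X\mid Y^\star) = \phi_\mu(d)$. On the other hand, the Gibbs inequality yields, for every $y$,
\begin{equation}
 h_\mu(X\mid Y^\star = y) \leq \log\!\int\! e^{-\lambda\sd(x,y)}\,d\mu(x) \;+\; \lambda\!\int\!\sd(x,y)\,dP_{X\mid Y^\star=y}(x),
\end{equation}
with equality precisely when $P_{X\mid Y^\star=y}$ is proportional to $e^{-\lambda\sd(x,y)}\,\mu$. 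Upper-bounding the first integral by $\Sigma$, averaging over $P_{Y^\star}$, and invoking the distortion constraint $\E{\sd(X,Y^\star)}\leq d$ reproduces the bound $h_\mu(X\mid Y^\star)\leq\phi_\mu(d)$. The equality enforced by A propagates through the chain and simultaneously pins down the Gibbs form of $P_{X\mid Y^\star=y}$---which is C---the identity $\int e^{-\lambda\sd(x,y)}\,d\mu(x)=\Sigma$ for $P_{Y^\star}$-a.s.\ $y$ (consistent with \remref{rem:1}), and the tightness $\E{\sd(X,Y^\star)}=d$.

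For $\mathrm{C}\Rightarrow\mathrm{B}$: marginalizing the Radon--Nikodym identity of C against $P_{Y^\star}$ gives
\begin{equation}
 f_{X\|\mu}(x) \;=\; \int \frac{dP_{X\mid Y^\star=y}}{d\mu}(x)\,dP_{Y^\star}(y) \;=\; \frac{1}{\Sigma}\int e^{-\lambda\sd(x,y)}\,dP_{Y^\star}(y) \;=\; \frac{1}{\Sigma}\,\E{e^{-\lambda\sd(x,Y^\star)}}.
\end{equation}
The $\lambda$ of Table~\ref{tab:slb} at which equality in A holds must coincide with $\lambda^\star = -\mathbb R^\prime_X(d)$, as follows by differentiating $\mathbb R_X(d) = h_\mu(X) - \log\Sigma(\lambda) - \lambda d$ in $d$ and invoking the envelope theorem at the optimal $\lambda$. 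Substituting the displayed identity into \eqref{eq_idball},
\begin{equation}
 \jmath_X(x,d) \;=\; -\log\E{e^{-\lambda^\star\sd(x,Y^\star)}} - \lambda^\star d \;=\; -\log\!\left(\Sigma\,f_{X\|\mu}(x)\right) - \lambda d \;=\; \imath_\mu(x) - \phi_\mu(d),
\end{equation}
which is B.

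For $\mathrm{B}\Rightarrow\mathrm{A}$, taking $P_X$-expectations in \eqref{eq:jlbeq} and using $\mathbb R_X(d) = \E{\jmath_X(X,d)}$ from \eqref{eq:Ej} at once yields $\mathbb R_X(d) = h_\mu(X) - \phi_\mu(d)$. The delicate ingredient is the equality-tracing step in $\mathrm{A}\Rightarrow\mathrm{C}$: one must confirm that the pointwise-in-$y$ Gibbs equality condition propagates to a $P_{Y^\star}$-a.s.\ statement (a Fubini-type argument, since it is the integrated Gibbs inequality that is tight) and, separately, identify $\lambda$ with $\lambda^\star$ throughout, which as noted follows from the envelope theorem applied to the optimal $\lambda$ in the parametric representation of $\phi_\mu(d)$ in Table~\ref{tab:slb}.
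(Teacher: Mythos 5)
Your route is genuinely different from the paper's: the paper proves A $\Rightarrow$ B by appealing to Csisz\'ar's parametric representation (existence of the achiever implies differentiability of $\mathbb R_X(d)$, hence uniqueness of the maximizing $g$ in \eqref{eq_RR(d)csiszar}, so equality in A forces the Table~\ref{tab:slb} choice of $g$ and thus \eqref{eq:jlbeq}), and then obtains B $\Leftrightarrow$ C in one line by comparing \eqref{eq:jlbeq} with the representation \eqref{eq:jdensity}. Your Gibbs-inequality tracing of equality in $h_\mu(X\mid Y^\star)\le\phi_\mu(d)$ is instead a direct backward-channel argument in the spirit of Berger's Theorem 4.3.1, which the paper cites as a predecessor of this result. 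The legs A $\Rightarrow$ C and B $\Rightarrow$ A are essentially fine (for the decomposition $I(X;Y^\star)=h_\mu(X)-h_\mu(X\mid Y^\star)$ one should note that finiteness of $I(X;Y^\star)$ gives $P_{X\mid Y^\star=y}\ll P_X\ll\mu$ for $P_{Y^\star}$-a.s.\ $y$, and the equality-tracing, including $\E{\sd(X,Y^\star)}=d$ and $\int e^{-\lambda\sd(x,y)}d\mu(x)=\Sigma$ a.s., is legitimate since $\lambda>0$).

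The genuine gap is in your C $\Rightarrow$ B step: to identify $\lambda$ with $\lambda^\star$ you differentiate the equality of statement A in $d$, i.e., you invoke A inside the leg that is supposed to use only C. What your argument actually establishes is A $\Rightarrow$ C, (A and C) $\Rightarrow$ B, and B $\Rightarrow$ A; this gives A $\Leftrightarrow$ B and A $\Rightarrow$ C, but no implication out of C, so the three-way equivalence does not close. The repair is to first prove C $\Rightarrow$ A without assuming A: under C the Gibbs chain collapses to $I(X;Y^\star)=h_\mu(X)-\log\Sigma-\lambda\,\E{\sd(X,Y^\star)}$, and it remains to argue $\E{\sd(X,Y^\star)}=d$; this holds because an achiever of \eqref{eq:RR(d)} must saturate the distortion constraint whenever $\mathbb R_X$ is strictly decreasing at $d$ (otherwise $\mathbb R_X\bigl(\E{\sd(X,Y^\star)}\bigr)\le I(X;Y^\star)=\mathbb R_X(d)$ contradicts strict monotonicity). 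With C $\Rightarrow$ A in hand, your envelope-theorem identification of $\lambda$ with $\lambda^\star$ becomes available and the marginalization plus substitution into \eqref{eq_idball} yields B exactly as you wrote; this is the same implicit identification that underlies the paper's comparison of \eqref{eq:jlbeq} with \eqref{eq:jdensity}, so making it explicit is welcome --- it just cannot be done inside the C $\Rightarrow$ B leg of the cycle.
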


\begin{proof} 

\ref{item:B} $\Rightarrow$ \ref{item:A} is trivial. 
To show \ref{item:A} $\Rightarrow$ \ref{item:B}, note that the existence of $P_{Y^\star|X}$ that achieves the infimum in \eqref{eq:RR(d)} implies differentiability of $\mathbb R_X(d)$ \cite{csiszar1974extremum}. It follows that the maximum in \eqref{eq_RR(d)csiszar} is attained by a unique $g(x)$ \cite{csiszar1974extremum}. Since \eqref{eq:lbeq} establishes that $g(x)$ that attains the maximum in \eqref{eq_RR(d)csiszar} is that in Table \ref{tab:slb}, \eqref{eq:jlbeq} is immediate.

To show \ref{item:B} $\Leftrightarrow$ \ref{item:back}, recall the following equivalent representation of $\jmath_X(x, d)$ \cite{kostina2011fixed}:
\begin{equation}
 \jmath_X(x, d) = \log \frac{dP_{X|Y^\star = y}}{dP_{X}}(x) + \lambda \sd(x, y ) - \lambda d \label{eq:jdensity}.
\end{equation}
Equality in \eqref{eq:jdensity} holds for $P_{Y^\star}$-a.s. $y$. Comparing \eqref{eq:jlbeq} and \eqref{eq:jdensity} we conclude the equivalence
\ref{item:B} $\Leftrightarrow$ \ref{item:back}. 

\end{proof}

The necessary and sufficient conditions in Theorem \ref{thm:lbeq} assume a particularly simple form for difference distortion measures and the choice of $\mu$ as in Table \ref{tab:slbdiff}. In that case, clause \ref{item:back} can be replaced by
\begin{enumerate}[A.]
\item[C$^\prime$.]  There exists a random variable $Y^\star$ such that 
\begin{align}
X = Y^\star + Z_\lambda,  \label{eq:yza}
\end{align}
where $Y^\star$ is independent of $Z_\lambda$, and $Z_\lambda$ is defined in Table \ref{tab:slbdiff}.
\end{enumerate}

\begin{example}
 If $X$ is equiprobable on a finite group, \eqref{eq:lbeq} always holds. Indeed, in that case, equiprobable $Y^\star$ satisfies \eqref{eq:yza}. 
\end{example}

\begin{example}
For a binary $X$ with bias $p$ under Hamming distortion measure, $Z_\lambda$ is binary with bias $d$, and \eqref{eq:lbeq} is satisfied for all $0 \leq d \leq p$ by $Y^\star$ with bias $q$, where $q (1 - d)+  (1 - q) d = p$. 
\end{example}

\begin{example}
 Gaussian source with mean-square error distortion satisfies the conditions of Theorem \ref{thm:lbeq}; indeed, $X = Y^\star + Z_\lambda$, where $X \sim \mathcal N(0, \sigma^2\, \mathbf I )$, $Y^\star \sim  \mathcal N(0, (\sigma^2 - d) \, \mathbf I)\pperp  Z_\lambda \sim \mathcal N(0, d\, \mathbf I)$.
\end{example}

Theorem \ref{thm:lbeq} extends a result by Gerrish and Schultheiss \cite{gerrish1964nongaussian}, who showed that for the compression of a continuous random vector under the mean-square error distortion, the Shannon lower bound 
 gives the actual value of rate-distortion function if and only if $X$ can be written as the
sum of two independent random vectors $X= Y^\star+Z$,  where $Z \sim \mathcal N(0, d \, \mathbf I )$. Theorem \ref{thm:lbeq} also generalizes the backward-channel
condition for equality in the Shannon lower bound given in \cite[Theorem 4.3.1]{berger1971rate}. Unlike these classical results, Theorem \ref{thm:lbeq} applies to abstract sources and does not enforce any symmetry assumptions on the distortion measure.  

\todo{Corollary for stationary ergodic sources. Even for sources with memory, due to separability of distortion measure, $X_i = Y_i + Z_\lambda$ is necessary and sufficient. }

Most continuous probability distributions do not meet the condition in \eqref{eq:yza}.  In particular, an $X$ with indecomposable distribution cannot satisfy \eqref{eq:yza}, for any difference distortion measure.  In contrast, as the following result shows, for finite alphabet sources the classical Shannon lower bound (i.e. taking $\mu$ be the counting measure) is always attained with equality, as long as target distortion $d$ is not too large.  

\begin{thm}[Pinkston \cite{pinkston1969application}]
Let $X \in \mathcal X$, where $\mathcal X$ is a finite alphabet.  Let the distortion measure $\mathsf d \colon \mathcal X \mapsto \mathcal X$ be such that $\sd(x, x) = 0$,  $\sd(x, y) > 0$ for all $x \neq y$, and all columns of the 
distortion matrix $\{\sd(x, y)\}_{x, y}$ consist of the same set of entries (balanced distortion measure). 
Then, 
there exists a $d_c > 0$ such that the classical Shannon lower bound is satisfied with equality for  
\begin{equation}
 0 \leq \forall d \leq d_c. \label{eq:d0}
\end{equation}
\label{thm:pinkston}
\end{thm}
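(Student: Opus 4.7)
The plan is to apply Theorem \ref{thm:lbeq}, specifically the backward-channel characterization in clause \ref{item:back}, with $\mu$ taken to be the counting measure on $\mathcal X$ and $\mathcal Y = \mathcal X$. Under this choice, equality in the classical Shannon lower bound holds for a given $d$ if and only if there exists a probability mass function $P_{Y^\star}$ on $\mathcal X$ satisfying the marginal consistency condition
\begin{equation*}
P_X(x) = \sum_{y \in \mathcal X} P_{Y^\star}(y)\, \frac{\exp(-\lambda \sd(x, y))}{\Sigma}, \quad \forall x \in \mathcal X,
\end{equation*}
where $\Sigma = \sum_{x' \in \mathcal X} \exp(-\lambda \sd(x',y))$ is independent of $y$ by the balanced-column hypothesis. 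Writing $K(x,y) \triangleq \exp(-\lambda \sd(x,y))$ and viewing $\mathbf p_X, \mathbf p_{Y^\star}$ as column vectors indexed by $\mathcal X$, this is the linear system $K \mathbf p_{Y^\star} = \Sigma\, \mathbf p_X$. The task thus reduces to exhibiting, for every $d$ sufficiently small, a solution with nonnegative entries; such a solution automatically sums to one, as follows by summing the consistency equation over $x \in \mathcal X$.

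Next I would study the limit $d \to 0$. From the discussion following \eqref{eq:lambda}, the equation $\E{\sd(Z_\lambda)} = d$ admits a positive solution $\lambda = \lambda(d)$ for every $d \in (0, \E{\sd(Z_0)}]$, and the monotonicity of $\E{\sd(Z_\lambda)}$ in $\lambda$---together with the fact that $\E{\sd(Z_\lambda)} > 0$ for any finite $\lambda$ since $\sd(z) > 0$ on $z \neq 0$---forces $\lambda(d) \to +\infty$ as $d \to 0$. In this limit, $K(x,x) = 1$ while $K(x,y) = \exp(-\lambda \sd(x,y)) \to 0$ for $x \neq y$, so $K \to I$ entrywise and $\Sigma \to 1$. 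Since the entries of $K$ and the scalar $\Sigma$ are real-analytic in $\lambda$, $\det K$ is real-analytic and tends to $1$; hence $K$ is invertible for all $\lambda$ exceeding some threshold $\lambda_0$, and on this range $\mathbf p_{Y^\star}(\lambda) \triangleq \Sigma(\lambda)\, K(\lambda)^{-1} \mathbf p_X$ depends continuously on $\lambda$ and converges to $\mathbf p_X$ as $\lambda \to \infty$.

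Restricting without loss of generality to $\mathrm{supp}(P_X)$ so that $P_X(y) > 0$ for every $y$, continuity now yields a $\lambda^\dagger > \lambda_0$ such that every entry of $\mathbf p_{Y^\star}(\lambda)$ is strictly positive for $\lambda \geq \lambda^\dagger$. Setting $d_c \triangleq \inf\{d > 0 \colon \lambda(d) \geq \lambda^\dagger\}$ delivers the claim for $d \in (0, d_c]$, while the degenerate case $d = 0$ follows separately from $\mathbb R_X(0) = H(X) = h_\mu(X) - \phi_\mu(0)$, since the mass of $Z_\lambda$ concentrates on $\{0\}$ as $\lambda \to \infty$ and therefore $\phi_\mu(d) = H(Z_\lambda) \to 0$. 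The main obstacle is certifying nonnegativity: invertibility of $K$ and existence of the solution $\mathbf p_{Y^\star}$ are immediate from perturbation around $K = I$, but one must crucially use the strict positivity of $\mathbf p_X$ on its support to ensure that the small perturbation $\mathbf p_{Y^\star}(\lambda) - \mathbf p_X$ does not drive any coordinate below zero.
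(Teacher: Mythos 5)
Your route is the classical one: the paper itself does not prove Theorem~\ref{thm:pinkston} (it is quoted from Pinkston \cite{pinkston1969application}), and your argument---solve $K\,\mathbf p_{Y^\star}=\Sigma\,\mathbf p_X$, observe $K\to I$ and $\Sigma \to 1$ as $\lambda\to\infty$, and get nonnegativity of the solution for small $d$ by continuity around $\mathbf p_X$---is exactly the standard proof, phrased through clause~\ref{item:back} of Theorem~\ref{thm:lbeq}. The skeleton is sound, but one step deserves to be spelled out: sufficiency is not literally clause~\ref{item:back}, which describes the backward channel of an optimizer you do not yet have. Once you have a nonnegative solution $P_{Y^\star}$, you should verify that the joint $P_{Y^\star}(y)\,\exp(-\lambda\sd(x,y))/\Sigma$ is feasible and meets the bound: by the balanced-column property, $\sum_x \sd(x,y)\exp(-\lambda \sd(x,y))/\Sigma$ is the same for every $y$ and equals $d$ by the choice of $\lambda=\lambda(d)$ solving the analogue of \eqref{eq:lambda}, whence $H(X|Y^\star)=\log\Sigma+\lambda d=\phi(d)$ and $I(X;Y^\star)=H(X)-\phi(d)$; combined with \eqref{eq:slbgen} this forces equality. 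This is a one-line computation, but it is where balancedness and the choice of $\lambda$ actually enter.

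Two corrections. First, your definition $d_c=\inf\{d>0\colon \lambda(d)\ge\lambda^\dagger\}$ is backwards: since $\lambda(d)$ grows as $d$ shrinks, that set is an interval of small $d$ and its infimum is $0$, which would make the claim vacuous; you want the supremum, i.e.\ $d_c=\E{\sd(Z_{\lambda^\dagger})}$, so that $\lambda(d)\ge\lambda^\dagger$ for all $0<d\le d_c$. Second, ``restricting without loss of generality to $\mathrm{supp}(P_X)$'' is not an innocuous reduction: deleting letters can destroy the balanced-column hypothesis and changes $\Sigma$ and $\phi(d)$, so the restricted problem is not the same problem. In fact, if some $P_X(x)=0$ while $x$ remains in the alphabet, the system $K\,\mathbf p_{Y^\star}=\Sigma\,\mathbf p_X$ has no nonnegative nonzero solution at all (every entry of $K$ is strictly positive), and the Shannon lower bound is not attained for any $d>0$---consistent with the example following the theorem, where $d_c=(m-1)\min_x P_X(x)$. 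So full support of $P_X$ on $\mathcal X$ must be treated as an implicit hypothesis of the statement, not something you can dispose of by a reduction. With these repairs your proof is complete and coincides with the argument the paper defers to.
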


\begin{example}
For symbol error distortion equality in \eqref{eq:rfinite} holds for all
\begin{equation}
 0 \leq d \leq (m-1) \min_{x \in \mathcal X} P_{X}(x).
\end{equation}
\end{example}
Difference distortion measures satisfying \eqref{eq:d00} are included in the assumption of \thmref{thm:pinkston}. 
Generalizations of Pinkston's result are found in the works of Gray \cite{gray1971stationary,gray1970information,gray1971markov}, who showed in \cite{gray1971stationary} that the rate-distortion function equals the Shannon lower bound in the range of small distortions for stationary ergodic finite alphabet sources, generalizing and simplifying the proofs of Gray's previous results in \cite{gray1970information} (binary Markov source with BER distortion and Gauss-Markov source) and \cite{gray1971markov} (finite state finite alphabet Markov sources).  

Leveraging the necessary and sufficient conditions in Theorem \ref{thm:lbeq}, we conclude that under the conditions of Theorem \ref{thm:pinkston},  the $\sd$-tilted information is given by \eqref{eq:jfinite}. Applying \eqref{eq:jfinite} to \eqref{block2order}, we conclude that for the compression of a discrete memoryless sources under a balanced distortion measure, the minimum achievable rate compatible with excess probability $\epsilon$ at distortion $d$ satisfies \eqref{eq:Rapproxf}
for all $d \leq d_c$.

As mentioned above, continuous sources rarely meet the classical Shannon lower bound with equality, and thus \eqref{eq:Rapproxf} does not hold in general.
 Nevertheless, as we will see, lattice quantization of continuous sources often approaches \eqref{eq:Rapproxf}. This striking phenomenon is the major focus of the remainder of the paper.  The next section introduces the topic by discussing lattice coverings of space.

\begin{apxonly}
 By the Lebesgue decomposition theorem, a probability distribution can be uniquely represented by a mixture ($p + q + r = 1$) 
\begin{equation}
 \mu = p \mu_d + q \mu_c + r \mu_s
\end{equation}
where $\mu_d$ is purely atomic, $\mu_c$ is continuous wrt to Lebesgue, and $\mu_s$ is singular with respect to Lebesgue. 
TODO: a version of Shannon's lower bound for mixed distributions? 
\end{apxonly}

\section{Nonasymptotic analysis of lattice quantization}
\label{sec:lattice}
\subsection{Lattices: definitions}
Let $\mathcal C$ be a non-degenerate lattice in $\mathbb R^n$: 
\begin{equation}
\mathcal C \triangleq \{ c = \mathsf G\,  i \colon i \in \mathbb Z^n \} \label{eq:Gi},
\end{equation}
where the $n \times n$ generator matrix $\mathsf G$ is non-singular. 

\apxonl{
Let $\mathcal V_{\mathcal C}$ be the fundamental region of lattice $\mathcal C$, i.e.
\begin{equation}
\mathcal V_{\mathcal C} \triangleq \left\{ x \in \mathbb R^n \colon \| x\| \leq \| x - c \| ~ \forall c \in \mathcal C \right\}.   
\end{equation}
}
The nearest-neighbor $\mathcal C$-quantizer is the mapping $q_{\mathcal C} \colon \mathbb R^n \mapsto \mathcal C$ defined by
\begin{equation}
q_{\mathcal C} (x) \triangleq \argmin_{c \in \mathcal C} \|x - c\| \label{eq:q},
\end{equation}
and the Voronoi cell $\mathcal V_{\mathcal C}(c)$ is the set of all points quantized to $c$:
\begin{equation}
\mathcal V_{\mathcal C}(c) \triangleq \left\{ x \in \mathbb R^n \colon q_{\mathcal C} (x) = c \right\} \label{eq:vcell}.
\end{equation}
The ties in \eqref{eq:q} are resolved so that the resulting Voronoi cells are congruent. 
\apxonl{The Voronoi cell $\mathcal V_{\mathcal C}(0)$ associated with the origin ($c = 0$) 
is called the {\it fundamental} cell of lattice $\mathcal C$.}
We denote by $V_{\mathcal C}$ the cell volume of lattice $\mathcal C$:
\begin{align}
 V_{\mathcal C} &\triangleq \mathrm{Vol} \left( \mathcal V_{\mathcal C}(0) \right)\\
 &= |\det \mat G|  \label{eq:G}. 
\end{align}

The radius of the Voronoi cell, i.e. the minimum radius of a ball containing $\cV_\cC(0)$, is called the {\it covering radius} of lattice $\mathcal C$:
\begin{equation}
r_{\cC} \triangleq \max_{x \in \mathbb R^n} \|x - q_{\mathcal C}(x)\| \label{eq:rcov}.
\end{equation}
{\it Covering efficiency} of a lattice $\cC$ is measured by the normalized ratio of the volume of the ball of radius $r_\cC$ to the volume of the Voronoi cell: 
\begin{align}
\rho_{\cC} &\triangleq 
 r_\cC \frac{b_n^{\frac 1 n}}{V_{\cC}^{\frac 1 n}}, \label{eq:rhoMSE}
\end{align}
where $b_n$ is the volume of a unit ball:
\begin{equation}
 b_n \triangleq \frac{\pi^{\frac n 2}}{\Gamma \left(\frac n 2 + 1\right) } \label{eq:bn}.
\end{equation}
By definition, 
\begin{equation}
 \rho_{\cC} \geq 1,
\end{equation}
and the closer $\rho_{\cC}$ is to $1$ the more sphere-like the Voronoi cells of $\cC$ are and the better lattice $\cC$ is for covering.

\begin{apxonly}
\emph{This does not generalize to non-Euclidean distortion measures:}  
The effective radius of lattice $\mathcal C$ is the radius of a ball having the same volume as the cell volume:
\begin{equation}
r_\cC^{\mathrm{eff}} \triangleq \left[ \frac {V_\cC}{b_n}\right]^{\frac 1 n}   \label{-eq:reff}
\end{equation}
where $b_n$ is the volume of a unit ball:
\begin{equation}
 b_n \triangleq \frac{\pi^{\frac n 2}}{\Gamma \left(\frac n 2 + 1\right) } 
\end{equation}

By definition, 
\begin{equation}
 r_\cC^{\mathrm{eff}} \leq r_{\cC}, \label{-eq:efco}
\end{equation} 

{\it Covering efficiency} of lattice $\cC$ is measured by the ratio 
\begin{equation}
\rho_{\cC} \triangleq \frac{r_{\cC}}{r_{\cC}^{\mathrm{eff}} } \label{-eq:rho}
\end{equation}
\end{apxonly}

\subsection{The distribution at the output of lattice quantizers}

The purpose of this subsection is to express the distribution at the output of lattice quantizers in terms of the distribution of the raw data $X$ and the sizes of quantization cells. 
We will characterize both the {\it information} at the output of the quantizer, that is, the random variable
\begin{equation}
\imath(q_\cC(X)) = \log \frac 1 {q_\cC(X)} \label{eq:qinfo},
\end{equation}
and the corresponding entropy, $H(q_\cC(X))$, given by the expectation of \eqref{eq:qinfo}. This characterization holds the key to studying the fundamental limits of data compression with lattices. 
Indeed, it is well known that $L^\star_S$, the minimum average length required to losslessly encode discrete random variable $S$ is bounded in terms of the entropy of $S$ as \cite{alon1994lower,wyner1972upper}
\begin{align}\label{eq:alon2}
		H(S) - \log_2 (H(S)+1) - \log_2 e &\le L^\star_S \\
		&\le H(S).\, \label{eq:wyner}
\end{align}
Likewise, in fixed-length almost lossless data compression, the probability of error $\epsilon$ is bounded in terms of the distribution of the information random variable as (e.g. \cite{verdu2009notes})
\begin{align}
\Prob{\imath(S) \geq \log M + \gamma } - \exp(-\gamma) &\leq
\epsilon  \label{eq:infoc}\\
&\leq \Prob{ \imath(S) > \log M } \label{eq:infoa},
\end{align}
where $M$ is the number of distinct values at the output of the compressor. 

For a lattice $\mathcal C \in \mathbb R^n$, denote the  $n$-dimensional random vector $X_{\mathcal C}$ by
\begin{equation}
X_{\cC} \triangleq  q_\cC(X) + U_\cC  \label{eq:Xlambda},
\end{equation}
where the random vector $U_\cC$ is uniform on $\cV_\cC(0)$. See Fig. \ref{fig:latticebins}. 
\begin{figure}[h]
\centering
\includegraphics[width=.7\linewidth]{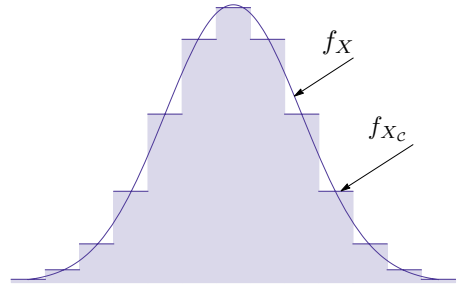}
\caption{Example: densities of $X$ and $X_\cC$ for $n = 1$. }
\label{fig:latticebins}
\end{figure}

It is easy to verify that the entropy at the output of the quantizer based on lattice $\cC$ can be expressed as

\begin{equation}
H\left( q_\cC(X) \right) = h(X_\cC) - \log {V_\cC}.  
\label{eq:renyig}
\end{equation}
If $h(X) > -\infty$, then
\begin{align}
D(X \| X_\cC) 
&= - h(X) +  h(X_\cC) \label{eq:ver},
\end{align}
and \eqref{eq:renyig} can be rewritten as
\begin{align}
H\left( q_\cC(X) \right) &= h(X) - \log V_\cC + D(X \| X_\cC) \label{eq:renyig1}.
\end{align}

If the distortion measure is the mean-square error, the maximum distortion is related to the covering radius as
\begin{equation}
 d = \frac {r_\cC^2} n, \label{eq:dr}
\end{equation}
so that
\begin{align}
 \log V_\cC 
 &= n \log  \sqrt {nd} + \log b_n - n \log \rho_{\cC}, \label{eq:Vd}
\end{align} 
and we can continue \eqref{eq:renyig1} as
\begin{align}
H\left( q_\cC(X) \right)
&= h(X) - n \log \sqrt{nd} - \log b_n 
\notag\\ &
+ n \log \rho_{\cC} +  D(X \| X_\cC).
\label{eq:Hqd} 
\end{align}
By Stirling's approximation, 
\begin{align}
\log b_n &= \frac{n} 2 \log \frac {2 \pi e} n - \frac 1 {2 } \log n + \bigo{1  }. \label{eq:Ana}
\end{align}
It follows that the difference between the first three terms in \eqref{eq:Hqd} and Shannon's lower bound is of order  $\frac 1 {2 } \log n + \bigo{1  }$, as $n \to \infty$. The term $D(X \| X_\cC)$ is the penalty due to $f_X$ not being completely flat. Intuitively, this term decreases as the quantization cells shrink, an effect we will explore in detail shortly (see \thmref{thm:entropyq} and \thmref{thm:hq} below). The term $n \log \rho_{\cC}$ is the penalty due to the lattice cells not being perfect spheres. To understand how large this term can be, note that the thinnest lattice covering in dimensions 1 to 5 is proven to be $A_n^*$  (Voronoi's principal lattice of the first type) \cite{conway2013sphere}, which has covering efficiency
\begin{equation}
 \rho_{A_n^*} = b_n^{\frac 1 n} (n + 1)^{\frac 1 {2n}} \sqrt{ \frac {n(n + 2)}{12 (n + 1)} },
\end{equation}
so for $n = 1, 2, \ldots, 5$, 
\begin{align}
H\left( q_{A_n^*}(X) \right)
&= h(X) - n \log \sqrt{nd}  
+  \frac n 2 \log \frac {n(n + 2)}{12 (n + 1)} 
\notag\\ &
 + \frac 1 2 \log (n + 1) +  D(X \| X_\cC).
\label{eq:HqAn} 
\end{align}
Actually, $A_n^*$ is the thinnest lattice covering known in all dimensions $n \leq 23$. But $A_{24}^*$ has covering efficiency $\approx 1.189$ and is inferior to the Leech lattice $\Lambda_{24}$, for which $\rho_{\Lambda_{24}} \approx 1.090$. 

\apxonl{
\begin{align}
\mathbb L_{X^n}(d)
&=  h(X^n) + \frac{n}{2} \log \frac 1 {d} 
+  \frac n 2 \log \frac {n(n + 2)}{12 (n + 1)}  + \frac 1 2 \log (n + 1)  + \smallo{1}. \label{}
\end{align}

Given a point in $\mathbb R^n$, the nearest lattice point of $A_n^\star$ can be computed in $\bigo{n \log n}$ steps \cite{vaughan1999algorithm}. 
}
More generally, the following result demonstrates the existence of covering-efficient lattices.

\begin{thm}[{Rogers \cite[Theorem 5.9]{rogers1964packing}}]
For each $n \geq 3$, there exists an $n$-dimensional lattice $\cC_n$  with covering efficiency
\begin{equation}
n \log \rho_{\cC_n} \leq \log_2 \sqrt{2 \pi e} \left( \log n + \log \log n + c\right), \label{eq:rogers}
\end{equation}
where $c$ is a constant.
\label{thm:rogers}
\end{thm}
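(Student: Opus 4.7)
The plan is to invoke the probabilistic method on a naturally chosen ensemble of unimodular lattices in $\mathbb{R}^n$, following Rogers' approach. Normalize the fundamental cell volume to $V_{\mathcal{C}} = 1$; then by the definition \eqref{eq:rhoMSE} of covering efficiency, the target bound $n \log \rho_{\mathcal{C}_n} \leq \log_2 \sqrt{2\pi e}\,(\log n + \log\log n + c)$ is equivalent to producing a unimodular lattice whose covering radius $r$ satisfies $\log(b_n r^n) \lesssim \log n + \log\log n$. A realization achieving this will be extracted from a suitable random lattice by a first-moment argument on the volume of the uncovered region.

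Concretely, I would draw a random lattice $\Lambda$ from $\mathrm{SL}_n(\mathbb{R})/\mathrm{SL}_n(\mathbb{Z})$ under its unique invariant Borel probability measure and bound the expected Lebesgue measure of the uncovered set
\[
\mathcal{U}_r(\Lambda) \triangleq \{x \in \mathcal{F} : \Lambda \cap B(x,r) = \emptyset\},
\]
where $\mathcal{F}$ is a fundamental domain of $\Lambda$. By Fubini and translation invariance of the Haar measure,
\[
\mathbb{E}[\mathrm{vol}(\mathcal{U}_r(\Lambda))] = \int_{\mathcal{F}} \Pr[\Lambda \cap B(x,r) = \emptyset]\, dx.
\]
Siegel's mean value theorem yields $\mathbb{E}[\#(\Lambda \cap B(x,r))] - 1 = b_n r^n$, and Rogers' refined moment formulas quantify the tendency of random lattice points to repel, producing a quantitative upper bound on $\Pr[\Lambda \cap B(x,r) = \emptyset]$ that decays in $b_n r^n$.

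Once $r$ is chosen large enough that $\mathbb{E}[\mathrm{vol}(\mathcal{U}_r(\Lambda))] < \mathrm{vol}(\mathcal{F})$, there exists a realization $\mathcal{C}_n$ with $\mathrm{vol}(\mathcal{U}_r(\mathcal{C}_n))$ small. A short metric argument then converts the uncovered-volume bound into a covering-radius bound: if the nearest lattice point of $\mathcal{C}_n$ to some $x$ is at distance $r + \epsilon$, then $B(x,\epsilon) \subset \mathcal{U}_r(\mathcal{C}_n)$ by the triangle inequality, so small uncovered volume forces small slack beyond $r$. Selecting $r$ so that $b_n r^n$ is a sufficiently large multiple of $\log n + \log\log n$ and then applying Stirling's approximation to $\log b_n$ (cf.\ \eqref{eq:Ana}) yields the claimed inequality.

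The main obstacle, and the actual technical content of Rogers' argument, is obtaining the additional $\log\log n$ factor beyond the leading $\log n$. A straightforward first-plus-second moment (Paley--Zygmund) application of Siegel's identity gives only $\Pr[\Lambda \cap B(x,r) = \emptyset] \lesssim 1/(b_n r^n)$, which by the scheme above forces $b_n r^n$ to grow like $n$ rather than $\log n$. Extracting the extra logarithmic saving requires Rogers' careful higher-moment estimates on $\#(\Lambda \cap B(x,r))$, which exploit the near-independence of widely separated lattice shells; I would quote those estimates from \cite{rogers1964packing} as a black box rather than redevelop them.
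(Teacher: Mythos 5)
The paper does not prove this theorem: it is imported verbatim as a citation to \cite[Theorem 5.9]{rogers1964packing}, so there is no in-paper proof to compare against. Evaluating your sketch on its own merits, it is not the argument Rogers actually uses. Rogers obtains his lattice covering bound by a deterministic, inductive construction that assembles a good lattice in dimension $n$ out of good lattices in lower dimensions, iterating roughly $\log n$ times (which is precisely where the $\log\log n$ term in the exponent originates). The random-unimodular-lattice / Siegel-mean-value scheme you describe belongs to a much later line of work (it underlies the Ordentlich--Regev--Weiss $O(n^2)$ lattice covering bound), and the decisive technical input there is a concentration inequality for $\#(\Lambda \cap B(x,r))$ far stronger than anything first or second moments supply.

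Independent of attribution, the sketch contains a quantitative slip that would render it false even as an outline. You correctly identify the target $\log(b_n r^n) \lesssim \log n + \log\log n$, but then propose to ``select $r$ so that $b_n r^n$ is a sufficiently large multiple of $\log n + \log\log n$.'' Since $b_n r^n = \rho_{\mathcal{C}}^n$ is the covering density when $V_{\mathcal{C}} = 1$, that choice would yield a lattice covering of density $O(\log n)$, which contradicts the Coxeter--Few--Rogers lower bound $\rho_{\mathcal{C}}^n = \Omega(n)$: you have conflated $\rho_{\mathcal{C}}^n$ with its logarithm. The ``short metric argument'' is also more delicate than advertised. To convert an uncovered-volume bound into a covering radius no larger than $(1+n^{-1})r$ (so that the density penalty is only a constant factor), one needs $\mathrm{Vol}(\mathcal{U}_r) \leq b_n(r/n)^n = \rho_{\mathcal{C}}^n\, n^{-n}$, i.e.\ the empty-ball probability must be of order $n^{-n}$. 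A Paley--Zygmund bound $\Pr[\Lambda \cap B = \emptyset] \lesssim 1/(b_n r^n)$ falls short by a factor exponential in $n\log n$, and citing ``Rogers' higher-moment estimates'' as a black box does not close this gap, since those mean-value formulas are the backbone of his \emph{packing} (Minkowski--Hlawka) results rather than of Theorem 5.9.
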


A natural question to ask next is the following: what is the minimum $H\left( q_\cC(X) \right)$ attainable among all lattice quantizers? For a distortion measure $\mathsf d \colon \mathbb R^n \times \mathbb R^n \mapsto \mathbb R^+$, denote the minimum entropy at the output of a lattice quantizer for the random vector $X \in \mathbb R^n$ by
\begin{equation}
\mathbb L_{X}(d) \triangleq  \inf_{\mathcal C \colon \mathsf d(X, q_\cC(X)) \leq d \text{ a.s.}} H( q_\cC(X)) \label{eq:Ld}.
\end{equation}
The definition in \eqref{eq:Ld} parallels the definition of $d$-entropy \cite{posner1967epsilonentropy}: \footnote{In literature, the distortion threshold is sometimes denoted by $\epsilon$ and the quantity in \eqref{eq:Hd} is referred to as the ``epsilon-entropy''. In this paper, $\epsilon$ is reserved for the excess distortion probability and $d$ for the distortion threshold.}
\begin{equation}
\mathbb H_{X}(d) \triangleq  \inf_{\substack{ q \colon \mathbb R^n \mapsto \mathbb R^n \\ \mathsf d(X, q(X)) \leq d \text{ a.s.}}} H( q(X)) \label{eq:Hd},
\end{equation}
with the distinction that in \eqref{eq:Hd} the infimization is performed over all mappings $q \colon \mathbb R^n \mapsto \mathbb R^n$ and not just lattice quantizers. For that reason, we call the function in \eqref{eq:Ld} {\it lattice $d$-entropy}. Note that 
\begin{equation}
 \mathbb R_{X}(d) \leq \mathbb H_{X}(d) \leq \mathbb L_{X}(d). \label{eq:RH}
\end{equation}

Using \eqref{eq:Hqd}, if $h(X) > -\infty$, we can characterize the lattice $d$-entropy under the mean-square error distortion \eqref{eq:mse} as, 
\begin{align}
\mathbb L_{X}(d)
&=  h(X) - n \log \sqrt{nd} - \log b_n 
\notag\\
&
+ \inf_{\mathcal C \colon r_\cC \leq \sqrt{nd}}  \left\{D(X \| X_\cC) + n \log \rho_{\cC} \right\}.
\label{eq:Ldchar}
\end{align}
Since the term inside the infimum in \eqref{eq:Ldchar} is nonnegative, $\mathbb L_{X}(d)$ is lower-bounded by the first three terms in \eqref{eq:Ldchar}. Applying \eqref{eq:Ana} we see that these three terms are within $\frac 1 {2 } \log n + \bigo{1  }$  information units from Shannon's lower bound, so 
\begin{equation}
 \mathbb L_{X}(d) \geq h(X) - n  \log  \sqrt {2 \pi e d} + \frac 1 {2 } \log n + \bigo{1  } \label{eq:Lnl}. 
\end{equation}
On the other hand, upper-bounding the infimum in \eqref{eq:Ldchar} by picking any lattice $\mathcal C$ that satisfies the Rogers condition \eqref{eq:rogers}, we obtain
\begin{equation}
 \mathbb L_{X}(d) \leq h(X) - n  \log  \sqrt {2 \pi e d}  + D(X \| X_\cC) + \bigo{\log n  } \label{eq:Lnu}. 
\end{equation}

 As we will see shortly, for small $d$, the term $D(X \| X_\cC)$ is also small. Intuitively, this means that at large $n$ and small $d$, good lattice quantizers are almost as good as the best optimal quantizer.

In the rest of this section, we explore the behavior of $h(X_\cC)$ and $D(X \| X_\cC)$. The next result, \thmref{thm:entropyq} below, concerns itself with the behavior of $h(X_\cC)$ in the limit of increasing point density, or vanishing cell volume.  As evident from \eqref{eq:G}, a scaling of $\mat G$ by $\frac{V^{\frac 1 n}}{|\det \mat G|^{\frac 1 n} }$ results in the lattice of cell volume $V$. Fixing $\mat G$ and considering lattices generated by $\frac{V^{\frac 1 n}}{|\det \mat G|^{\frac 1 n} } \mat G$, we obtain a continuum of lattices parameterized by $V$. We will be interested in the behavior of $h(X_\cC)$ as $V \to 0$. 
Clearly, as quantization cells become smaller, the distribution of $X_\cC$ becomes a better approximation of the distribution of $X$. \thmref{thm:entropyq} below, due to Csisz\'ar \cite{csiszar1971entropyquantization},  formalizes this intuition by underlining the connection between the entropy of $X_\cC$ and the differential entropy of $X$.

For vector $x^n \in \mathbb R^n$, $\lfloor x^n \rfloor$ denotes the vector of integer parts of its components, that is, $\lfloor x^n \rfloor = (\lfloor x_1 \rfloor, \ldots,  \lfloor x_n \rfloor)$.

\begin{thm}[Csisz\'ar \cite{csiszar1971entropyquantization}]
Let $X$ be a random variable and let $\mathcal C$ be a lattice in $\mathbb R^n$. Assume that 
\begin{equation}
H(\lfloor X \rfloor) < \infty. \label{eq:entropy}
\end{equation}
For any sequence of lattices with vanishing cell volume,
\begin{equation}
 \lim_{V_{\cC} \to 0} h(X_\cC) = h(X),  \label{eq:hXC}
\end{equation}
where $X_\cC$ is defined in \eqref{eq:Xlambda}.
\label{thm:entropyq}
\end{thm}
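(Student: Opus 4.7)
The plan is to reduce the claim to showing that the relative entropy $D(X\|X_\cC)$ tends to zero. For any lattice, the density of $X_\cC = q_\cC(X) + U_\cC$ is piecewise constant on the Voronoi partition: on cell $\cV_\cC(c)$ it equals $\mu_c/V_\cC$ where $\mu_c \triangleq \Prob{X \in \cV_\cC(c)}$. A short calculation then yields
\begin{equation}
\int f_X(x) \log f_{X_\cC}(x)\, dx \;=\; \sum_c \mu_c \log(\mu_c/V_\cC) \;=\; -h(X_\cC),
\end{equation}
so that whenever $h(X)$ is finite one recovers the identity $D(X \| X_\cC) = h(X_\cC) - h(X) \geq 0$ already noted in \eqref{eq:ver}. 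The hypothesis $H(\lfloor X\rfloor) < \infty$ pins down $h(X)$ as finite (the degenerate case $h(X)=-\infty$ can be dispatched by a truncation argument), because $H(\lfloor X\rfloor)$ equals $h(X^{\mathbb Z^n})$ for the unit-volume integer lattice and is therefore at least $h(X)$ by nonnegativity of divergence. Thus it suffices to prove $D(X \| X_\cC)\to 0$ as $V_\cC \to 0$.

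Next I would establish pointwise convergence. With respect to Lebesgue measure on $\mathbb R^n$, the function $f_{X_\cC}$ is precisely the conditional expectation of $f_X$ given the $\sigma$-algebra generated by the Voronoi cells of $\cC$. Provided the cells shrink to points in diameter---which holds for the lattice sequences of interest, where the covering radius $r_\cC$ vanishes together with $V_\cC$---Lebesgue's differentiation theorem gives $f_{X_\cC}(x) \to f_X(x)$ for almost every $x$, whence the integrand $f_X \log(f_X/f_{X_\cC})$ tends to $0$ a.e.

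The remaining and most delicate step is to produce an integrable envelope for $f_X \log(f_X/f_{X_\cC})$ so that dominated convergence can be invoked on $D(X\|X_\cC) = \int f_X \log(f_X/f_{X_\cC})\, dx$. This is where the hypothesis $H(\lfloor X\rfloor)<\infty$ enters. I would partition $\mathbb R^n$ into unit integer cubes $Q_k$ indexed by $k\in\mathbb Z^n$, let $p_k \triangleq \Prob{\lfloor X\rfloor = k}$, and apply Jensen's inequality to $-x\log x$ on each $Q_k$. This bounds $\int_{Q_k}\bigl|f_X \log(f_X/f_{X_\cC})\bigr|\, dx$ in terms of $p_k$, $\log p_k$, and a boundary correction contributed by the Voronoi cells straddling $\partial Q_k$; summing over $k$ yields a global envelope whose total mass is finite precisely because $H(\lfloor X\rfloor) = -\sum_k p_k \log p_k < \infty$.

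The principal obstacle is the unboundedness of the logarithm: the negative tail of $\log f_{X_\cC}$, arising on sub-regions of a cell where $f_X$ nearly vanishes while other parts of the same cell carry substantial mass, must be controlled simultaneously with the positive tail. A cleaner route that avoids some of the dominated-convergence bookkeeping is to first handle refining lattice sequences (for instance, dyadic rescalings of $\mathbb Z^n$), for which $f_{X_\cC}$ is a nonnegative $L^1$-martingale and $D(X\|X_\cC)$ is monotonically nonincreasing; the $L^1$-martingale convergence theorem together with the lower semicontinuity of relative entropy then delivers $D(X \| X_\cC) \to 0$ immediately. The statement for arbitrary lattice sequences follows by sandwiching between two sufficiently fine cubic refinements and exploiting monotonicity of $D$ under refinement.
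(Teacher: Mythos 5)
Your opening reduction is fine and matches the identity \eqref{eq:ver} already in the paper, but be aware that the paper itself never proves this theorem: it gives only the Jensen-inequality direction $h(X)\le h(X_\cC)$ and an elementary Riemann-sum argument for continuous, compactly supported densities, and defers the general statement to Csisz\'ar. Measured against what the general statement actually requires, your sketch has three genuine gaps. First, \eqref{eq:entropy} does \emph{not} pin down $h(X)$ as finite: via \eqref{eq:renyig} with unit cubes it gives only $h(X)\le H(\lfloor X\rfloor)<+\infty$, and the theorem (see the discussion following it) is asserted even when $h(X)=-\infty$, including when $X$ has no density at all. In that regime your reduction is unavailable ($D(X\|X_\cC)=+\infty$ when $P_X$ is not absolutely continuous, and \eqref{eq:ver} breaks), and ``dispatched by a truncation argument'' is not a proof; showing $h(X_\cC)\to-\infty$ there is a separate argument. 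Second, vanishing cell volume does not imply vanishing covering radius: the lattice generated by $\mathrm{diag}(\epsilon,\epsilon^{-1/2})$ in $\mathbb R^2$ has $V_\cC=\sqrt{\epsilon}\to 0$ while $r_\cC\to\infty$, and for $X$ uniform on $[0,1]^2$ one computes $h(X_\cC)=-\tfrac12\log\epsilon\to\infty$. So the geometric control you silently assume (cells shrinking, and shrinking \emph{regularly} so that Lebesgue differentiation applies) is not a harmless convention but a substantive hypothesis on the lattice sequence; moreover, even with vanishing diameter, Voronoi cells of an arbitrary lattice sequence can have unbounded eccentricity, and the differentiation theorem you invoke requires nicely (regularly) shrinking sets, so a.e.\ convergence of $f_{X_\cC}$ to $f_X$ does not follow for general $L^1$ densities.

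Third, and most seriously, your fallback route does not close the argument for the cases the first route misses. The martingale/monotonicity reasoning is sound only for \emph{nested} refining partitions (dyadic rescalings); the final ``sandwiching between two sufficiently fine cubic refinements'' has no justification, because a general lattice is not nested with any cubic lattice, and $D(X\|X_\cC)$ is not monotone in cell size across non-nested partitions --- the elongated-cell example above has arbitrarily small volume yet arbitrarily large divergence. The passage from nested cubic partitions to arbitrary partitions of vanishing cell size is precisely the content of Csisz\'ar's theorem, i.e.\ exactly the part the paper cites rather than proves, and your sketch does not supply it. Relatedly, the dominated-convergence envelope in your third step is asserted rather than constructed: the uniform integrability of the negative part of $f_X\log(f_X/f_{X_\cC})$ across cells straddling cube boundaries is the crux where \eqref{eq:entropy} must be used quantitatively, and acknowledging the obstacle is not the same as overcoming it.
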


\detail{
\begin{proof}
Note that $X_\cC$ has a density even if $X$ does not. The distribution of discrete random variable $q_\cC(X)$ satisfies, for any $c \in \mathcal C$,
\begin{align}
 \frac 1 {V_\cC} P_{q_\cC(X)}(c) &=  f_{X_\cC}( c - u_0), \qquad \forall u_0 \in \cV_\cC(0) \label{eq:q1}
 \end{align}
Taking logarithms on both sides of \eqref{eq:q1} and then an expectation with respect to $f_{X_\cC}$  reveals \eqref{eq:renyig}. 

To show \eqref{eq:hXC}, continue \eqref{eq:q1} as 
\begin{align}
f_{X_\cC}( c - u_0)  &= \E{ f_X(c - U_{\mathcal C}) }  \label{eq:q2} 
\end{align}
By Jensen's inequality
\begin{align}
 h(X) \leq h(X_\cC) \label{eq:jf}
\end{align}
If $h(X) = + \infty$, due to \eqref{eq:jf} there is nothing to prove.  For $h(X) < +\infty$, the validity of \eqref{eq:hXC} under assumption \eqref{eq:entropy} is shown in \cite{csiszar1971entropyquantization}.

\apxonl{If the sequence of lattices in \eqref{eq:hXC} is a sequence of nested lattices forming an increasing sequence of partitions, then using the same reasoning as in \eqref{eq:q2} and \eqref{eq:jf} and applying the monotone convergence theorem, we conclude that $h(X_\cC)$ monotonically decreases to $ h(X)$. In the more general case of non-nested lattices, the validity of \eqref{eq:hXC} under assumption \eqref{eq:entropy} is shown in \cite{csiszar1971entropyquantization}. } 

\apxonl{
CALCULATIONS:
\apxonl{
\begin{align}
H(q_\cC(X)) &= \sum_{c \in \cC} P_{q_{\cC}(X) } (c)  \log \frac 1  {P_{q_{\cC}(X) } (c)}\\
&= \sum_{c \in \cC} V_\cC f_{X_\cC}( c - u)   \log \frac 1  {V_\cC f_{X_\cC}( c - u)} \\
&= \sum_{c \in \cC} \int_{\mathcal V_\cC(0)} f_{X_\cC}( c - u)  \log \frac 1  {V_\cC f_{X_\cC}( c - u)} du\\
&= h(X_\cC) + \log \frac 1 {V_\cC} \label{eq:vera}
\end{align}
}
Using \eqref{eq:q2}, it is easy to verify that
\begin{align}
D(X \| X_\cC) 
&= - h(X) +  h(X_\cC) \label{eq:ver}
\end{align}
\apxonl{
\begin{align}
D(X \| X_\cC) &= 
 - h(X) - \sum_{c \in \cC} \int_{\mathcal V_\cC(0) } f_{X}(c - u) \log f_{X_\cC}(c - u) du
\\
&= 
 - h(X) - \sum_{c \in \cC} \log f_{X_\cC}(c - u) \int_{\mathcal V_\cC(0) } f_{X}(c - u)  du
\\
&= - h(X) -  \sum_{c \in \cC} \int_{\mathcal V_\cC(0) } f_{X_\cC}(c - u) \log f_{X_\cC}(c - u) du\\
&= - h(X) +  h(X_\cC) \label{eq:ver}\\
&= -h(X) + H(q_\cC(X)) - \log \frac 1 {V_\cC}
\end{align}
}
Combining \eqref{eq:vera} and \eqref{eq:ver}, we obtain \eqref{eq:renyi}. 
}

\end{proof}
}

\begin{apxonly}
Theorem \ref{thm:entropyq} is a generalization of the following: 
 in 
\cite[Theorem 2.30]{verduIT,renyi1959dimension}, it is shown that if $X_\Delta$ is a  $\Delta$-histogram of $X$, i.e.
\begin{equation}
X_\Delta = \left( \left \lfloor \frac X \Delta \right \rfloor + U \right) \Delta
\end{equation}
where $U$ is uniform on $(0,1)$, then
\begin{equation}
H\left( \left \lfloor \frac{X} \Delta \right \rfloor \right) = h(X) + \log \frac 1 {\Delta} + D(X \| X_\Delta)
\end{equation}
Moreover, as Renyi showed, 
\begin{equation}
 \lim_{\Delta \to 0} D(X \| X_\Delta) = 0. 
\end{equation}
\end{apxonly}

Theorem \ref{thm:entropyq} holds even if $X$ does not have a density; in that case, $h(X) = -\infty$. If $h(X) > -\infty$, using \eqref{eq:ver}, we can rewrite \eqref{eq:hXC} as
\begin{equation}
\lim_{V_{\cC} \to 0} D(X \| X_{\cC}) = 0. \label{eq:kl}
\end{equation}

 Theorem \ref{thm:entropyq} also holds for the more general case of non-lattice partitions of $\mathbb R^n$ into sets of equal volume.     

Assumption \eqref{eq:entropy} is needed to ensure that the tails of $f_X(x) \log f_X(x)$ are well behaved. If the probability density function $f_X$ is continuous and is supported on a compact set, then one can show that \eqref{eq:hXC} holds in the following elementary manner (cf. \cite[Sec. 8.3]{cover2012elements}).
Applying the mean value theorem to 
\begin{align}
f_{X_\cC}( c - u_0)  &= \E{ f_X(c - U_{\mathcal C}) }  \label{eq:q2}, 
\end{align}
 for each $c \in \mathcal C$ we note the existence of $u_c \in \cV_\cC(0)$ such that 
\begin{equation}
 f_X(c - u_c) = f_{X_c}(c - u_0), \qquad \forall u_0 \in \cV_\cC(0). \label{eq:uc}
\end{equation}
It follows that $h(X_\cC)$ is the Riemann sum for $f_X(x) \log \frac 1 { f_X(x)}$ and the partition generated by the Voronoi cells of $\mathcal C$ labeled by $c - u_c, c \in \mathcal C$, that is, 
\begin{align}
h(X_\cC) &=  \sum_{c \in \cC} V_\cC f_{X_\cC}(c) \log \frac 1 {f_{X_\cC}(c) }\\
&= \sum_{c \in \cC} V_\cC f_X(c - u_c) \log \frac 1 {f_{X}(c - u_c) }.
\end{align}
 Convergence to $h(X)$ follows by the definition of the Riemann integral.

  The sufficient condition for \eqref{eq:entropy} to hold is 
 \begin{align}
 \E{\log \left(1 + \frac 1 {\sqrt n }\|X\| \right)} < \infty, \label{eq:wu}
\end{align}
which in turn holds for any source vector with $\E{\|X\|^\alpha} < \infty$ for some $\alpha > 0$. This sufficient condition was proved by Wu and Verd\'u \cite[Proposition 1]{wu2010renyi} for scalar $X$; Koch \cite{koch2015shannonlb} noticed that it continues to hold in the vector case as well.

Prior to Csisz\'ar, the validity of \eqref{eq:hXC} under a more restrictive assumption was proved by R\'enyi \cite[Theorem 4]{renyi1959dimension}. 
 Csisz\'ar \cite{csiszar1971entropyquantization} showed the validity of \eqref{eq:hXC} under the following assumption.
 Suppose there exists some Borel measurable partition $\{\mathcal B_1, \mathcal B_2, \ldots\}$ of $\mathbb R^n$ into sets of finite Lebesgue measure such that the following two conditions are satisfied.
\begin{enumerate}
\item 
 \begin{equation}
\sum_{i} P_X(\mathcal B_i) \log \frac 1 {P_X(\mathcal B_i)} < \infty.
\end{equation}
\item 
There exist $\rho > 0$ and $s \in \mathbb N$ such that for all $k$, the distance between $\mathcal B_k$ and $\mathcal B_\ell$, $k \neq \ell$, is greater than $\rho$ for all but at most $s$ indices $\ell$. 
\end{enumerate}
Recently, Koch \cite{koch2015shannonlb} showed that the rate-distortion function is infinite for all $d>0$ if $H(\lfloor X \rfloor) = \infty$, as long as the difference distortion measure has form $\|\cdot\|^r$, where $\| \cdot \|$ is an arbitrary
norm and $r > 0$. Thus, the assumption \eqref{eq:entropy} is as general as Csisz\'ar's assumption in most cases of interest.

\apxonl{
\begin{thm}
 \begin{align}
\Var{\log {f_{X_\cC}(X_\cC) }} \leq  \Var{\log f_X(X)} + \frac{\log^2 e}{e} \E{ \left( 2 f_X(X) - e\right) 1\{ f_{X}(X) > e\} } \label{eq:vardiv}
\end{align}
\end{thm}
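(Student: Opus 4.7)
The plan is to exploit a coupling that reduces the variance on the left-hand side to a single expectation against $f_X$. Draw $X \sim f_X$ and an independent $U_\cC$ uniform on $\cV_\cC(0)$, and let $\tilde X := q_\cC(X) + U_\cC$; then $\tilde X$ has density $f_{X_\cC}$, and $\tilde X$ shares its Voronoi cell with $X$. Since \eqref{eq:q2} tells us $f_{X_\cC}$ is constant on each cell, $f_{X_\cC}(\tilde X) = f_{X_\cC}(X)$ pointwise, and therefore
\begin{equation*}
\Var{\log f_{X_\cC}(X_\cC)} = \Var{\log f_{X_\cC}(X)} \qquad (X \sim f_X).
\end{equation*}

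Separating variance into second moment minus square of mean, the mean shift works in our favor: $\E{\log f_{X_\cC}(X_\cC)} - \E{\log f_X(X)} = -D(X \| X_\cC) \leq 0$ by \eqref{eq:ver}, and as long as both means are nonpositive (for instance, whenever $h(X) \geq 0$), one has $(\E{\log f_{X_\cC}(X_\cC)})^2 \geq (\E{\log f_X(X)})^2$. It therefore suffices to bound the difference of the second moments,
\begin{equation*}
\E{(\log f_{X_\cC}(X_\cC))^2} - \E{(\log f_X(X))^2} = \int f_X(x)\bigl[\psi(f_{X_\cC}(x)) - \psi(f_X(x))\bigr]\,dx,
\end{equation*}
with $\psi(u) := (\log u)^2$, by the stated correction.

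To control the integrand I would exploit the shape of $\psi$: since $\psi''(u) = 2(1-\log u)/u^2$ is nonnegative on $(0,e]$ and nonpositive on $[e,\infty)$, $\psi$ is convex below $e$ and concave above. On every cell $c + \cV_\cC(0)$, $f_{X_\cC}(x) = \bar f_c$ is the uniform average of $f_X$ over the cell, so a cell-wise Jensen step handles the convex branch with a nonpositive contribution. On the concave branch, $\psi$ sits below its tangent line at $u = e$; anchoring the Taylor expansion at $u = e$ (where $\psi(e) = \log^2 e$ and $\psi'(e) = 2\log e / e$) gives $\psi(v) \leq \log^2 e + \frac{2 \log e}{e}(v - e)$ for $v \geq e$, and substituting $v = f_X(x)$ on the set $\{f_X > e\}$ produces, after integration against $f_X$, precisely $\frac{\log^2 e}{e} \E{(2 f_X(X) - e)\, 1\{f_X(X) > e\}}$.

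The main obstacle is splicing the two branches at $u = e$ uniformly across cells. A cell can straddle the threshold $\{f_X = e\}$, so the cell-wise Jensen-and-Taylor argument has to be done against a single global upper bound on $\psi(v) - \psi(u)$ valid for all $u, v > 0$ and tight on the concave branch. A secondary difficulty is that the convex-branch Jensen step naturally compares $\psi(\bar f_c)$ with the uniform average $\frac{1}{V_{\mathcal C}}\int_c \psi(f_X)\,dx$ over the cell, whereas our integrand weights $\psi(f_X)$ by $f_X$; reconciling the two weightings, for instance via the Cauchy--Schwarz consequence $\bar f_c \leq \E_{\mu_c}[f_X]$ where $\mu_c$ is the cell-conditional distribution of $X$, is what will determine whether the stated correction suffices without further regularity hypotheses on $f_X$.
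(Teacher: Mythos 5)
You have reconstructed essentially the paper's own route: reduce the variance to a second-moment comparison via $h(X_\cC)\ge h(X)$ (the paper writes $\Var{\log f_{X_\cC}(X_\cC)}\le \E{\log^2 f_{X_\cC}(X_\cC)}-h^2(X)$, leaving silent the sign caveat you state explicitly), and then majorize $\log^2 z$ by the convex function $g(z)=\log^2 z\,1\{z\le e\}+\frac{\log^2 e}{e}(2z-e)1\{z> e\}$, i.e. the tangent construction at $z=e$ that produces exactly the stated correction term. But your write-up stops precisely at the decisive step: the comparison $\E{g(f_{X_\cC}(X_\cC))}\le\E{g(f_X(X))}$ is never established, only flagged as an obstacle (``what will determine whether the stated correction suffices''). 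That is a genuine gap --- this inequality is the entire content of the theorem; the paper's proof dispatches it in one line, ``by Jensen's inequality,'' immediately after the majorization.

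Moreover, the weighting mismatch you identify is not a removable technicality. Cell by cell, the needed inequality reads $G(\bar f_c)\le\frac 1{V_\cC}\int_{\cV_\cC(c)}G(f_X(u))\,du$ with $G(z)=z\,g(z)$ and $\bar f_c$ the uniform average of $f_X$ over the cell; Jensen yields this only if $G$ is convex, yet $G(z)=z\log^2 z$ has second derivative proportional to $(1+\ln z)/z$, negative on $(0,1/e)$ (and the Cauchy--Schwarz relation you invoke does not help, since $g$ is not monotone). In fact the step fails outright without further hypotheses: with unit cells, let $f_X=0.05$ on half of one cell, $0.25$ on the other half, and $0.85$ on a second cell, so that $f_{X_\cC}$ takes the values $0.15$ and $0.85$ and the correction term vanishes because $f_X<e$ everywhere; with natural logarithms one gets $\E{\log^2 f_{X_\cC}(X_\cC)}\approx 0.562>0.487\approx\E{\log^2 f_X(X)}$ and even $\Var{\log f_{X_\cC}(X_\cC)}\approx 0.384>0.338\approx\Var{\log f_X(X)}$, and a slight mollification makes this density differentiable. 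So the obstacle you postponed is exactly where the argument (and the statement, as written) needs an additional assumption --- e.g. $f_X\ge 1/e$ or some control of the within-cell variability of $f_X$ --- rather than a cleverer reconciliation of the two weightings.
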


\begin{proof}
We use \eqref{eq:jf} to write
\begin{equation}
 \Var{\log {f_{X_\cC}(X_\cC) }} \leq  \E{\log^2 {f_{X_\cC}(X_\cC) }} -  h^2(X).
\end{equation} 
Observe that the function $\log^2 z$ is convex for $z < e$ and concave elsewhere. Furthermore, it is bounded from above by the following convex function:
\begin{equation}
 \log^2 z \leq \log^2 z\, 1\{ z \leq e\} + \frac{\log^2 e} e \left(  2  z - e \right) 1\{ z > e\}
\end{equation}
By Jensen's inequality, 
\begin{align}
  \E{\log^2 {f_{X_\cC}(X_\cC) }} &\leq \E{\log^2 {f_{X}(X) }} + \frac{\log^2 e}{e} \E{ \left( 2 f_X(X) - e\right) 1\{ f_{X}(X) > e\} },
\end{align}
and \eqref{eq:vardiv} follows. 
\end{proof}
}

The strength of Theorem \ref{thm:entropyq} is that it requires only a very mild assumption on the source density, namely, \eqref{eq:entropy}. The weakness is that it does not offer any estimate on the speed of convergence to the limit in \eqref{eq:hXC} (or, equivalently, in \eqref{eq:kl}); such an estimate will be crucial in our study of the behavior of the output distribution of lattice quantizers in the limit of increasing dimension. Naturally, for the relative entropy $D(X \| X_\cC)$ to be small, the probability density function of $X$ should not change too abruptly within a single quantization cell.

The following smoothness condition will be instrumental in quantifying the variability of $f_X$.  

\begin{defn}[$v$-regular density]
Let $v \colon \mathbb R^n \mapsto \mathbb R^+$. Differentiable probability density function $f_X$ is called $v$-regular if
\begin{equation}
\| \nabla f_X(x)\| \leq v(x) f_X(x),  \qquad \forall x \in \mathbb R^n. 
\label{eq:pwreg}
\end{equation}
\label{defn:pwreg}
\end{defn}

All differentiable probability density functions are $v$-regular, for some $v$. 
The function $v(x)$ measures how fast the probability density function $f_X(x)$ varies as $x$ varies. In the extreme case of $X$ uniform on $\Omega$, an open subset of  $\mathbb R^n$, we have $v(x) \equiv 0$, $\forall x \in \Omega$. In general, the function $v(x) \geq 0$ can be thought of as the measure of the distance between $f_X$ and the uniform distribution. In fact, $v(x)$ is closely related to the \emph{total variation} of $f_X(x)$:
\begin{align}
\mathrm{TV}(f_X) &\triangleq \int_{\mathbb R^n}  \| \nabla f_X(x)\| d x \\
&\leq \E{v(X)}. \label{eq:tvv}
\end{align}
Using \eqref{eq:tvv}, we see that a differentiable $f_X$ has finite variation if and only if \eqref{eq:pwreg} holds with equality for $f_X$-a.s. $x$ and some function $v$ such that $\E{v(X)} < \infty$; the total variation of $f_X$ is then given by $\E{v(X)}$. 

Another way to look at \eqref{eq:pwreg} is to observe that at any $x$ with $f_X(x) > 0$, \eqref{eq:pwreg} is equivalent to 
\begin{equation}
 \| \nabla \log f_X(x)\| \leq v(x) \log e \label{eq:logpw}.
\end{equation}
Thus, $v(x)$ can be taken to be the norm of the gradient of the natural logarithm of  $f_X(x)$, which results in equality in~\eqref{eq:logpw} and is thereby optimal.
For example, if $X \sim \mathcal N(0, \sigma^2\, \mathbf I)$, then the optimal choice is $v(x) = \frac {2}{\sigma^2} \| x \|$. 

Since the function $v(x)$ quantifies how much the density of $X$ can change within a single quantization cell, it will be useful in bounding the entropy (and information) at the output of a lattice quantizer for $X$ in terms of the sizes of lattice quantization cells. 

Definition \ref{defn:pwreg} presents a generalization of a smoothness condition recently suggested by Polyanskiy and Wu \cite{polyanskiy2015wasserstein}, who considered densities satisfying \eqref{eq:logpw} with
\begin{equation}
 v(x) = c_1 \|x\| + c_0 \label{eq:vpw}.
\end{equation} 
for some $c_0 \geq 0$, $c_1 > 0$. 

 A wide class of $c_1 \|x\| + c_0$~-regular densities is identified in \cite{polyanskiy2015wasserstein}. In particular, the density of $B + Z$, with $B \pperp Z$ and $Z \sim \mathcal N(0, \sigma^2\, \mathbf I)$ is $\frac {4}{\sigma^2} \E{\|B\|} + \frac {2}{\sigma^2} \| x \|$~-regular. Likewise, if the density of $Z$ is $c_1 \|x\| + c_0$~-regular, then that of $B + Z$, where $\|B\| \leq b$ a.s., $B \pperp Z$, 
is  $c_1 \|x\| + c_0 + c_1 b$~-regular. Furthermore, if $X$ has $c_1 \|x\| + c_0$-regular density and finite second moment, then its differential entropy is finite.

Regularity of a product density is easily established if the marginal densities are regular, as the following result details. 
\begin{prop}
If $f_{X^n} = f_{X_1} \ldots f_{X_n}$ and $f_{X_i}$ is $v_i$-regular, then $f_{X^n}$ is $v$-regular, where
\begin{equation}
 v(x^n) =\| v_1(x_1), \ldots, v_n(x_n) \|. \label{eq:vn}
\end{equation}
\label{prop:regprod}
\end{prop}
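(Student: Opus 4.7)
The plan is to compute the gradient of the product density coordinatewise and then recognize the resulting expression as the Euclidean norm of the single-variable regularity bounds, which is exactly what the definition of $v(x^n)$ in \eqref{eq:vn} encodes.

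First I would write, at any point $x^n$ where $f_{X^n}(x^n) > 0$, the $i$-th partial derivative as
\begin{equation}
\partial_i f_{X^n}(x^n) = f'_{X_i}(x_i) \prod_{j \neq i} f_{X_j}(x_j) = \frac{f'_{X_i}(x_i)}{f_{X_i}(x_i)} f_{X^n}(x^n).
\end{equation}
Squaring and summing over $i$ then yields
\begin{equation}
\| \nabla f_{X^n}(x^n)\|^2 = f_{X^n}(x^n)^2 \sum_{i=1}^n \left(\frac{f'_{X_i}(x_i)}{f_{X_i}(x_i)}\right)^2.
\end{equation}
The one-dimensional $v_i$-regularity of $f_{X_i}$ guarantees that each summand is bounded by $v_i(x_i)^2$, so the right side is at most $f_{X^n}(x^n)^2 \sum_{i=1}^n v_i(x_i)^2 = f_{X^n}(x^n)^2\, v(x^n)^2$ by the definition \eqref{eq:vn}. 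Taking square roots delivers \eqref{eq:pwreg} for $f_{X^n}$.

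The only mild subtlety is the set where some coordinate satisfies $f_{X_i}(x_i) = 0$; there $f_{X^n}(x^n) = 0$, and because $v_i$-regularity forces $|f'_{X_i}(x_i)| \leq v_i(x_i) f_{X_i}(x_i) = 0$ at any such point, every partial derivative in the product rule vanishes, so \eqref{eq:pwreg} holds trivially with both sides equal to zero. I do not anticipate any real obstacle — the argument is essentially the observation that $\nabla \log f_{X^n}$ is the concatenation of the single-variable score functions $\nabla \log f_{X_i}$, and the Euclidean norm of a concatenation is given precisely by \eqref{eq:vn}.
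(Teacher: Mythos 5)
Your proof is correct and follows essentially the same route as the paper: differentiate the product density via the product rule, factor out $f_{X^n}(x^n)$, and recognize the remaining Euclidean norm of the vector of score functions $f'_{X_i}/f_{X_i}$ as $v(x^n)$. The only addition beyond the paper's one-line computation is your explicit handling of points where some $f_{X_i}(x_i)=0$, which the paper leaves implicit; this is a harmless tidying-up rather than a different idea.
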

\begin{proof}
Since $f_{X_i}$ is $v_i$-regular, we have: 
\begin{equation}
\left | f_{X_i}^\prime(x_i) \right | \leq v_i(x_i)\,  f_{X_i}(x_i).
\end{equation}
Therefore,
\begin{align}
 &~
 \left\| \nabla f_X(x)\right\| 
 \\
 =&~\left\| \left[ \frac{f_{X_1}^\prime(x_1)}{f_{X_1}(x_1)}, \ldots,  \frac{f_{X_n}^\prime(x_n)}{f_{X_n}(x_n)} \right] \right\| f_{X_1}(x_1)\cdot \ldots \cdot f_{X_n}(x_n) \notag\\
 \leq&~ v(x^n) f_{X^n}(x^n).
\end{align}
\end{proof}

We now state the main result of Section \ref{sec:lattice}, which provides upper bounds on $\imath (q_\cC(x))$ and $H\left( q_\cC(X) \right)$ for regular densities. 
\begin{thm}
Let $X$ be a random variable with $h(X) > -\infty$ and $v$-regular density. Let $\mathcal C$ be a lattice in $\mathbb R^n$. Then the information random variable and the entropy at the output of lattice quantizer can be bounded as
\begin{align}
\imath (q_\cC(x)) &\leq \log \frac 1 {f_{X}(x)} - \log V_\cC  
+
2 r_{\cC} v_\cC(x) \log e, \label{eq:iqv}\\
H\left( q_\cC(X) \right) &\leq h(X) - \log V_\cC  + 2 r_\cC   \E{v_\cC \left( X  \right) } \log e, 
\label{eq:hq}
\end{align}
where $r_\cC$ is the lattice covering radius defined in \eqref{eq:rcov}, and $v_{\cC}(x)$ is given by
\begin{equation}
v_{\cC}(x) = \max_{u \in \mathcal V_\cC(q_\cC(x))} v(u). \label{eq:vv}
\end{equation}
Furthermore, if $v(x) = v(\|x\|)$ is convex and nondecreasing, then \eqref{eq:iqv} and \eqref{eq:hq} can be strengthened by replacing \eqref{eq:vv} with
\begin{equation}
v_{\cC}(x) =   \frac 1 2 v(\|x\|) + \frac 1 2 v(\|x\| + 2 r_\cC) \label{eq:v2},
\end{equation}
and  if $v(x) = c_1 \|x\| + c_0$, then \eqref{eq:v2} particularizes as 
\begin{equation}
v_{\cC}(x) = c_1 \|x\| + c_1 r_\cC + c_0 \label{eq:vpw}.
\end{equation}
\label{thm:hq}
\end{thm}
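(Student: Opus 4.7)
The plan is to derive both \eqref{eq:iqv} and \eqref{eq:hq} from a single pointwise lower bound on the density $f_{X_\cC}$. Recall from \eqref{eq:Xlambda} that $X_\cC=q_\cC(X)+U_\cC$ with $U_\cC$ uniform on $\cV_\cC(0)$, so conditional on $q_\cC(X)=c$ the random variable $X_\cC$ is uniform on $\cV_\cC(c)$. Hence, for $x\in\cV_\cC(c)$,
\begin{equation}
f_{X_\cC}(x)=\frac{P_{q_\cC(X)}(c)}{V_\cC}=\frac{1}{V_\cC}\int_{\cV_\cC(0)} f_X(c+u)\,du.
\end{equation}
Since $\imath(q_\cC(x))=-\log V_\cC-\log f_{X_\cC}(x)$, the inequalities to be shown reduce to $\log f_{X_\cC}(x)\ge \log f_X(x)-2r_\cC v_\cC(x)\log e$, whereupon \eqref{eq:hq} follows by taking expectations.

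For the pointwise bound, I would invoke the $v$-regularity of $f_X$ in its equivalent form $\|\nabla\ln f_X(y)\|\le v(y)$ (cf.\ \eqref{eq:logpw}), and integrate along the straight line segment from $x$ to $c+u$ for arbitrary $u\in\cV_\cC(0)$. The crucial geometric observation is that the Voronoi cell $\cV_\cC(c)$ of a lattice is a \emph{convex} polytope; since both $x$ and $c+u$ lie in $\cV_\cC(c)$, the entire segment remains inside $\cV_\cC(c)$, where $v$ is bounded by $v_\cC(x)=\max_{y\in\cV_\cC(c)}v(y)$. Writing
\begin{equation}
\ln f_X(c+u)-\ln f_X(x)=\int_0^1 \nabla\ln f_X\bigl(x+t(c+u-x)\bigr)\cdot(c+u-x)\,dt,
\end{equation}
applying Cauchy--Schwarz, and using $\|c+u-x\|\le 2 r_\cC$ (twice the covering radius bounds the diameter of $\cV_\cC(c)$), I obtain $f_X(c+u)\ge f_X(x)\exp(-2r_\cC v_\cC(x))$. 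Averaging this over $u\in\cV_\cC(0)$ gives the desired bound on $f_{X_\cC}(x)$ and hence \eqref{eq:iqv}; taking expectations yields \eqref{eq:hq}.

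For the refined version when $v(\|x\|)$ is convex and nondecreasing, the same line-integral argument applies, but one keeps track of the fact that $\|x+t(c+u-x)\|\le \|x\|+2tr_\cC$, so that $v$ along the path is bounded by $v(\|x\|+2tr_\cC)$ rather than by the crude supremum. Integrating this bound in $t$ and applying the Hermite--Hadamard inequality (which gives $\int_0^1 v(\|x\|+2tr_\cC)\,dt\le\tfrac{1}{2}[v(\|x\|)+v(\|x\|+2r_\cC)]$ for convex $v$) produces exactly \eqref{eq:v2}. The specialization \eqref{eq:vpw} to the affine case $v(x)=c_1\|x\|+c_0$ is then a direct substitution.

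The main technical subtlety is ensuring that the $v$-bound is valid along the whole chord from $x$ to $c+u$, which is where convexity of Voronoi cells of lattices is essential; were $\cV_\cC(c)$ merely star-shaped with respect to $c$ (as in a general tessellation), one would have to route via $c$ and pick up an extra factor. A secondary point to handle carefully is the direction of the Hermite--Hadamard inequality and the base of the logarithm: working in natural logarithms throughout the calculus step and only converting to the paper's arbitrary base at the end cleanly produces the $\log e$ factor in \eqref{eq:iqv}--\eqref{eq:hq}.
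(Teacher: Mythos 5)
Your proposal is correct and follows essentially the same route as the paper: the same line-integral/Cauchy--Schwarz bound on $|\log f_X(a)-\log f_X(b)|$ via $v$-regularity, the diameter bound $\|a-b\|\le 2r_\cC$ with convexity of the Voronoi cell, and the same Jensen-type (Hermite--Hadamard) refinement for convex nondecreasing $v(\|\cdot\|)$ yielding \eqref{eq:v2} and \eqref{eq:vpw}. The only cosmetic difference is that you lower-bound $f_X(c+u)$ pointwise and average over the cell, whereas the paper invokes the mean value theorem to replace the cell average $f_{X_\cC}(x)$ by $f_X$ evaluated at a single point of the cell before applying the same estimate.
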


\begin{proof}

Observe that, 
\begin{align}
&~ \left|  \log f_X(a) - \log f_X(b)\right| \notag \\
 =&~   \left| \int_0^1 \left( \nabla \log f_X(t a + (1 - t)b), a - b \right) dt \right|\\
 \leq&~  \|a - b \| \log e  \int_0^1 v( t a + (1 - t)b ) dt  \label{eq:praa}\\
 \leq&~ \max_{0 \leq t \leq 1}  v(t a + (1 - t)b) \| a - b \|  \log e \label{eq:pra},
\end{align}
where $(\cdot, \cdot)$ denotes the scalar product, and \eqref{eq:praa} holds by Cauchy-Schwarz inequality. 
Using \eqref{eq:uc} and \eqref{eq:pra}, we evaluate the information in $q_\cC(x)$ as
\begin{align}
&~
\imath (q_\cC(x)) 
\notag\\
=&~\log \frac 1 {f_{X_\cC}(x)} - \log V_\cC\\
=&~ \log \frac 1 {f_{X}(x)} - \log V_\cC + \log \frac {f_{X}(x)} {f_{X}(q_\cC(x) - u_c) } \label{eq:iq0}\\
\leq&~ \log \frac 1 {f_{X}(x)} - \log V_\cC  
+
2 r_{\cC} v_\cC(x) \log e,  \label{eq:iqv0}
\end{align}
which is equivalent to \eqref{eq:iqv}, and \eqref{eq:hq} is immediate upon taking an expectation of \eqref{eq:iqv}.


If $v(x) = v(\|x\|)$, convex and nondecreasing, we strengthen \eqref{eq:pra} by applying Jensen's inequality to \eqref{eq:praa}: 
 \begin{align}
 &~
 \left|  \log f_X(a) - \log f_X(b)\right| 
 \notag\\
 \leq&~ \frac {\log e} 2 (v(\|a\|) + v(\|b\|)) \| a - b \|  \label{eq:Duconv1} \\
 \leq&~ \log e (v(\|a\|) + v(\|a\| + 2 r_\cC)) r_\cC, \label{eq:tria}
\end{align}
where to get \eqref{eq:tria} we used the triangle inequality, the assumption that $v$ is nondecreasing, and the fact that $\|a - b\| \leq 2 r_\cC$ for $a$ and $b$ from the same quantization cell. 
Modifying \eqref{eq:iqv0} accordingly results in a strengthening of \eqref{eq:iqv} and \eqref{eq:hq} with $v_\cC$ in \eqref{eq:v2}.



\end{proof}

 Comparing \eqref{eq:renyig1} and \eqref{eq:hq}, we see that \thmref{thm:hq} establishes
\begin{equation}
 D(X \| X_{\cC}) \leq 2 r_\cC   \E{v_\cC \left( X  \right) } \log e \label{eq:divu}.
\end{equation}
 Note that $v(x) = v_\cC \left( x  \right) \equiv 0$ if and only if $X$ is uniform. In that case, $D(X \| X_\cC) = 0$, and the third term in \eqref{eq:hq} vanishes. Otherwise, the third term in \eqref{eq:hq} is positive. It becomes larger if $f_X$ varies noticeably within each quantization cell, and it vanishes as the sizes of the quantization cells become smaller ($r_\cC \to 0$). Thus, Theorem \ref{thm:hq} allows one to quantify the convergence rate in Csisz\'ar's \thmref{thm:entropyq}.   

Succeeding a lattice quantizer $\cC$ by an optimal lossless coder (see Fig. \ref{fig:latticelossless}) and keeping only $M$ most likely realizations of the output of the lossless coder, one obtains, according to \eqref{eq:infoa}, an $(M, d, \epsilon)$ code with
\begin{equation}
\epsilon  \leq \Prob{ \imath (q_\cC(X)) > \log M } \label{eq:A}.
\end{equation}

\begin{figure}[hb]
 \center{\includegraphics[width=.8\linewidth]{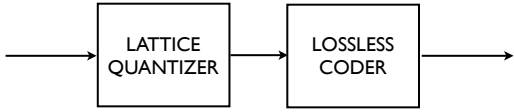}}
\caption{Separated architecture of lattice quantization.  
}
\label{fig:latticelossless}
\end{figure}

Applying the upper bound on $\imath(q_\cC(X))$ in \eqref{eq:iqv} to \eqref{eq:A}, we conclude that there exists an $(M, d, \epsilon)$ lattice code with
\begin{align}
 \epsilon &\leq \Prob{\log \frac 1 {f_{X}(X)}- \log V_\cC + 2 r_{\cC} v_\cC(X) \log e  > \log M }\\
 &\leq \Prob{\log \frac 1 {f_{X}(X)} - \log V_\cC + \gamma  > \log M } 
 \notag\\
&
 + \Prob{ 2 r_{\cC} v_\cC(X) > \gamma}, \label{eq:agamma}
\end{align}
where \eqref{eq:agamma} holds for any $\gamma \geq 0$ by the union bound. Applying \eqref{eq:Vd}, \eqref{eq:Ana} and \eqref{eq:rogers}  to \eqref{eq:agamma}, we obtain
\begin{align}
\epsilon &\leq \Prob{\ushort \jmath(X, d) + \gamma + \bigo{\log n} > \log M } 
\notag\\
&+ \Prob{ 2 r_{\cC} v_\cC(X) > \gamma}, \label{eq:agamma2}
\end{align}
where $\ushort \jmath(X, d)$ is defined in Table \ref{tab:slbdiff}. 
Furthermore, as we will show in \secref{sec:asymp}, under regularity conditions $\gamma = n \bigo{\sqrt d}$ can be chosen so that the second term in \eqref{eq:agamma2} is negligible, implying that
\begin{align}
\epsilon &\lessapprox  \Prob{\ushort \jmath(X, d) + n \bigo{\sqrt d} + \bigo{\log n}  > \log M }, \label{eq:Aapprox}
\end{align}
which provides a matching upper bound for \eqref{eq:Capprox}. Together, \eqref{eq:Capprox} and \eqref{eq:Aapprox} say that the excess distortion probability of the best code is given roughly by the complementary cdf of $\ushort \jmath(X, d) $ evaluated at $\log M$, the logarithm of the code size.  In other words, as advertised in \eqref{eq:jappx}, $\ushort \jmath(X, d) $ approximates the amount of information that needs to be stored about $X$ in order to restore it with distortion $d$.

\section{Asymptotic analysis of lattice quantization}
\label{sec:asymp}

\subsection{First order analysis}
\label{sec:asymp1}
Lattice coverings of space become more efficient as the dimension increases. In this section we study the fundamental rate-distortion tradeoffs attainable by lattice quantizers in the limit of large dimension $n$. This analysis is afforded by the bounds presented \secref{sec:slb} and \secref{sec:lattice}. \secref{sec:asymp1} presents the {\it first-order} (Shannon-type) asymptotic results comparing the behavior of lattice quantizers in the limit of infinite $n$ to the Shannon lower bound. A refined, {\it second-order} analysis quantifying how fast this asymptotic limit is approached is presented in \secref{sec:asymp2} below.

The rate-distortion function can be defined as follows. 
\begin{defn}
The rate-distortion function for the compression of a sequence of random variables $X_1, X_2, \ldots$ is defined by
\begin{equation}
R(d) \triangleq  \limsup_{n \to \infty} \frac 1 n \mathbb H_{X^n}(d) \label{eq:Rdlim}
\end{equation}
where $\mathbb H_{X^n}(d)$ is the $d$-entropy of vector $X^n$, defined in \eqref{eq:Hd}.
\end{defn}

The lattice rate-distortion function can be defined as follows. 
\begin{defn}
The lattice rate-distortion function for the compression of a sequence of random variables $X_1, X_2, \ldots$ is defined by
\begin{equation}
L(d) \triangleq  \limsup_{n \to \infty} \frac 1 n \mathbb L_{X^n}(d) \label{eq:Ldlim}
\end{equation}
\end{defn}

The operational meaning of \eqref{eq:Rdlim} is the minimum average rate asymptotically compatible with maximal distortion $d$. 
Indeed, substituting $S = q(X^n)$ into \eqref{eq:alon2} and \eqref{eq:wyner} and dividing through by $n$, we conclude that \eqref{eq:Rdlim} is equivalent to the operational definition
\begin{equation}
R(d) = \limsup_{n \to \infty} \frac 1 n \inf_{\substack{ q \colon \mathbb R^n \mapsto \mathbb R^n \\ \mathsf d(X^n, q(X^n)) \leq d \text{ a.s.}}} L^\star_{q(X^n)}. 
\end{equation}
Similarly, \eqref{eq:Ldlim} is equivalent to the operational definition
\begin{equation}
L(d) = \limsup_{n \to \infty} \frac 1 n \inf_{\mathcal C \colon \mathsf d(X^n, q_\cC(X^n)) \leq d \text{ a.s.}} L^\star_{q_\cC(X^n)}. 
\end{equation}

For convenience, denote the limsup of normalized $n$-dimensional Shannon's lower bounds 
\begin{equation}
 \ushort R(d) \triangleq \limsup_{n \to \infty} \frac 1 n \ushort{ \mathbb R}_{X^n}(d),
\end{equation}
where recall that $\ushort{ \mathbb R}_{X}(d)$ denotes the classical Shannon lower bound for $X$.

The first result in this section provides a characterization of the lattice rate-distortion function for sources with regular densities. 
\begin{thm}
 Consider a random process $X_1, X_2, \ldots$. The lattice rate-distortion function under the mean-square error distortion satisfies, 
 \begin{align}
\ushort R(d) &= 
h -  \log \sqrt{ 2 \pi e d} 
\label{eq:slbmse}
\\
 &
 \leq 
 R(d) 
 \label{eq:slblim} 
 \\
 &\leq L(d),
 \end{align}
 where 
\begin{equation}
h \triangleq \limsup_{n \to \infty}  \frac 1 n h(X^n). \label{eq:h}
\end{equation}
Furthermore, suppose that the density $f_{X^n}$ is $c_1 \|x^n\| + c_0 \sqrt n$-regular with some $c_0 \geq 0$, $c_1 \geq 0$, and that there exists a constant $\alpha > 0$ such that
\begin{equation}
 \E{  \|X^n\| } \leq \sqrt n \alpha. \label{eq:normu}
\end{equation}
Then, as $d \to 0$, the lattice rate-distortion function is upper bounded by,
 \begin{align}
L(d) &\leq  \ushort R(d) + \bigo{\sqrt d}. \label{eq:L0}
\end{align}

\label{thm:Ldlim}
\end{thm}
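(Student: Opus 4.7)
The plan is to treat the three inequalities separately, and to reduce the interesting content, namely the upper bound \eqref{eq:L0}, to a direct application of Theorem \ref{thm:hq} together with Rogers' covering theorem (Theorem \ref{thm:rogers}).

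The identity \eqref{eq:slbmse} is immediate from the MSE specialization \eqref{slb2} of Shannon's lower bound, once one divides by $n$ and takes the limsup; the chain $\ushort R(d)\le R(d)\le L(d)$ then follows from the abstract Shannon lower bound in Theorem \ref{thm:slb} (applied per blocklength) and from the trivial ordering $\mathbb R_{X^n}(d)\le \mathbb H_{X^n}(d)\le \mathbb L_{X^n}(d)$ recorded in \eqref{eq:RH}. Both steps are purely formal and will be dispatched in a sentence each.

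The nontrivial part is \eqref{eq:L0}. I would start from the characterization \eqref{eq:Ldchar} of $\mathbb L_{X^n}(d)$ and upper-bound the infimum by fixing, for each $n\ge 3$, a lattice $\mathcal C_n$ that attains the Rogers covering bound \eqref{eq:rogers}; this is the only source of the $O(\log n)$ term in \eqref{eq:Lnu}. It remains to control $D(X^n\|X^n_{\cC_n})$. Here I would invoke Theorem \ref{thm:hq} (subtracting \eqref{eq:renyig1} from \eqref{eq:hq} yields \eqref{eq:divu}), with the choice $v(x^n)=c_1\|x^n\|+c_0\sqrt n$ that the hypothesis of the theorem supplies. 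Since this $v$ is convex and nondecreasing in $\|x^n\|$, the refined form \eqref{eq:vpw} applies and gives the pointwise bound $v_{\cC_n}(x^n)\le c_1\|x^n\|+c_1 r_{\cC_n}+c_0\sqrt n$. With $r_{\cC_n}=\sqrt{nd}$ by \eqref{eq:dr} and the moment assumption \eqref{eq:normu}, taking expectations yields
\begin{equation}
\E{v_{\cC_n}(X^n)}\le c_1\alpha\sqrt n+c_1\sqrt{nd}+c_0\sqrt n,
\end{equation}
so that, after multiplication by $2 r_{\cC_n}\log e=2\sqrt{nd}\log e$,
\begin{equation}
\frac{1}{n}D(X^n\|X^n_{\cC_n})\le 2\sqrt d\bigl(c_1\alpha+c_0+c_1\sqrt d\bigr)\log e=\bigo{\sqrt d}.
\end{equation}
Dividing \eqref{eq:Lnu} by $n$, taking the limsup, and recalling the definition \eqref{eq:h} of $h$ together with \eqref{eq:slbmse} then delivers \eqref{eq:L0}.

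The only real obstacle is checking that the hypotheses of Theorem \ref{thm:hq} (finiteness of $h(X^n)$ and $v$-regularity in the convex/nondecreasing form needed for \eqref{eq:vpw}) are met uniformly in $n$; the regularity assumption in the statement is tailored exactly for this, and finiteness of $h(X^n)$ follows because a $c_1\|x\|+c_0$-regular density with finite second moment has finite differential entropy (as remarked in the paragraph following \eqref{eq:vpw}). Everything else amounts to the arithmetic combination of the Rogers bound and the divergence estimate \eqref{eq:divu}, with no delicate limit interchanges.
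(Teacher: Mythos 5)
Your proposal is correct and follows essentially the same route as the paper: the chain $\ushort R(d)\le R(d)\le L(d)$ is obtained from \eqref{eq:RH} and the Shannon lower bound under the limsup in \eqref{eq:Rdlim}, and \eqref{eq:L0} is obtained by applying the particularized divergence bound $D(X\|X_\cC)\le 2 r_\cC\left(c_1\E{\|X\|}+c_1 r_\cC+c_0\sqrt n\right)\log e$ from \eqref{eq:divu} (with \eqref{eq:vpw} and $r_\cC=\sqrt{nd}$) to \eqref{eq:Lnu}, normalizing by $n$ and taking the limsup, exactly as in the paper's proof.
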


\begin{proof}
The inequality in \eqref{eq:slblim} is obtained by applying \eqref{eq:RH} and the Shannon lower bound to the expression under the limsup in \eqref{eq:Rdlim}, and taking $n$ to infinity. The inequality in \eqref{eq:L0} is obtained by applying the bound
\begin{equation}
 D(X \| X_\cC) \leq 2 r_{\cC} \left( c_1\E{\|X\|} + c_1 r_\cC + c_0 \sqrt n \right) \log e, 
\end{equation}
which is a particularization of \eqref{eq:divu}, to 
\eqref{eq:Lnu}, normalizing by $n$ and taking a limsup in $n$.  
\end{proof}

\thmref{thm:Ldlim} establishes that for a wide class of sources with sufficiently smooth densities, which includes non-stationary and non-ergodic sources,  the lattice rate-distortion function approaches Shannon's lower bound at a speed $\bigo{\sqrt d}$ as $d \to 0$. 

\thmref{thm:Ldlim} implies that for sources with regular densities, 
\begin{equation}
 \lim_{d \to 0}   \limsup_{n \to \infty} \left[   \frac 1 n \mathbb L_{X^n}(d) + \log \sqrt d \right] = h  - \log \sqrt{ 2 \pi e}.  \label{eq:L0a}
\end{equation}
A weaker result, namely, 
\begin{equation}
 \limsup_{n \to \infty}  \lim_{d \to 0}   \left[  \frac 1 n \mathbb L_{X^n}(d) + \log \sqrt d \right] = h  - \log \sqrt{ 2 \pi e},  \label{eq:Lfnlim}
\end{equation}
can be obtained with a weaker assumption on the source distribution: for \eqref{eq:Lfnlim} to hold, only $H (\lfloor X^n \rfloor) < \infty$ is required. Indeed, applying Csisz\'ar's result \eqref{eq:kl} to \eqref{eq:Lnu} and taking the limit in $d$, we obtain
\begin{equation}
 \lim_{d \to 0}   \left[ \mathbb L_{X}(d) + n \log \sqrt d \right] = h(X)  - n \log \sqrt{ 2 \pi e} + \bigo{\log n}. \label{eq:zamir}
\end{equation}
Dividing by $n$ and taking $n$ to infinity leads to \eqref{eq:Lfnlim}. The reason Csisz\'ar's result in \thmref{thm:entropyq} is insufficient to prove \eqref{eq:L0a} is that even though it establishes that $D(X \| X_\cC)$ converges to $0$ as $d \to 0$ for any fixed $n$, it leaves unaddressed the behavior of $D(X \| X_\cC)$ as $n$ grows.

Equality in \eqref{eq:zamir} implies that for $X \in \mathbb R^n$ with $H(\lfloor X \rfloor) < \infty$, 
\begin{equation}
\lim_{d \to 0} \frac{ \mathbb L_{X}(d) }{ \frac{n}{2} \log \frac 1 {d}  } = 1, 
\end{equation}
which can be viewed as a lattice counterpart of R\'enyi information dimension \cite{renyi1959dimension}. 
\begin{apxonly}
\begin{equation}
d(X) =  \lim_{\Delta \to 0} \frac{H( \left \lfloor \frac X \Delta \right \rfloor )}{ \log \frac 1 {\Delta}}
\end{equation}
Renyi actually defined this with fineness parameter $m$, where $\Delta = \frac 1 m$. For continuous r.v.'s, 
\begin{equation}
d(X) = 1.
\end{equation} 
\end{apxonly}

While the result in \eqref{eq:L0a} is new, the statements similar to \eqref{eq:Lfnlim} and \eqref{eq:zamir} are found in the existing literature. 
The counterpart of \eqref{eq:Lfnlim} for dithered lattice quantization is contained in Zamir's text \cite{zamir2014lattice}. 
\apxonl{For non-dithered quantization with tessellating quantizers, a counterpart of \eqref{eq:Lfnlim} is shown by Linder and Zeger \cite{linder1994tessellating}. The validity of their proof depends on the validity of a conjecture by Gersho.} The following result was shown by Linkov \cite{linkov1965evaluation} and revisited, under progressively more general assumptions,  by Linder and Zamir \cite{linder1994asymptotic} and by Koch \cite{koch2015shannonlb}, who showed that as long as $H(\lfloor X \rfloor ) < \infty$, it holds that
\begin{equation}
 \lim_{d \to 0}   \left[ \mathbb R_{X}(d) + n \log \sqrt d  \right] = h(X)  - n \log \sqrt{ 2 \pi e}, \label{eq:linkov}
\end{equation}
where $X \in \mathbb R^n$, and $ \mathbb R_{X}(d)$ is the minimal mutual information quantity defined in \eqref{eq:RR(d)}. 
Regarding the operational meaning of \eqref{eq:linkov} in the context of $n$-dimensional quantization, we note the following observations, which highlight the difference between dithered and non-dithered quantization. 
\begin{itemize}
 \item 
 Koch and Vazquez-Vilar \cite{koch2015gibbs} recently showed that if one replaces $\mathbb R_{X}(d)$ in \eqref{eq:linkov} by the minimum output entropy attainable by an $n$-dimensional quantizer operating at average distortion $d$, then the resulting limit as $d \to 0$ is strictly greater than the right side of \eqref{eq:linkov}. 

\item The reasoning in \cite{linkov1965evaluation,linder1994asymptotic,koch2015shannonlb} reveals that
\begin{equation}
 \lim_{d \to 0}   \left[ I(X; X + Z) + n \log \sqrt d \right] = h(X)  - n \log \sqrt{ 2 \pi e}, \label{eq:linderdither}
\end{equation}
where the choice of $Z$ satisfies $\E{\sd(X, X+Z)} \leq d$. Since, operationally, $I(X; X + Z)$ corresponds to the quantization rate (see e.g. \cite{zamir2014lattice}) of $X$ dithered by $Z$, there exists an $n$-dimensional dithered quantizer operating at average distortion $d$ and whose rate satisfies  \eqref{eq:linderdither}.  
\end{itemize}

\begin{apxonly}
For non-lattice quantization of memoryless sources, the asymptotics in \eqref{eq:zamir} can be obtained via \eqref{eq:linkov} and \cite[Theorem 9]{kostina2015varrate} which implies \apxonl{ using \cite[ (125) and (148)]{kostina2015varrate} } that
\begin{equation}
\mathbb L_{X^n}^\star(d) \leq n R_{\mathsf X}(d) + \frac 1 2 \log n + \bigo{1}. 
\end{equation}
 
\end{apxonly}

\begin{remark}
If, instead of requiring that $\Prob{\frac 1 n \| X - q_\cC(X)\|^2 \leq d} = 1$  as in \eqref{eq:Ld}, we ask only that $\E{\frac 1 n \| X - q_\cC(X)\|^2} \leq d$, then the analog of the result in \eqref{eq:zamir} can be alternatively obtained as follows. Denote the minimum
 of normalized second moments over all $n$-dimensional lattices by 
\begin{equation}
G_n^\star \triangleq \min_{\cC_n}\frac {\E{ \| U_{\cC_n}\|} } {n V_{\cC_n}^{\frac 2 n} },
\end{equation}
where $U_{\cC_n}$ is uniform on $\cV_{\cC_n}(0)$. In \cite[(25)]{zamir1996onlatticenoise}, it is shown that $G^{\star}_n$
converges to $\frac 1 {2\pi e}$ at a rate
\begin{equation}
\frac 1 n \log(2\pi e G^\star_n) = \bigo{ \frac {\log n} n}, \label{eq:zamirfeder}
\end{equation}
a result which Zamir and Feder attributed to Poltyrev. Let $\cC_{n, d}^\star$ be the lattice whose normalized second moment equals $G_n^\star$ rescaled so that its mean
square error (with respect to $X^n$) is $d$. Using $\cC_{n, d}^\star$ in \cite[Theorem 1]{linder1994tessellating},  one
concludes that 
\begin{align}
\lim_{d \to 0}    \left( H \left( q_{\cC_{n, d}^\star}(X)\right)  + n \log \sqrt d  \right)
                          =   h(X) + \frac 1 2 \log (G^{\star}_n) . \label{eq:linderzeger}
\end{align}
Substituting \eqref{eq:zamirfeder} into
\eqref{eq:linderzeger}, one obtains the same asymptotics as in \eqref{eq:zamir}. A gentle modification of the above argument (apparent from  \cite[(26)]{zamir1996onlatticenoise} and \cite[Lemma 1]{linder1994tessellating}) leads to \eqref{eq:zamir} for the maximal distortion criterion as well. 
\label{rem:linder}
\end{remark}
\begin{apxonly}
Average MSE distortion over the quantization cell (per dimension): 
\begin{equation}
d_\cC =  \frac 1 {n V_\cC}\int_{\mathcal V_\cC(0)} \|x\|^2 dx
\end{equation}
For uniform quantizer $d_\cC = \frac 1 {12} \Delta^2$. If $\mathcal V_0$ is an $n$-dimensional ball of radius $\sqrt n \Delta$,  
\begin{equation}
d_\cC =  \frac{n \Delta^2}{n + 2}.
\end{equation}
As $d \to 0$, the rate achieved by the lattice quantizer with average MSE distortion (this is \eqref{eq:linderzeger}):
\begin{equation}
H \left( q_{\cC_{n, d}^\star}(X^n)\right) - \ushort{ \mathbb R}_{X^n} (d) = \frac 1 2 \log (2 \pi e G^\star_n) + o(1)
\end{equation}

Asymptotically suboptimum scheme:
perform uniform quantization on $\mathbb R$, then losslessly compress the quantized version. Putting the reconstruction points at the middle of the quantization intervals results in average MSE distortion $d = \frac{\Delta^2}{12}$ (because $\E{(X - c_j)^2 | X \in \Delta_j} \approx \frac 1 \Delta \int_{\Delta_j} (x - c_j)^2 dx =  \frac{\Delta^2}{12}$), so the rate achieved is
\begin{equation}
R_{uni}(d) = h(X) +\frac 1 2 \log \frac 1 d - \frac 1 2 \log 12
\end{equation}

The difference with the best vector quantizer is $\frac 1 2 \log \frac{\pi e}{6} \approx 0.25$. This is exactly $D(U[0, 1] \| \mathcal N(1/2, 1/4))$  (nongausianness of $U$). This is suboptimal. 

\end{apxonly}

\apxonl{Can we claim that
\begin{equation}
\mathbb R_X(d) \geq h(X) ?
\end{equation}
}

\subsection{Second order analysis}
\label{sec:asymp2}

The minimum achievable coding rate at a given blocklength and a given excess distortion probability is defined as
\begin{equation}
 R(n, d, \epsilon) \triangleq \frac 1 n \min\{ \log M \colon \exists (M, d, \epsilon) \text{ code for } X \in \mathbb R^n \}.
\end{equation}

\thmref{thm:2order}, stated next, provides a refined approximation to $R(n, d, \epsilon)$ at a given blocklength and a given low distortion. 

\begin{thm}
 Let $\mathsf X \in \mathbb R$ have $c_1 |\mathsf x| + c_0$-regular density $f_\sX$ such that  $\E{| \log f_{\sX}(\sX)|^3} < \infty$ and $\E{\sX^4} < \infty$. For the compression of the source consisting of i.i.d. copies of $\sX$ under the mean-square error distortion, it holds that 
\begin{equation}
 R(n,d,\epsilon) =  \ushort{R}(d) + \sqrt{\frac {\ushort{\mathcal V}} n }\Qinv{\epsilon} + O_1\left(\sqrt d\right) + O_2\left( \frac{\log n} n \right).  \label{eq:Rapprox0}
\end{equation}
where $\ushort{R}(d) $ and $\ushort{\mathcal V}$ are given by the mean and variance of 
\begin{equation}
\ushort \jmath_{\sX}(\sX, d) = \log \frac 1 {f_{\sX}(\sX)} - \log \sqrt{2\pi e d},
\end{equation}
 respectively, 
 and 
\begin{align}
 0 &\leq O_1\left(\sqrt d\right) \leq \bigo{\sqrt d}, \\
 \bigo{\frac 1 n } &\leq O_2 \left( \frac{\log n}{n}\right)  \\
 &\leq \log_2(2 \sqrt{\pi e}) \frac{\log n}{n} + \bigo{\frac 1 n \log \log n}.
\end{align}
 Furthermore,  \eqref{eq:Rapprox0} is attained by lattice quantization. 
\label{thm:2order}
\end{thm}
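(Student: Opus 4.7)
The plan is to sandwich $R(n,d,\epsilon)$ between matching converse and achievability bounds, each of which factors through the i.i.d.\ sum $\sum_{i=1}^n \log 1/f_\sX(\sX_i)$, to which I will apply the Berry--Esseen theorem. Throughout I will work with $X^n=(\sX_1,\ldots,\sX_n)$ and the product density, which by Proposition~\ref{prop:regprod} is $v$-regular with $v(x^n) = \| (c_1 |x_1|+c_0,\ldots,c_1|x_n|+c_0) \| \leq c_1 \|x^n\| + c_0 \sqrt n$.

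For the converse, I will apply Theorem~\ref{thm:C}\,b) with $\mu$ equal to Lebesgue measure on $\mathbb R^n$, so that $\imath_\mu(X^n)=\sum_{i=1}^n \log 1/f_\sX(\sX_i)$ and $\phi_\mu(d)=n\log\sqrt{2\pi e d}$ by the MSE calculation in Table~\ref{tab:slbdiff}. Choosing $\gamma = \tfrac12 \log n$ makes the $\exp(-\gamma)$ term in \eqref{eq:Ca} equal to $1/\sqrt n$, and the third-moment hypothesis $\E{|\log f_\sX(\sX)|^3}<\infty$ allows Berry--Esseen to approximate the cdf of the i.i.d.\ sum by a Gaussian with error $\bigo{1/\sqrt n}$. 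Inverting for $\log M$ yields $R(n,d,\epsilon) \geq \ushort R(d) + \sqrt{\ushort{\mathcal V}/n}\,\Qinv{\epsilon} - \bigo{1/n}$, which supplies the stated lower endpoint of $O_2$ and sets $O_1=0$ on this side.

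For achievability I use the separated architecture of Fig.~\ref{fig:latticelossless}, combining an $n$-dimensional lattice quantizer with an optimal fixed-length lossless coder, analyzed through \eqref{eq:agamma2}. I take $\cC$ to be a Rogers-efficient lattice from Theorem~\ref{thm:rogers}, rescaled so that $r_\cC = \sqrt{nd}$; then \eqref{eq:Vd} and \eqref{eq:Ana} give $-\log V_\cC = n\log\sqrt{2\pi e d} - \tfrac12\log n + n\log\rho_\cC + \bigo{1}$ with $n\log\rho_\cC \leq \log_2\sqrt{2\pi e}\,(\log n + \log\log n + c)$. Because $v(r) = c_1 r + c_0\sqrt n$ is convex and nondecreasing, \eqref{eq:vpw} applies to give $v_\cC(X^n) \leq c_1 \|X^n\| + c_0\sqrt n + c_1 \sqrt{nd}$. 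I then set $\gamma = 2 r_\cC \cdot C\sqrt n$ for a constant $C$ slightly larger than $c_1 \sqrt{\E{\sX^2}} + c_0 + c_1 \sqrt d$; by Chebyshev applied to $\|X^n\|$ and the fourth-moment assumption $\E{\sX^4}<\infty$, the penalty probability $\Prob{2 r_\cC v_\cC(X^n) > \gamma}$ decays as $\bigo{1/n} = o(1/\sqrt n)$, hence is absorbed into the Berry--Esseen remainder. The contribution $\gamma$ to the rate is $2\sqrt d\, C = \bigo{\sqrt d}$, furnishing the nonnegative $O_1(\sqrt d)$ term. Combining the $\frac 1 2 \log n$ from the ball volume with the Rogers penalty gives the advertised coefficient $\tfrac12 + \log_2\sqrt{2\pi e} = \log_2(2\sqrt{\pi e})$ in front of $\tfrac{\log n}{n}$, with $\bigo{\tfrac{\log\log n}{n}}$ lower order.

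The main technical obstacle is the interaction between $n$ and $d$ in the penalty term $\Prob{2 r_\cC v_\cC(X^n) > \gamma}$: the regularity bound couples the spread of $\|X^n\|/\sqrt n$ around $\sqrt{\E{\sX^2}}$ to the choice of $\gamma$, and we need $\gamma/n = \bigo{\sqrt d}$ while still forcing the tail to be $o(1/\sqrt n)$ so that it does not corrupt the Berry--Esseen asymptotics, uniformly in the small-$d$ regime. The fourth-moment assumption is calibrated precisely so that Chebyshev on $\|X^n\|^2 = \sum_i \sX_i^2$ delivers this rate; a weaker moment condition would not suffice to push the penalty below the Berry--Esseen error and would degrade both $O_1$ and $O_2$.
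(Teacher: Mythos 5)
Your achievability argument is essentially the paper's own proof: product regularity via Proposition~\ref{prop:regprod}, a Rogers lattice scaled to $r_\cC=\sqrt{nd}$, the choice $\gamma = 2n\sqrt d\,(c_1\sqrt{\E{\sX^2}}+c_1\sqrt d + c_0)$ killed by Chebyshev (which is exactly where $\E{\sX^4}<\infty$ enters), and Berry--Esseen on $\sum_i \log \frac 1{f_\sX(\sX_i)}$; the only omission there is the degenerate case $\Var{\log f_\sX(\sX)}=0$ (uniform $\sX$), which the paper treats separately since the Berry--Esseen constant \eqref{eq:be} is then undefined.

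The converse, however, has a genuine gap. You invoke Theorem~\ref{thm:C}\,b), i.e.\ \eqref{eq:Ca}, with $\gamma=\tfrac12\log n$, and assert that inverting gives $R(n,d,\epsilon)\geq \ushort R(d)+\sqrt{\ushort{\mathcal V}/n}\,\Qinv{\epsilon}-\bigo{1/n}$. It does not: in \eqref{eq:Ca} the threshold is $\log M+\gamma$, so the $\gamma$ you spend to make $\exp(-\gamma)=1/\sqrt n$ is paid back as a loss of $\tfrac12\log n$ in $\log M$, yielding only $R(n,d,\epsilon)\geq \ushort R(d)+\sqrt{\ushort{\mathcal V}/n}\,\Qinv{\epsilon}-\tfrac{\log n}{2n}+\bigo{1/n}$. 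No choice of $\gamma$ repairs this: a bounded $\gamma$ forces a non-vanishing shift of $\epsilon$ and corrupts the second-order term by $\bigo{1/\sqrt n}$, while any $\gamma\to\infty$ slower than $\tfrac12\log n$ trades one penalty for another of at least comparable size. The point is that \eqref{eq:Ca} is a strict weakening of \eqref{eq:C}, obtained by bounding the $\mu$-volume of the distortion ball via Markov's inequality \eqref{eq:markova}; for MSE and Lebesgue $\mu$ that bound overshoots the true ball volume by a factor of order $\sqrt n$, which is precisely the $\tfrac12\log n$ you lose. The paper's converse works with \eqref{eq:C} directly: the exact ball-volume expansion \eqref{eq:muvol} carries a $-\tfrac12\log n$, the Neyman--Pearson expansion \eqref{eq:beta2order} carries another $-\tfrac12\log n$, and in the difference $\log M \geq \log\beta_{1-\epsilon}(P_X,\mu) - \log \sup_y \mu[\sd(X,y)\leq d]$ these cancel, giving the $\bigo{1}$ third-order term of \eqref{eq:C2} and hence the claimed lower endpoint $O_2\geq \bigo{1/n}$ (this sharpening over the generic $-\tfrac12\log n+\bigo 1$ converse is exactly what the theorem is asserting on that side). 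As written, your converse proves a strictly weaker statement than the theorem.
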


\begin{proof}[Proof of the converse part]
We show that as long as $\sX \in \mathbb R$ has a density (regularity is not required for the converse), the minimum rate required to quantize $X$, which is a vector of $n$ i.i.d. copies of $\sX$, is at least
\begin{align}
n R(n, d, \epsilon) &\geq  n \ushort{R}(d) + \sqrt{n \ushort{\mathcal V}} \Qinv{\epsilon}  +  \bigo{1}. \label{eq:C2}
\end{align}
The proof consists of the analysis of the converse bound in \eqref{eq:C}. Letting $\mu$ be the Lebesgue measure on $\mathbb R^n$ and letting $\sd$ be the mean-square error, observe that regardless of the choice of $y \in \mathbb R^n$, $\mu \left[ \sd(X, y) \leq d\right]$ is equal to the volume of Euclidean ball of radius $\sqrt {n d}$, i.e.
\begin{align}
 \log \mu \left[ \sd(X, y) \leq d\right]  &= \log b_n + n \log \sqrt{n d} \\
 &= n \log \sqrt {2 \pi e d}  - \frac 1 2 \log n + \bigo{1} \label{eq:muvol},
\end{align}
where to get \eqref{eq:muvol} we invoked \eqref{eq:Ana}. 
Furthermore, the Neyman-Pearson function expands as \cite[Lemma 58]{polyanskiy2010channel}, \cite[(251)]{kostina2011fixed}
 \begin{align}
\log \beta_{1 - \epsilon} (P_{X}, \mu) &= n h( \sX) + \sqrt {n \ushort{\mathcal V}} \Qinv{\epsilon} 
\notag\\
&
- \frac 1 2 \log n + \bigo{1}. \label{eq:beta2order} 
\end{align}
According to  \eqref{eq:C}, for any $(M, d, \epsilon)$ code, $\log M$ is lower bounded by the difference between \eqref{eq:beta2order} and \eqref{eq:muvol}, which is exactly \eqref{eq:C2}.
\end{proof}

\begin{proof}[Proof of the achievability part]
The proof consists of the analysis of the bound on the excess distortion probability of lattice quantizers in \eqref{eq:agamma}. 
Assume that $X$ is a vector of $n$ i.i.d. copies of $\sX$. According to \propref{prop:regprod}, the density of $X$ is $c_1 \|x\| + c_0 \sqrt n$-regular.

First, consider the case of non-uniform distribution: $\Var{f_\sX(\sX)} > 0$. 
The second term in \eqref{eq:agamma} is equal to
\begin{align}
 &~
 \Prob{ 2 r_{\cC} v_\cC(X) > \gamma} 
 \notag\\
 =&~  \Prob{ 2 \sqrt{n d} (c_1 \|X\| + c_1 \sqrt{n d} + c_0 \sqrt n) > \gamma} \label{eq:a2term}.
\end{align}
Denote the constant
\begin{equation}
\alpha \triangleq  \E{\sX^2}.  
\end{equation}
By Chebyshev's inequality, 
\begin{equation}
\Prob{ \|X\|^2 > 2 n \alpha} \leq \frac 1 n  \label{eq:cheb}.
\end{equation}
So, the choice
\begin{equation}
 \gamma = 2 n \sqrt d\left( c_1 \sqrt \alpha + c_1 \sqrt d + c_0 \right) \label{eq:gammaa}
\end{equation}
ensures that the probability in \eqref{eq:a2term} is upper bounded by $\frac 1 n$.  To analyze the the first term in \eqref{eq:agamma}, note that according to the Berry-Esse\'en theorem, for all $0 < \epsilon^\prime < 1$, 
\begin{align}
 &~ \Prob{\log \frac 1 {f_{X}(X)} > n h(\sX) +  \sqrt{n \Var{\log f_{\sX}(\sX)}  } \Qinv{\epsilon^\prime}} \notag\\
 \leq&~ \epsilon^\prime + \frac {B}{\sqrt n},
\end{align}
where 
\begin{align}
B = 6 \frac{\E{| \log f_\sX(\sX) + h(\sX)|^3}}{\Var{\log f_\sX(\sX)}} \label{eq:be}
\end{align}
 is the Berry-Esse\'en constant, finite by the assumptions $\Var{\log f_\sX(\sX)} > 0$ and $\E{| \log f_{\sX}(\sX)|^3} < \infty$. Therefore, letting
\begin{align}
&~
\log M  \label{eq:Ma} \\
=&~ n h(\sX) - \log V_\cC + \sqrt{n \Var{\log f_{\sX}(\sX)}  } \Qinv{\epsilon^\prime}  + \gamma, \notag
\end{align}
 we conclude that
\begin{equation}
\Prob{\log \frac 1 {f_{X}(X)} - \log V_\cC + \gamma  > \log M } \leq  \epsilon^\prime + \frac {B}{\sqrt n}.
\end{equation}
Finally, choosing $\epsilon^\prime$ as
\begin{equation}
\epsilon^\prime = \epsilon -  \frac {B}{\sqrt n} - \frac 1 n, \label{eq:epsa}
\end{equation}
we conclude that the sum of both terms in \eqref{eq:agamma} does not exceed $\epsilon$. It follows that there exists an $(M, d, \epsilon)$ code with $M$ given in \eqref{eq:Ma} and $\epsilon$ given in \eqref{eq:epsa}. Letting $\cC$ be a lattice satisfying \eqref{eq:rogers} and applying \eqref{eq:Vd}, \eqref{eq:Ana} and \eqref{eq:rogers} to \eqref{eq:Ma}, we express \eqref{eq:Ma} as
\begin{align}
&~ \log M  \notag\\
=&~ n h(\sX) -  n \log \sqrt {2 \pi e d} + \sqrt{n \Var{\log f_{\sX}(\sX)}  } \Qinv{\epsilon} \notag \\
+&~ \log_2(2 \sqrt{\pi e}) \log n + \bigo{\log \log n} + \bigo{\sqrt d},
\end{align}
 which concludes the proof of the achievability part of \eqref{eq:Rapprox0} for non-uniform $\sX$.

 If $\sX$ is uniform on a compact set, then $\log \frac 1 {f_{X}(X)} = n h(\sX)$ a.s.,  $v_\cC(x) \equiv 0$, and \eqref{eq:agamma} implies that there exists an $(M, d, \epsilon)$ code with 
\begin{equation}
 \epsilon = \1{n h(\sX) - \log V_\cC > \log M }.
\end{equation}
 Choosing
\begin{equation}
\log M  = n h(\sX) - \log V_\cC
\end{equation}
results in $\epsilon = 0$. It follows that if $\sX$ is uniform, then there exists an $(M, d, 0)$ code with
\begin{align}
\log M  &= n h(\sX) -  n \log \sqrt {2 \pi e d}   
\notag \\
&
+ \log_2(2 \sqrt{\pi e}) \log n + \bigo{\log \log n} .
\end{align}

\begin{apxonly}
 Chebyshev: 
\begin{equation}
\Prob{|\|X\|^2 - \E{\|X\|^2}| > \sqrt {n \Var{\|X\|^2}} } \leq \frac 1 {n}
\end{equation}

\begin{align}
\E{\|X\|^2} &= n \E{\sX^2} \\
\Var{\|X\|^2} &= n \Var{\sX^2} 
\end{align}

so 
\begin{equation}
\Prob{ \|X\|^2 > n \left( \E{\sX^2} + \sqrt{\Var{\sX^2}} \right)} \leq \frac 1 n  
\end{equation}
\end{apxonly}

\end{proof}

It can be shown \cite{kostina2016slb} that \eqref{eq:C2} continues to hold  for finite alphabet sources. The $\bigo{1}$ lower bound on the third order term in  \eqref{eq:C2} presents an improvement for the cases where Shannon's lower bound is tight over the general $ - \frac 1 2 \log n + \bigo{1}$ lower bound shown in \cite{kostina2011fixed} .

We conclude this section with a result that provides an estimate of the speed of convergence to $\ushort R(d)$ for sources with memory. At this level of generality, even a first order asymptotic analysis is highly nontrivial, and no second-order results exist to date.  \thmref{thm:2ordermem} below shows that for a class of  sources with memory, the rate of approach to the rate-distortion function is of order $\frac 1 {\sqrt n}$. Even though \thmref{thm:2ordermem} does not specify the constant in front of $\frac 1 {\sqrt n}$, it presents a step forward in the notoriously difficult problem of quantifying the rate-distortion tradeoffs for sources with memory. \thmref{thm:2ordermem} is an easy implication of the approach developed in \secref{sec:slb} and \secref{sec:lattice}. 

\begin{thm}
 Let the random process $X_1, X_2, \ldots$ be such that the density $f_{X^n}$ is log-concave and $c_1 \|x^n\| + c_0 \sqrt n$-regular with some $c_0 \geq 0$, $c_1 \geq 0$, and that the expectation of the norm of $X^n$ is bounded as in \eqref{eq:normu}. For the compression of $X_1, X_2, \ldots$ under mean-square error distortion, it holds that 
\begin{equation}
 R(n,d,\epsilon) =  \ushort{R}(d) + \frac {q(\epsilon)} {\sqrt n}  + \bigo{\sqrt d} + \bigo{ \frac{\log n} n },  \label{eq:Rapprox0mem}
\end{equation}
where $\ushort{R}(d) $ is given in \eqref{eq:slbmse}, and 
\begin{equation}
-\sqrt{\frac 1 {1 - \epsilon}} \leq q(\epsilon) \leq \sqrt{\frac 1 {\epsilon}}. 
\end{equation}
Moreover, \eqref{eq:Rapprox0mem} is attained by lattice quantization. 
\label{thm:2ordermem}
\end{thm}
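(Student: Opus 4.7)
My plan is to adapt the proof of \thmref{thm:2order} to the non-IID setting by replacing Berry-Esse\'en with Chebyshev's inequality and using log-concavity as the substitute for independence. The enabling fact is the classical variance bound $\Var{-\log f_{X^n}(X^n)} \le n$ for log-concave random vectors in $\mathbb R^n$ (Bobkov--Madiman), which furnishes the second-moment control needed by Chebyshev and is responsible for the $1/\sqrt n$ rate in \eqref{eq:Rapprox0mem}. The presence of $\pm 1/\sqrt{\epsilon}$ and $\pm 1/\sqrt{1-\epsilon}$ in the bounds on $q(\epsilon)$, in place of $\Qinv{\epsilon}$, simply reflects the looseness of Chebyshev's tails compared to Gaussian ones.

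For achievability, I would apply \eqref{eq:agamma} with a Rogers-efficient lattice $\cC$ satisfying \eqref{eq:rogers}, whose covering radius is $r_\cC = \sqrt{nd}$. Under the $c_1\|x^n\| + c_0\sqrt n$-regularity assumption, \eqref{eq:vpw} yields $v_\cC(X^n) = c_1\|X^n\| + c_1\sqrt{nd} + c_0\sqrt n$; Markov's inequality on $\|X^n\|$ together with \eqref{eq:normu} allows me to pick $\gamma = \Theta(n\sqrt d)$ so that $\Prob{2r_\cC v_\cC(X^n) > \gamma} = o(1)$. For the remaining term, Chebyshev combined with $\Var{-\log f_{X^n}(X^n)} \le n$ gives $\Prob{-\log f_{X^n}(X^n) > h(X^n) + s\sqrt n} \le 1/s^2$. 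Setting $s = 1/\sqrt{\epsilon - o(1)}$ and expanding $\log V_\cC = n\log\sqrt{2\pi e d} + \bigo{\log n}$ via \eqref{eq:Vd}, \eqref{eq:Ana} and \eqref{eq:rogers} then produces an $(M,d,\epsilon)$ lattice code with $\log M \le h(X^n) - n\log\sqrt{2\pi e d} + \sqrt{n/\epsilon} + \bigo{n\sqrt d} + \bigo{\log n}$, which upon normalization gives $q(\epsilon) \le 1/\sqrt{\epsilon}$.

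For the converse, I would invoke \thmref{thm:C}(b) with $\mu$ the Lebesgue measure on $\mathbb R^n$, for which $\imath_\mu(X^n) = -\log f_{X^n}(X^n)$ and $\phi_\mu(d) = n\log\sqrt{2\pi e d}$ by Table \ref{tab:slbdiff}. Writing $Y_n \triangleq -\log f_{X^n}(X^n) - n\log\sqrt{2\pi e d}$, so that $\E{Y_n} = h(X^n) - n\log\sqrt{2\pi e d}$ and $\Var{Y_n} \le n$ by log-concavity, two-sided Chebyshev gives $\Prob{Y_n \ge \E{Y_n} - t\sqrt n} \ge 1 - 1/t^2$. Plugging this into \eqref{eq:Ca} with $\gamma = 2\log n$, any $(M,d,\epsilon)$ code must then obey $\log M \ge \E{Y_n} - \sqrt{n/(1-\epsilon - o(1))} - \bigo{\log n}$, which delivers $q(\epsilon) \ge -1/\sqrt{1-\epsilon}$.

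The main obstacle is the linear-in-$n$ variance bound: it is log-concavity, and only log-concavity, that turns the otherwise useless Chebyshev estimate into a $1/\sqrt n$ rate of convergence. Without it, $\Var{-\log f_{X^n}(X^n)}$ can grow as fast as $n^2$, and the third-order term degrades to $\bigo{1}$. Everything else in the IID proof carries over: the regularity assumption and \eqref{eq:normu} suffice to control the quantization-cell smoothness penalty via Markov's inequality, and the Rogers lattice handles the covering-efficiency gap, regardless of the dependence structure of the source.
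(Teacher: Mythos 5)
Your proposal is correct and follows essentially the same route as the paper: the converse applies the nonasymptotic bound \eqref{eq:Ca} with Lebesgue $\mu$ and Chebyshev's inequality, the achievability analyzes the lattice bound \eqref{eq:agamma} with a Rogers-efficient lattice and Chebyshev in place of Berry--Esse\'en, and both sides hinge on the log-concavity variance bound $\Var{\log f_{X^n}(X^n)} \le n$ (which the paper takes from Fradelizi et al.\ \cite[Theorem 2.3]{fradelizi2015optimal} rather than Bobkov--Madiman). The only cosmetic differences are your choice $\gamma = 2\log n$ versus the paper's $\tfrac 1 2 \log n$ in the converse and your use of Markov rather than Chebyshev to control $\|X^n\|$, neither of which changes the result.
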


\begin{proof}[Proof of the converse part]
We weaken \eqref{eq:Ca} by choosing $\gamma = \frac 1 2 \log n$ and letting $\mu$ be the Lebesgue measure to deduce that the parameters of any $(M, d, \epsilon)$ code must satisfy the inequality
\begin{align}
\epsilon 
\geq &~ 1 - \Prob{ \log \frac 1 {f_X(X)} - \phi(d) < \log M + \frac 1 2 \log n} - \frac 1 {\sqrt n}. \label{eq:cmem}
\end{align}
For
\begin{equation}
\log M =  h(X) - \phi(d) -  \sqrt{ \frac{\Var{f_X(X)}}{1 - \epsilon^\prime - \frac 1 {\sqrt n}}} - \frac 1 2 \log n,
\end{equation}
we observe that due to Chebyshev's inequality, the probability in the right side of \eqref{eq:cmem} is upper-bounded by $1 - \epsilon^\prime - \frac 1 {\sqrt n}$. We conclude that $\epsilon \geq \epsilon^\prime$, which implies the validity of the converse part of \eqref{eq:Rapprox0mem} when combined with the recent result of Fradelizi et al. \cite[Theorem 2.3]{fradelizi2015optimal},  which states that as long as $X \in \mathbb R^n$ has log-concave density, 
\begin{equation}
 \Var{\log f_X(X)} \leq n.  \label{eq:fradelizi}
\end{equation}

\end{proof}

\begin{proof}[Proof of the achievability part]
The proof mimics the proof of the achievability part of \thmref{thm:2order}, replacing the application of the Berry-Esse\'en theorem by  Chebyshev's inequality. Namely, due to \eqref{eq:normu}, \eqref{eq:cheb} continues to hold. By Chebyshev's inequality, 
\begin{equation}
 \Prob{ \log \frac 1 {f_X(X)} - h(X) \geq  \sqrt{ \frac{\Var{f_X(X)}}{\epsilon^\prime}} } \leq \epsilon^\prime. \label{eq:chebf}
\end{equation}
Letting
\begin{align}
\log M &= h(X) - \log V_\cC + \sqrt{ \frac{\Var{f_X(X)}}{\epsilon^\prime}}  + \gamma, \\
\epsilon^\prime &= \epsilon - \frac 1 n,
\end{align}
where $\gamma$ is chosen as in \eqref{eq:gammaa}, we conclude that the sum of both terms in \eqref{eq:agamma} does not exceed $\epsilon$. 
The proof is complete upon applying \eqref{eq:fradelizi}. 
\end{proof}

\section{Beyond MSE distortion}
\label{sec:beyondMSE}
\secref{sec:lattice} discussed lattices that are good for covering with respect the Euclidean norm, and accordingly, the asymptotic analysis in \secref{sec:asymp} focused on the mean-square error distortion. This section summarizes how to generalize those results to a wider class of distortion measures.  

We consider distortion measures of form 
\begin{equation}
\mathsf d(x, y) = \mathsf d( n^{- \frac 1 p} \| \mathsf W (x - y) \|_p) \label{eq:dnorm}
\end{equation}
where  $\mathsf W$ is an $n \times n$ invertible matrix, $\|\cdot \|_p$ is the $L^p$ norm in $\mathbb R^n$, $1 \leq p \leq \infty$, and $\mathsf d \colon \mathbb R^+ \mapsto \mathbb R^+$ is right-continuous. The scaling by $n^{-\frac 1 p}$ in \eqref{eq:dnorm} is chosen so that the distortion does not have a tendency to increase with increasing dimension $n$.

\begin{example}
Scaled weighted $L^p$ norm distortion fits the framework of \eqref{eq:dnorm}:  
 \begin{equation}
\mathsf d(x, y) = n^{- \frac s p} \| \mathsf W( x - y) \|^s_p \label{eq:dLps},
\end{equation}
where $s > 0$. 
Plugging $s = 2$ and $p = 2$ in \eqref{eq:dLps} one recovers the MSE distortion measure. 
An interesting special case is that of the $L^\infty$ norm, which corresponds to the distortion measure 
\begin{equation}
 \mathsf d(x^n, y^n) = \max_{1 \leq i \leq n} |x_i - y_i|^s. \label{eq:Linf}
\end{equation}
\end{example}

\todo{Rewrite the weighted MSE stuff by changing the source. rather than the distortion measure.} 
 \begin{example}
Weighted MSE distortion measure also fits the framework of \eqref{eq:dnorm}: 
 \begin{equation}
\mathsf d(x, y) =  \frac 1 n \|  \mathsf W(x - y) \|_2^2 \label{eq:dnormMSEW}.
\end{equation}
\end{example}



Defining the nearest-neighbor quantizer and the lattice covering radius in terms of weighted (by $\mathsf W$) $L^p$ norm, rather than the Euclidean norm, one can generalize Section \ref{sec:lattice} to distortion measures of type \eqref{eq:dnorm}. 
The maximum distortion is related to the (weighted $L^p$) covering radius as
\begin{align}
r_\cC &= n^{\frac 1 p} r(d),\\
r(d) &\triangleq  \inf\{ r \geq 0 \colon \sd(r) \leq d\} \label{eq:rn}.
\end{align}
If $\sd \colon \mathbb R^+ \mapsto \mathbb R^+$ is invertible, then  simply $r(d) = \sd^{-1}\left(  d\right)$. For example,  the distortion measure in \eqref{eq:dLps} corresponds to $\mathsf d\left( r\right) = r^s$; therefore, $r(d) = \sqrt[s]{d}$. The lattice cell volume can be expressed as
\begin{align}
 \log V_\cC 
 &= n \log  r_\cC + \log b_{n,p} + \log |\det \mathsf W| - n \log \rho_{\cC}, \label{eq:Vdp}
\end{align}
where $b_{n, p}$ is the volume of a unit $L^p$ ball:
\begin{equation}
b_{n, p} \triangleq \frac{ \left( 2 \Gamma \left(\frac 1 p +1\right) \right)^n }{ \Gamma \left(\frac n p+1 \right) }. \label{eq:bnp}
\end{equation}

A curious special case is that of $L^\infty$ norm, which corresponds to the distortion measure in \eqref{eq:Linf}: since an $L^\infty$ ball is simply a cube, the  cubic lattice quantizer attains the best covering efficiency $\rho_{\cC} = 1$.

Substituting \eqref{eq:Vdp} into \eqref{eq:renyig1}, we express the entropy at the output of the weighted $L_p$ quantizer based on lattice $\cC$ as
\begin{align}
 H\left( q_\cC(X) \right) &= h(X) - n \log  r_\cC - \log b_{n,p} - \log |\det \mathsf W| 
 \notag\\
 &+ n \log \rho_{\cC} +  D(X \| X_\cC). \label{eq:hqup}
\end{align}

By Stirling's approximation, as $n \to \infty$, \eqref{eq:bnp} expands as
\begin{align}
 \log  b_{n, p} &=  n \log c_p - \frac n p \log n  - \frac 1 {2} \log n+ \bigo{1 } \label{eq:anp},\, p < \infty,\\
 \log  b_{n, \infty} &=  n \log c_\infty = 2 n, \label{eq:anpinf}
\end{align}
where 
\begin{align}
c_p &\triangleq 2 \Gamma \left( \frac 1 p + 1 \right) (p e)^\frac 1 p, \quad p < \infty,\\
c_\infty &\triangleq \lim_{p \to \infty} c_p = 2.
\end{align}

To study the covering efficiency of lattices with respect to $\sd$, we invoke the following result of Rogers to complement Rogers' Theorem \ref{thm:rogers}:   
\begin{thm}[{Rogers \cite[Theorem 5.8]{rogers1964packing}, generalization of Theorem \ref{thm:rogers}}]
For each $n \geq 3$, there exists an $n$-dimensional lattice $\cC_n$  with covering efficiency (with respect to any norm)
\begin{equation}
n \log \rho_{\cC_n} \leq \log n \left( \log_2 n + c \log \log n \right)  ,
\end{equation}
where $c$ is a constant.
\label{thm:rogers2}
\end{thm}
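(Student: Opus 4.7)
The plan is to adapt Rogers' classical random-lattice averaging argument to the case where the unit ball is an arbitrary centrally symmetric convex body $B$ (the unit ball of the given norm). The covering radius $r_{\cC}$ of $\cC$ with respect to the norm is the smallest $r$ for which the translates $\{c + r B : c \in \cC\}$ cover $\mathbb{R}^n$, and the covering density is $\theta_\cC = r_\cC^n \vol(B)/V_\cC = \rho_\cC^n$. Hence the target inequality is equivalent to exhibiting a lattice with $\log \theta_{\cC_n} \leq \log n(\log_2 n + c \log \log n)$.

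The first ingredient I would use is the Siegel / Minkowski--Hlawka mean value theorem: for any measurable $S \subset \mathbb{R}^n$ of finite volume not containing $0$, averaging over unimodular random lattices of covolume $V$ gives
\begin{equation}
\mathbb{E}\,\bigl| (\Lambda \setminus \{0\}) \cap S \bigr| \;=\; \vol(S)/V.
\end{equation}
From this, for a lattice of covolume $V$ and a scaled body $rB$, the expected fraction of a fundamental domain $\mathcal F_\Lambda$ \emph{not} covered by $\Lambda + rB$ is bounded (after a union-and-integration calculation that uses only central symmetry of $B$ and the ensuing Fubini-type identity $\int_{\mathcal F} \1\{x \notin \Lambda + rB\}\,dx = \int \1\{x \in \mathcal F\}\,\1\{(\Lambda-x) \cap rB = \emptyset\}\, dx$) by a quantity of the form $\exp(-\theta) V$, where $\theta = r^n\vol(B)/V$ is the (would-be) covering density. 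This is precisely Rogers' key inequality, and what matters is that it depends on $B$ only through $\vol(B)$.

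Next I would run Rogers' iterative ``covering by layers'' scheme. Starting from a lattice $\Lambda_0$ whose expected uncovered measure is $\le e^{-\theta}V$, one shows that by translating and superimposing $O(\log n)$ such lattices one can drive the uncovered measure below $V/n^{\omega(1)}$, at which point a final deterministic shrinkage yields a genuine covering. Choosing $\theta \asymp \log_2 n$ at each stage and taking a union over the $\log n$ stages produces the $\log n \cdot \log_2 n$ main term; the residual $c \log \log n$ accounts for the boundary/rounding losses and the cost of the final deterministic step. This gives the existence of a single lattice $\cC_n$ with $\log \theta_{\cC_n} \leq \log n(\log_2 n + c \log \log n)$, which is the claim.

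The main obstacle is verifying that each quantitative step in Rogers' Euclidean argument survives when $B$ is replaced by an arbitrary centrally symmetric convex body. The delicate points are (i) the Fubini identity mentioned above, which uses only $B = -B$; (ii) the fact that $B + (-B) = 2B$ has volume $2^n \vol(B)$, which is the only place dimension-dependent constants could degrade; and (iii) the existence of unimodular lattices with controlled shortest vector, needed so that $\Lambda \cap B = \{0\}$ in the averaging. All three carry over with the same constants as in the Euclidean case, so the norm-independence in the theorem statement is honest and no additional dependence on the geometry of $B$ beyond its volume enters the bound.
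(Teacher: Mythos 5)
The paper does not prove Theorem~\ref{thm:rogers2}; it is cited verbatim from Rogers' book \cite[Theorem 5.8]{rogers1964packing} and used as a black box (exactly as Theorem~\ref{thm:rogers} is). So there is no internal proof to compare against; you are attempting to reconstruct Rogers' argument from scratch.

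Your reduction $n\log\rho_{\cC}=\log\theta_{\cC}$ is correct, and the Siegel/Minkowski--Hlawka mean-value identity does indeed depend on the convex body $B$ only through its volume, so the norm-independence of the starting estimate is honest. The genuine gap is in the iterative step. You write that ``by translating and superimposing $O(\log n)$ such lattices one can drive the uncovered measure below $V/n^{\omega(1)}$,'' but a union $\Lambda\cup(\Lambda+t_1)\cup\cdots\cup(\Lambda+t_k)$ of translates of a lattice is not itself a lattice unless the $t_i$ are chosen to be a full set of coset representatives of $\Lambda$ inside some refinement $\Lambda'\supset\Lambda$ of index $k+1$. That constraint is severe: you can no longer choose the translates freely (or ``at random'' in the uniform sense that your Poisson-type heuristic $\mathbb{E}[\text{uncovered}]\le e^{-\theta}V$ requires), and the whole calculation of the expected uncovered measure must be redone inside the much smaller family of admissible coset systems. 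This is precisely where Rogers' proof of Theorem 5.8 departs from the ``covering by layers'' technique used for \emph{non-lattice} coverings (Theorems 3.2--3.5 of the same book): the lattice version proceeds by a careful inductive refinement that preserves lattice structure at every step, and that is the source of the weaker (quasi-polynomial rather than polynomial) density bound compared with the Euclidean-ball Theorem~\ref{thm:rogers}. As written, your sketch would establish the existence of a low-density covering by a union of $O(\log n)$ lattice cosets, which is strictly weaker than the existence of a single low-density lattice covering and does not imply the stated bound on $\rho_{\cC_n}$.

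A secondary concern: the $e^{-\theta}V$ estimate for the uncovered volume is an approximation that pretends the lattice translates cover independently. For a genuine lattice the translates are fully correlated (periodic), and the Siegel-averaged second-moment calculation that Rogers actually carries out controls the variance explicitly; your sketch elides this, and it is not obviously recoverable once you also impose the coset-representative constraint above.
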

When particularized to the Euclidean norm, Theorem \ref{thm:rogers2} presents a weakened version of Theorem~\ref{thm:rogers}. 

It follows from \eqref{eq:hqup}, \eqref{eq:anp}, \eqref{eq:anpinf} and \thmref{thm:rogers2} that for $1 \leq p \leq \infty$, the lattice $d$-entropy with respect to the distortion measure $\sd$ satisfies, as $n \to \infty$, 
\begin{align}
 \mathbb L_{X}(d) &\geq  h(X) - n  \log r(d) - n \log c_p - \log |\det \mathsf W|
 \notag\\
  &+ \frac 1 2 \log n + \bigo{1}  \label{eq:Lnpl},\\
  \mathbb L_{X}(d) &\leq  h(X) - n  \log r(d) - n \log c_p - \log |\det \mathsf W| \notag\\
  &+ D(X \| X_\cC) + \bigo{\log n  }  \label{eq:Lnpu}.
\end{align}

Plugging $r(d) = \sqrt{d}$, $c_2 = \sqrt {2 \pi e}$ and $\mathsf W = \mathsf I$ into \eqref{eq:Lnpl} and \eqref{eq:Lnpu}, one recovers the corresponding bounds for the mean-square error distortion, namely, \eqref{eq:Lnl} and \eqref{eq:Lnu}. 

It is enlightening to compare \eqref{eq:Lnpl} with Shannon's lower bound. 
For the distortion measure in \eqref{eq:dLps}, a direct calculation using Table \ref{tab:slbdiff} shows that Shannon's lower bound  is given by, for  $n \to \infty$, 
\begin{align}
 \ushort {\mathbb R}_{X} (d) &= h(X) + \frac n s \log \frac 1 d - \frac n p \log n - \log b_{n, p}  + \frac n s \log \frac{n }{ s e} 
 \notag\\
 &- \log \Gamma\left( \frac n s + 1\right) - \log |\det \mathsf W| \\
 &= h(X)  + \frac n s \log \frac 1 d  - n \log c_p - \log |\det \mathsf W| + \bigo{1}, \label{eq:uRs}
\end{align}
which up to the terms of order $\frac 1 2 \log n + \bigo{1}$ is the same as a particularization of \eqref{eq:Lnpl} to the distortion in \eqref{eq:dLps}. 
More generally, 
if $\sd(\cdot)$ is differentiable at $0$ and $0 < \sd^\prime(0)< \infty$, then by Taylor's approximation,
\begin{equation}
r(d) = \frac{d}{\sd^\prime(0)} + \smallo{d} \label{eq:r0}.
\end{equation}
If $\sd^\prime(0) = \ldots =  \sd^{(s-1)}(0) = 0$, and $0 < \sd^{(s)}(0) < \infty$, then
\begin{equation}
 r(d) = \sqrt[s]{\frac{ s!\, d}{\sd^{(s)}(0)}} + \smallo{\sqrt[s] d} \label{eq:r1}.
\end{equation}
Suppose further that $\mathsf d(\cdot)$ in the right side of \eqref{eq:dnorm} satisfies Linkov's regularity conditions \eqref{lin:a}--\eqref{lin:c}. Then, \cite[Corollaries 1, 2]{linkov1965evaluation} imply that
\begin{align}
\ushort {\mathbb R}_{X} (d) &= h(X) +  \frac n s \log \frac {\sd^{(s)}(0)} {s! d} - n \log c_p - \log |\det \mathsf W| \notag\\
&+ n\, \smallo{1} + \bigo{1}, \label{eq:linkova} 
\end{align}
where $\smallo{1}$ denotes a term that vanishes (uniformly in $n$) as $d \to 0$, and $\bigo{1}$ denotes a term that is bounded by a constant. Again, up to the remainder terms, this coincides with \eqref{eq:Lnpl}.

Next, we study the behavior of \eqref{eq:Lnpu}, which requires the following notion of regularity with respect to a weighted $L_q$ distance: a differentiable probability density function $f_X$ is $v$-regular if
\begin{equation}
\| \mathsf W^{-1} \nabla f_X(x)\|_q \leq v(x) f_X(x),  \qquad \forall x \in \mathbb R^n. 
\label{eq:pwregq}
\end{equation}
\thmref{thm:hq} generalizes as follows:
\begin{thm}
Let $1 \leq p \leq \infty$ and let $\frac 1 p + \frac 1 q = 1$. Let $X$ be a random variable with $h(X) > -\infty$ and $v$-regular density (according to \eqref{eq:pwregq}). Let $\mathcal C$ be a lattice in $\mathbb R^n$. Then the information random variable and the entropy at the output of lattice quantizer formed for the distortion in \eqref{eq:dnorm} can be bounded as \eqref{eq:iqv} and \eqref{eq:hq}, respectively, with $V_\cC$ in \eqref{eq:Vdp}. 
Furthermore, if $v(x) = v(\|\mathsf W x\|_p)$ is convex and nondecreasing, then \eqref{eq:iqv} and \eqref{eq:hq} can be strengthened by replacing \eqref{eq:vv} with
\begin{equation}
v_{\cC}(x) =   \frac 1 2 v(\|\mathsf W x\|_p) + \frac 1 2 v(\|\mathsf W x\|_p + 2 r_\cC) \label{eq:v2p},
\end{equation}
and  if $v(x) = c_1 \|\mathsf W x\|_p + c_0$, then \eqref{eq:v2p} particularizes as 
\begin{equation}
v_{\cC}(x) = c_1 \|\mathsf W x\|_p + c_1 r_\cC + c_0 \label{eq:vpwp}.
\end{equation}
\label{thm:hqp}
\end{thm}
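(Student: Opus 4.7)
The proof will replicate the architecture of \thmref{thm:hq} with two structural adjustments dictated by the weighted $L_p$ geometry: Cauchy–Schwarz is upgraded to Hölder's inequality for conjugate exponents $p,q$, and the Euclidean covering radius is replaced by the $\|\mathsf W \cdot\|_p$ covering radius that underlies \eqref{eq:Vdp}.

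First I would redo the pointwise density-variation lemma. Starting from the identity
$\log f_X(a) - \log f_X(b) = \log e \int_0^1 \bigl(\nabla f_X(\gamma(t))/f_X(\gamma(t)),\, a-b\bigr)\, dt$
with $\gamma(t) = ta + (1-t)b$, I would factor the scalar product as $(\nabla f_X, a-b) = (\mathsf W^{-1}\nabla f_X,\, \mathsf W(a-b))$ (after the transpose/inverse identification that is harmless because the distortion \eqref{eq:dnorm} depends on $\mathsf W$ only through $\|\mathsf W \cdot\|_p$), and invoke Hölder's inequality to bound the integrand by $\|\mathsf W^{-1}\nabla f_X(\gamma(t))\|_q\,\|\mathsf W(a-b)\|_p\,\log e$. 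The $v$-regularity condition \eqref{eq:pwregq} then replaces the first factor by $v(\gamma(t)) f_X(\gamma(t))$, yielding the exact analogue of \eqref{eq:praa} with $\|a-b\|$ promoted to $\|\mathsf W(a-b)\|_p$.

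Next I would invoke this estimate at $a=x$ and $b = q_\cC(x) - u_c$, the representative supplied by the Riemann mean-value identity \eqref{eq:uc}. Because both points lie inside a single Voronoi cell defined by the weighted $L_p$ distance, $\|\mathsf W(a-b)\|_p \leq 2 r_\cC$, so the density ratio is controlled by $2 r_\cC v_\cC(x)\log e$. Inserting this into the same computation \eqref{eq:iq0}–\eqref{eq:iqv0} that led to \eqref{eq:iqv}, now with $V_\cC$ given by \eqref{eq:Vdp} in place of \eqref{eq:Vd}, proves \eqref{eq:iqv} verbatim, and \eqref{eq:hq} follows by taking expectations.

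For the strengthening under the convex, nondecreasing assumption on $v(\|\mathsf W x\|_p)$, I would apply Jensen's inequality to the integral on the interval $[0,1]$ exactly as in \eqref{eq:Duconv1}, then upper-bound $v(\|\mathsf W b\|_p)$ via the triangle inequality $\|\mathsf W b\|_p \leq \|\mathsf W x\|_p + 2 r_\cC$ and the monotonicity of $v$, obtaining \eqref{eq:v2p}; the affine particularization \eqref{eq:vpwp} is then a one-line substitution. The only genuine obstacle is a bookkeeping one: aligning the transpose/inverse conventions so that the regularity bound \eqref{eq:pwregq} pairs correctly against $\mathsf W(a-b)$ through Hölder's inequality; once that identification is in place, every inequality in the MSE proof carries over with Cauchy–Schwarz simply replaced by Hölder and Euclidean lengths replaced by weighted $L_p$ lengths.
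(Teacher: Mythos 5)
Your proposal coincides with the paper's proof: it upgrades Cauchy--Schwarz to H\"older's inequality in the density-variation estimate, replaces the Euclidean covering radius with the weighted $L^p$ one used in \eqref{eq:Vdp}, and then carries over the remaining steps (Riemann mean-value point, $2r_\cC$ bound, Jensen/triangle-inequality strengthening) verbatim from \thmref{thm:hq}. The one thing worth noting is that the transpose/inverse bookkeeping you flag is real: $(\mathsf W^{-1}\nabla f_X,\,\mathsf W(a-b))=(\nabla f_X,\,a-b)$ requires $\mathsf W$ symmetric, so strictly the regularity condition should be stated with $(\mathsf W^{\mathsf T})^{-1}\nabla f_X$ in the $\|\cdot\|_q$ norm -- a convention issue present in the paper itself that does not affect the substance of the argument.
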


\begin{proof}
 The reasoning leading up to \eqref{eq:pra} is adjusted as:  
\begin{align}
&~ \left|  \log f_X(a) - \log f_X(b)\right| \notag \\
 =&~   \left| \int_0^1 \left( \mathsf W^{-1} \nabla \log f_X(t a + (1 - t)b), \mathsf W (a - b) \right) dt \right|\\
 \leq&~  \|\mathsf W (a - b) \|_p \log e  \int_0^1 v( t a + (1 - t)b ) dt  \label{eq:praap}\\
 \leq&~ \max_{0 \leq t \leq 1}  v(t a + (1 - t)b) \| \mathsf W (a - b) \|_p  \log e \label{eq:prap},
\end{align}
where \eqref{eq:praap} is by H\"older's inequality. The proof of \eqref{eq:v2p} and \eqref{eq:vpwp} is identical to the proof of \eqref{eq:v2} and \eqref{eq:vpw}.
\end{proof}

We are now prepared to state the generalizations of the asymptotic results in \secref{sec:asymp} to non-MSE distortion measures. 

Theorem \ref{thm:Ldlim} generalizes to the distortion measure in \eqref{eq:dnorm} as follows.

\begin{thm}
 Consider a random process $X_1, X_2, \ldots$ and a sequence of distortion measures given by  \eqref{eq:dnorm} with $\mathsf W = \mathsf W_n$ such that the limit
\begin{equation}
\omega \triangleq \lim_{n \to \infty} \frac 1 n \log |\det \mathsf W_n|
\end{equation}
exists and finite.

The lattice rate-distortion function satisfies, 
 \begin{align}
h - \log r(d)  - \log c_p - \omega 
 &\leq L(d), \label{eq:Llp}
 \end{align}
 and $h$ is defined in \eqref{eq:h}. 
Furthermore, suppose that $p \geq 2$ and that the density $f_{X^n}$ is $c_1 \| \mathsf W_n x^n\|_p + c_0 n^{\frac 1 p}$-regular with some $c_0 \geq 0$, $c_1 \geq 0$, and that there exists a constant $\alpha > 0$ such that
\begin{equation}
 \E{  \|\mathsf W_n X^n\|_p } \leq n^{\frac 1 p} \alpha. \label{eq:normup}
\end{equation}
Then, as $d \to 0$, the lattice rate-distortion function is upper bounded by,
 \begin{align}
L(d) &\leq  h -  \log r(d) - \log c_p - \omega + \bigo{r(d)} .\label{eq:L0p}
\end{align}

\label{thm:Ldlimp}
\end{thm}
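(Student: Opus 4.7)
The plan is to mirror the two-sided argument of Theorem \ref{thm:Ldlim}, replacing Euclidean covering bounds with their $L_p$ counterparts from \eqref{eq:Lnpl}--\eqref{eq:Lnpu} and Theorem \ref{thm:hqp}, and keeping careful track of how the weighting matrix $\mathsf W_n$ and the exponent $p$ enter the asymptotics.

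\textbf{Lower bound.} This follows directly from \eqref{eq:Lnpl}. Dividing by $n$, using the hypothesis $\tfrac{1}{n}\log|\det\mathsf W_n|\to\omega$ and the definition $h=\limsup_{n\to\infty}\tfrac{1}{n}h(X^n)$, the terms $\tfrac{1}{2n}\log n$ and $\bigo{1/n}$ vanish in the limit, and we obtain \eqref{eq:Llp}. This step is essentially an exercise in bookkeeping and carries no real difficulty.

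\textbf{Upper bound.} I would start from \eqref{eq:Lnpu}, so that it suffices to exhibit a lattice $\cC_n$ satisfying both the Rogers-type covering bound in Theorem \ref{thm:rogers2} (which contributes only $\bigo{\log n}$) and a controlled divergence $D(X^n\|X^n_{\cC_n})$. By hypothesis, $f_{X^n}$ is $(c_1\|\mathsf W_n x^n\|_p + c_0 n^{1/p})$-regular, so the final clause of Theorem \ref{thm:hqp} (equation \eqref{eq:vpwp}) yields the pointwise bound
\begin{equation}
v_{\cC_n}(x^n) \;\leq\; c_1\|\mathsf W_n x^n\|_p + c_1 r_{\cC_n} + c_0 n^{1/p}.
\end{equation}
Combined with the divergence estimate $D(X^n\|X^n_{\cC_n})\leq 2 r_{\cC_n}\E{v_{\cC_n}(X^n)}\log e$ (the non-MSE analogue of \eqref{eq:divu}), the moment assumption \eqref{eq:normup}, and the identity $r_{\cC_n}=n^{1/p}r(d)$, this gives
\begin{equation}
D(X^n\|X^n_{\cC_n}) \;\leq\; 2 n^{2/p}\, r(d)\,\bigl(c_1\alpha + c_1 r(d) + c_0\bigr)\log e.
\end{equation}

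\textbf{Role of $p\geq 2$.} This is where the hypothesis $p\geq 2$ enters, and it is the only subtle point in the argument. Dividing the previous inequality by $n$ produces a prefactor $n^{2/p-1}$, which is bounded above by $1$ precisely when $p\geq 2$; for $p>2$ it in fact vanishes as $n\to\infty$. Thus $\tfrac{1}{n}D(X^n\|X^n_{\cC_n})$ is uniformly bounded in $n$ by $\bigo{r(d)}$ as $d\to 0$. Substituting this, together with the Rogers covering bound and $\tfrac{1}{n}\log|\det\mathsf W_n|\to\omega$, into \eqref{eq:Lnpu}, dividing by $n$, and taking $\limsup_{n\to\infty}$ gives \eqref{eq:L0p}.

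The only genuine obstacle is to ensure that the divergence bound is uniform in $n$ at fixed small $d$; everything else is transcription of the MSE argument with $\sqrt d$ replaced by $r(d)$ and the Euclidean ball volume replaced by $\log b_{n,p}$ via \eqref{eq:anp}--\eqref{eq:anpinf}. If one wished to accommodate $1\leq p<2$, the $n^{2/p-1}$ blow-up would force a strengthening of either the regularity condition or the moment assumption \eqref{eq:normup}, which is why the theorem is stated with $p\geq 2$.
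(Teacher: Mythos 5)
Your proposal is correct and follows essentially the same route as the paper: lower bound from \eqref{eq:Lnpl}, upper bound by plugging the $L_p$-regularity divergence estimate (from Theorem~\ref{thm:hqp} and the analogue of \eqref{eq:divu}) into \eqref{eq:Lnpu}, normalizing by $n$, and taking the $\limsup$. The paper's own proof is a one-liner that simply cites the relevant displays; you have filled in exactly the bookkeeping it elides, including the correct bound
$D(X^n\|X^n_{\cC_n}) \le 2 n^{2/p}\, r(d)\,(c_1\alpha + c_1 r(d) + c_0)\log e$
and the observation that $p\geq 2$ ensures $n^{2/p-1}\leq 1$ so that $\tfrac{1}{n}D(X^n\|X^n_{\cC_n})=\bigo{r(d)}$ uniformly in $n$ --- a point the paper leaves implicit. (You also correctly use \eqref{eq:Lnpu} rather than the MSE version \eqref{eq:Lnu} that the paper's proof text references, apparently a typo there.) One minor quibble: the Rogers-type bound for general norms (Theorem~\ref{thm:rogers2}) gives $n\log\rho_{\cC_n}=\bigo{(\log n)^2}$, not $\bigo{\log n}$ as you wrote; since $(\log n)^2/n\to 0$ this does not affect the limsup, but the order is slightly worse than in the Euclidean case.
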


\begin{proof}
The lower bound in \eqref{eq:Llp} follows from \eqref{eq:Lnpl}. To show \eqref{eq:L0p}, we apply the bound
\begin{align}
&~ D(X \| X_\cC) \\
\leq&~ 2 n^{\frac 1 p} r(d) \left( c_1\E{\|\mathsf W_n X\|_p} + c_1 n^{\frac 1 p} r(d) + c_0 n^{\frac 1 p}\right) \log e, \notag
\end{align}
which is a particularization of \eqref{eq:divu}, to 
\eqref{eq:Lnu}, we normalize by $n$ and we take a limsup in $n$.  
\end{proof}

Using \eqref{eq:uRs}, we see that for the distortion measure in \eqref{eq:dLps}, Shannon's lower bound is given by 
\begin{equation}
\ushort R(d) = h + \frac 1 s \log \frac 1 d  - \log c_p - \omega \label{eq:slbs},
\end{equation}
which coincides with \eqref{eq:Llp}. More generally, if $\sd(\cdot)$ satisfies Linkov's conditions \eqref{lin:a}--\eqref{lin:c}, observe using  \eqref{eq:r0}, \eqref{eq:r1} and \eqref{eq:linkova} that 
\begin{equation}
 \ushort R(d) =  h - \log r(d)  - \log c_p - \omega + \smallo{1} ,\quad d \to 0. \label{eq:L0cor2}
\end{equation}
It follows from \thmref{thm:Ldlimp} that for a large class of distortion measures, the lattice rate-distortion function approaches the Shannon lower bound as $d \to 0$:
\begin{equation}
 L(d) =  \ushort R(d) + \smallo{1} ,\quad d \to 0. \label{eq:L0cor2}
\end{equation}

\thmref{thm:2order} generalizes as follows. 
\begin{thm}[Generalization of Theorem \ref{thm:2order}]
 Let $p \geq 2$. Assume that the density of $\sX$ satisfies, 
\begin{equation}
|f^\prime_\sX(\mathsf x)| \leq  (c_1 |\mathsf x| + c_0) f_\sX(\mathsf x), \quad \forall x \in \mathbb R, \label{eq:reg1}
\end{equation}
where $c_0 \geq 0$ and $c_1 \geq 0$, and that 
   $\E{| \log f_{\sX}(\sX)|^3} < \infty$ and $\E{|\sX|^{2q}} < \infty$, where $\frac 1 p + \frac 1 q = 1$. Consider a sequence of distortion measures of type \eqref{eq:dnorm} with $\sd(\cdot)$ satisfying Linkov's conditions \eqref{lin:a}--\eqref{lin:c}, and $\mathsf W = \mathsf W_n$ is such that
\begin{equation}
\frac 1 n \log \left | \det \mathsf W_n \right| = \omega  + \bigo{ \frac{\log n} n },
\end{equation}
for some $\omega \in \mathbb R$, and that the minimum singular value of $\mathsf W_n$ is bounded below by some $\sigma > 0$.  
 For the compression of the source consisting of i.i.d. copies of $\sX$ under such distortion measure, it holds that 
\begin{equation}
 R(n,d,\epsilon) =  \ushort{R}(d) + \sqrt{\frac {\ushort{\mathcal V}} n }\Qinv{\epsilon} + \smallo{1} + \bigo{ \frac{\log n} n },  \label{eq:Rapprox1}
\end{equation}
where 
and $\ushort{R}(d)$ and $\ushort{\mathcal V}$  are given by the mean and the variance of
\begin{equation}
\ushort \jmath_{\sX}(\sX, d) = \log \frac 1 {f_\sX(\sX)} - \log r(d) - \log c_p - \omega, 
\end{equation}
respectively, 
and $\smallo{1}$ denotes a term that vanishes uniformly in $n$ as $d \to 0$. For $\mathsf d$ in \eqref{eq:dLps}, $\smallo{1}$ can be refined to $\bigo{\sqrt[s]{d}}$. Lattice quantization attains \eqref{eq:Rapprox1}. 
\label{thm:2orderg}
\end{thm}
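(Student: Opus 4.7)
The plan is to mimic the two-part proof of \thmref{thm:2order}, replacing Euclidean balls by weighted $L^p$ balls and using \thmref{thm:hqp} in place of \thmref{thm:hq}. The converse rests on the nonasymptotic bound \eqref{eq:C} with $\mu$ the Lebesgue measure, while the achievability rests on a lattice that is efficient for covering in the weighted $L^p$ norm (existence guaranteed by Rogers' \thmref{thm:rogers2}). The second-order expansion will emerge from a Berry--Ess\'een analysis of $\log 1/f_X(X)$ for the i.i.d.\ product; the passage from the precise Shannon lower bound to the advertised form $\ushort R(d)=h(\sX)-\log r(d)-\log c_p-\omega$ is handled by Linkov's asymptotics \eqref{eq:linkova}, which supplies the $\smallo{1}$ remainder uniform in $n$ as $d\to 0$, and by \eqref{eq:r1} which sharpens it to $\bigo{\sqrt[s]{d}}$ in the polynomial case \eqref{eq:dLps}.

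For the converse, observe that for $\mu$ Lebesgue the distortion-$d$ ball $\{x\colon \sd(x,y)\leq d\}$ has volume $n^{n/p}r(d)^n b_{n,p}/|\det \mat W_n|$ by the definition \eqref{eq:rn}. Combined with the Stirling expansion \eqref{eq:anp} and the hypothesis $\tfrac1n\log|\det\mat W_n|=\omega+\bigo{\log n/n}$, this yields
\begin{align}
\log\sup_{y}\mu[\sd(X,y)\leq d]=n\bigl(\log r(d)+\log c_p+\omega\bigr)-\tfrac12\log n+\bigo{\log n}. \notag
\end{align}
The Neyman--Pearson function $\beta_{1-\epsilon}(P_X,\mu)$ admits the Berry--Ess\'een-type expansion \eqref{eq:beta2order} (valid whenever $\E{|\log f_\sX(\sX)|^3}<\infty$, as here); subtracting the two expansions inside \eqref{eq:C} and invoking \eqref{eq:linkova} to identify $h(\sX)-\log r(d)-\log c_p-\omega$ with $\ushort R(d)+\smallo{1}$ delivers the converse half of \eqref{eq:Rapprox1}.

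For the achievability, pick $\cC_n$ meeting \thmref{thm:rogers2} so that $n\log\rho_{\cC_n}=\bigo{\log n\log\log n}$, and feed its output to a near-optimal lossless coder, reducing the problem via \eqref{eq:infoa} to controlling the tail of $\imath(q_{\cC_n}(X))$. \thmref{thm:hqp} together with \eqref{eq:Vdp} and \eqref{eq:anp} gives the pointwise bound
\begin{align}
\imath(q_{\cC_n}(X))\leq\log\tfrac{1}{f_X(X)}-n\log r(d)-n\log c_p-\log|\det\mat W_n|+2n^{1/p}r(d)v_{\cC_n}(X)\log e+\bigo{\log n}. \notag
\end{align}
Using \eqref{eq:reg1} and Minkowski/H\"older in the product-measure calculation of \propref{prop:regprod} (adapted to the $L^q$ norm, $1/p+1/q=1$), $f_{X^n}$ is shown to be $(c_1\|\mat W_n x^n\|_p+c_0 n^{1/p})$-regular modulo constants depending only on $p,\sigma$, so \eqref{eq:vpwp} applies. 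Splitting the tail by \eqref{eq:agamma} with $\gamma=n\bigo{r(d)}$, the high-norm term is killed by Chebyshev using $\E{|\sX|^{2q}}<\infty$ and the minimum-singular-value bound $\sigma>0$, while the remaining probability is evaluated by Berry--Ess\'een applied to $\sum_i \log 1/f_\sX(\sX_i)$, whose third moment is finite by assumption. Setting $\epsilon'=\epsilon-B/\sqrt n-\bigo{1/n}$ as in the MSE case closes the argument.

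The main obstacle is bookkeeping two remainders simultaneously: the $\smallo{1}$ from Linkov's expansion must be shown uniform in $n$, and the regularity constants of $f_{X^n}$ must scale with $n$ in a way compatible with $r(d)\to 0$ so that the $n^{1/p}r(d)\E{v_{\cC_n}(X)}$ penalty remains $n\cdot\smallo{1}$. The bound $\E{\|\mat W_n X^n\|_p}\leq n^{1/p}(\E{|\sX|^p})^{1/p}\cdot\|\mat W_n\|_{\mathrm{op}}$ and the opposite side control via $\sigma$ make these penalties absorbable into the $\smallo{1}$ term; in the polynomial case \eqref{eq:dLps} one has $r(d)=\sqrt[s]{d}$ exactly and no Linkov approximation is needed, so the remainder tightens to $\bigo{\sqrt[s]{d}}$ as claimed.
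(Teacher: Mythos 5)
Your proposal takes essentially the same route as the paper: the paper proves Theorem~\ref{thm:2orderg} by observing that, via Proposition~\ref{prop:regprod} and the minimum-singular-value bound, the i.i.d.\ product density is regular in the weighted-$L^q$ sense of \eqref{eq:pwregq}, so that Theorem~\ref{thm:hqp} replaces Theorem~\ref{thm:hq} and the converse/achievability machinery of Theorem~\ref{thm:2order} (weighted $L^p$ ball volume in \eqref{eq:C}, Rogers covering, Chebyshev with the $2q$-th moment, Berry--Ess\'een, and Linkov's asymptotics with the $\bigo{\sqrt[s]{d}}$ refinement for \eqref{eq:dLps}) carries over exactly as you spell out. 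The only nit is the sign bookkeeping on $\omega$ in your ball-volume display versus your own (correct) volume formula, which simply mirrors the $\log\left|\det \mathsf W_n\right|$ convention the paper itself uses throughout Section~\ref{sec:beyondMSE} and does not change the argument.
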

\begin{proof}
The only observation required for the proof of \thmref{thm:2order} to apply is the following. If $X$ is a vector of $n$ i.i.d. copies of $\sX$, by \eqref{eq:reg1} and \propref{prop:regprod} it holds that \footnote{Note that \propref{prop:regprod} applies to any norm.} 
\begin{equation}
  \left\| \nabla f_X(x)\right\|_q \leq \left( c_1 \|x\|_q + c_0 n^{\frac 1 q} \right)  f_X(x).
\end{equation}
It follows that
\begin{equation}
 \left\| \mathsf W_n \nabla f_X(x)\right\|_q \leq (c_1 \|\mathsf W_n x\|_q + c_0 n^{\frac 1 q}) \sigma^{-2} f_X(x),
\end{equation}
that is, $X$ has a regular density in the sense of \eqref{eq:pwregq}, and \thmref{thm:hqp} can be applied in the same manner  \thmref{thm:hq} is used in the proof of \thmref{thm:2order}.  
\end{proof}
Note that \eqref{eq:Rapprox1} does not require the distortion measure to be separable.

\section{Conclusion}
Shannon's lower bound provides a powerful tool to study the rate-distortion function. We started the discussion by presenting an abstract Shannon's lower bound in \thmref{thm:slb} and its nonasymptotic analog in \thmref{thm:C}.  Theorem \ref{thm:lbeq} states the necessary and sufficient conditions for the Shannon lower bound to be attained exactly.  According to Pinkston's \thmref{thm:pinkston}, all finite alphabet sources satisfy that condition for a range of low distortions. Whenever the Shannon lower bound is attained exactly, the $\mathsf d$-tilted information in $x$ also admits a simple representation as the difference between the information in $x$ and a term that depends only on tolerated distortion $d$ (see \eqref{eq:jlbeq}).  This implies in particular that the rate-dispersion function of a discrete memoryless source with a balanced distortion measure is given simply by the varentropy of the source, as long as the target distortion is low enough.

\apxonl{it is known that the difference between Shannon's lower bound and the (informational) rate-distortion function vanishes with decreasing distortion \cite{linkov1965evaluation}; moreover, dithered lattice quantization can approach Shannon's lower bound \cite{linder1994tessellating,zamir2014lattice}. 
  This motivated a closer investigation of the performance non-dithered lattice quantization undertaken in this paper.  }

Although continuous sources rarely attain Shannon's lower bound exactly, they often approach it closely at low distortions.  
For a class of sources whose densities satisfy a smoothness condition, \thmref{thm:hq} presents a new bound on the output entropy of lattice quantizers in terms of the differential entropy of the source and the size of the lattice cells. The gap between the lattice achievability bound in \thmref{thm:hq} and the Shannon lower bound can be explicitly bounded in terms of the target distortion, the source dimension and the lattice covering efficiency.  \thmref{thm:hq} also presents a bound on the information random variable at the output of lattice quantizers. That latter bound is particularly useful for quantifying the nonasymptotic fundamental limits of lattice quantization. 

Leveraging the bound in \thmref{thm:hq}, we evaluated the best performance theoretically attainable by variable-length lattice quantization of general (i.e. not necessarily ergodic or stationary) real-valued sources in the limit of large dimension (Theorem \ref{thm:Ldlim}).  For high definition quantization of stationary memoryless sources whose densities satisfy a smoothness condition, we showed a Gaussian approximation expansion of the minimum achievable source coding rate (Theorem \ref{thm:2order}). The appeal of the new expansion is its explicit nature and a simpler form compared to the more general result in \cite{kostina2011fixed}. Going beyond memoryless sources, we showed that for a class of sources with memory, the Shannon lower bound is attained at a speed $\bigo{\frac 1 {\sqrt n}}$ with increasing blocklength (Theorem \ref{thm:2ordermem}). The engineering implication is that as long as the dimension $n$ is not too small and the target distortion is not too large, a separated architecture of a lattice quantizer followed by a lossless coder displayed in Fig. \ref{fig:latticelossless} is nearly optimal. Using a lattice with covering efficiency $\rho_{\cC}$ induces a penalty of 
$\log \rho_{\cC}$ to the attainable nonasymptotic coding rate. If the simplest uniform scalar quantizer is used, the penalty due to its covering inefficiency is still only $ \frac 1 2 \log \frac{\pi e}{2} \approx 1.05$ bits per sample. 


\section{Acknowledgement}
I would like to thank the Simons Institute for the Theory of Computing in Berkeley for providing the nurturing environment where this work started amid the discussions in the Spring of 2014; Dr. Tam\'as Linder for his insightful comments and in particular for describing an alternative way to obtain \eqref{eq:zamir}, now included in Remark \ref{rem:linder}; Dr. Tobias Koch for pointing out the references \cite{koch2015shannonlb,koch2015gibbs}; Dr. Yury Polyanskiy for mentioning the result in \cite{fradelizi2015optimal}.

{

\appendices

\detail{
\section{An upper bound on the entropy of an integer vector} 
\label{appx:entropymax}

We show an explicit bound on $H(\lfloor X \rfloor)$ in \eqref{eq:Hnub} that allows one to recover the sufficient conditions of  Wu and Verd\'u \cite{wu2010renyi} and Koch \cite{koch2015shannonlb}: 
\begin{equation}
 H(\lfloor X \rfloor) < \E{\log R_n\left(\| \lfloor X\rfloor  \|^2\right)}   + \log e +  \log\left( \E{\log R_n\left(\| \lfloor X \rfloor \|^2\right)} + 1\right) , \label{eq:Hnub}
\end{equation}
where $R_n(\sigma^2)$ is the number of integer-valued $n$-vectors whose Euclidean norm does not exceed $ \sigma$. In particular, for $n = 1$,  $R_1(k) =1 + 2k$, so 
\begin{equation}
 \E{\log R_1\left(\| X  \|^2\right)} = \E{ \log (  2 | \lfloor X\rfloor | + 1 )} 
\end{equation}
More generally, for $n \geq 1$ (Appendix \ref{appx:entropymax}),
\begin{equation}
\E{ \log R_n(\|\lfloor X \rfloor \|^2)} \leq  n \E{ \log \left( 1 + \frac 1 {\sqrt n}  \|X\|   \right)}  + n \log  \sqrt{2 \pi e} - \frac 1 2 \log (n + 2) - \log \sqrt \pi + \frac {2}{n + 2} \log e. \label{eq:Elogrn}
\end{equation}
 Bounds \eqref{eq:Hnub} and \eqref{eq:Elogrn} imply that 
a sufficient condition for \eqref{eq:entropy} is $\E{\log(1 + \frac 1 {\sqrt n} \|X\|)} < \infty$.

Assume that vector $X$ has all integer components. 
The following bound is shown in \cite[Lemma 3]{alon1994lower}:
\begin{align}\label{eq:aaa}
		H(X)  &\le L^\star_X  + \log_2 ( L^\star_X+1) + \log_2 e, 
\end{align}
where $L^\star_X$ is the minimum average length of a lossless representation of $X$.  Consider the following suboptimal lossless code:
enumerate all integer $n$-vectors so that 
\begin{equation}
\vect{j} \preceq \vect{k} \text{ if and only if } \| \vect{j}\| \leq \| \vect{k} \|,
\end{equation}
and let the lossless representation of $X$ be the binary representation of the index number of $X$.  Bounding $L^\star_X$ from above by the average length of this code, we have  
\begin{align}
 L^\star_X \leq \E{ \log_2 R_n \left( \|X\| \right) },
\end{align}
where $R_n(r)$ is the number of integer-valued vectors with norm not exceeding $r$:
\begin{equation}
 R_n(r) \triangleq \#\left\{ \vect k \in \mathbb Z^n \colon \| \vect k\| \leq r\right\}.
\end{equation}
To estimate $R_n(r)$, observe that $R_n(r)$ is upper bounded by the number of cubes with side $1$ that intersect the ball of radius $r$. Since the diameter of cube of side $1$ is $\sqrt n$, all those cubes are contained inside the augmented ball of radius $r + \sqrt n$. The number of volume-1 cubes that can be packed inside a ball is upper bounded by the volume of that ball; therefore,
\begin{equation}
 R_n(r) \leq b_n \left( r + \sqrt n \right)^n. \label{eq:Rnsu}
\end{equation}

A nonasymptotic Stirling's formula implies
\begin{equation}
\log b_n \leq \frac n 2 \log \frac {2 \pi e} {n + 2} - \frac 1 2 \log (n + 2) + \log \frac {e}{\sqrt \pi}, \label{eq:Anu}
\end{equation}
 so applying $\frac{z}{1 + z} \leq \log_e (1 + z)$ for $z > -1$ and \eqref{eq:Anu} to \eqref{eq:Rnsu}, we obtain
\begin{align}
\log R_n\left(r\right) &\leq  \log b_n - \frac n 2 \log n + n \log \left( \frac{\sqrt \sigma}{\sqrt n} + 1\right) \\
&\leq n \log \left( \frac{r}{\sqrt n} + 1\right)  + n \log  \sqrt{2 \pi e} - \frac 1 2 \log (n + 2) - \log \sqrt \pi + \frac {2}{n + 2} \log e \label{eq:rnu}. 
\end{align}
}

\begin{apxonly}
A WEAKER BOUND: 

For an integer-valued $n$-vector $\vect{k}$, define the following $n$-dimensional generalization of the geometric distribution function: 
\begin{equation}
G_p (\vect{k}) = \frac 1 {r_n(\| \vect{k}\|^2 )}p (1 - p)^{\| \vect{k}\|}, 
\end{equation}
where $r_n(\sigma)$ is the number of integer-valued vectors with Euclidean norm $\sqrt \sigma$:
\begin{equation}
 r_n(\sigma) \triangleq \#\left\{ \vect k \in \mathbb Z^n \colon \| \vect k\|^2 = \sigma\right\}
\end{equation}
Assume that vector $X$ has all integer components. 
In 
\begin{equation}
\E{\imath_Y(X)} = D(P_X \| P_Y) + H(X), 
\end{equation}
letting $P_Y = G_{\frac 1 {\E{\|X\|}}}$, we obtain
\begin{align}
H(X) &\leq \E{\imath_Y(X)} \\
&=  (\E{\|X\|} + 1) h \left( \frac 1 {\E{\|X\|} + 1}\right) + \E{\log r_n\left(\|X\|^2\right)}\\
&=  (\E{\|X\|} + 1) h \left( \frac 1 {\E{\|X\|} + 1}\right) + \E{\log r_n\left(\|X\|^2\right)}\\
&\leq \log (\E{\|X\|} +  1) + \log e +  \E{\log r_n\left(\|X\|^2\right)} \label{eq:em1}
\end{align}
To bound the last term in \eqref{eq:em1}, we need an estimate of $r_n\left(s \right)$. 
Towards that end, denote the number of integer-valued vectors within a ball of radius $\sqrt \sigma$ as
\begin{align}
 R_n(\sigma) &\triangleq \sum_{i = 0}^{\sigma} r_{n}(i), 
\end{align}
and observe that
\begin{align}
r_n(\sigma) \leq R_n(\sigma) 
&\leq 1 + \sigma r_n(\sigma)
\end{align}
Furthermore, $R_n(\sigma)$ is upper bounded by the number of cubes with side $1$ that intersect the ball of radius $\sqrt \sigma$. Since the diameter of cube of side $1$ is $\sqrt n$, all those cubes are contained inside the augmented ball of radius $\sqrt \sigma + \sqrt n$. The number of volume-1 cubes that can be packed inside a ball is upper bounded by the volume of that ball; therefore,
\begin{equation}
 R_n(\sigma) \leq b_n \left( \sqrt \sigma + \sqrt n \right)^n \label{eq:Rnsu1}
\end{equation}

A nonasymptotic Stirling's formula leads to
\begin{equation}
\log b_n \leq \frac n 2 \log \frac {2 \pi e} {n + 2} - \frac 1 2 \log (n + 2) + \log \frac {e}{\sqrt \pi}, \label{eq:Anu1}
\end{equation}
 so applying $\frac{z}{1 + z} \leq \log_e (1 + z)$ for $z > -1$ and \eqref{eq:Anu1} in \eqref{eq:Rnsu1}, we obtain
\begin{align}
\log r_n\left(\sigma\right) &\leq  \log b_n - \frac n 2 \log n + n \log \left( \frac{\sqrt \sigma}{\sqrt n} + 1\right) \\
&\leq n \log \left( \frac{\sqrt \sigma}{\sqrt n} + 1\right)  + n \log  \sqrt{2 \pi e} - \frac 1 2 \log (n + 2) - \log \sqrt \pi + \frac {2}{n + 2} \log e \label{eq:rnu1},
\end{align}
and, finally, using \eqref{eq:rnu1} in \eqref{eq:em1} and applying Jensen's inequality, we conclude
\begin{align}
H(X) &\leq  \log (\E{\|X\|} +  1) + n \log \left( \frac 1 {\sqrt n}  \E {\|X\|} + 1\right)  + n \log  \sqrt{2 \pi e} \\
&- \frac 1 2 \log (n + 2) + \log \frac e {\sqrt \pi} + \frac {2}{n + 2} \log e
\end{align}

\begin{remark}
 A classical result in number theory states that  
for $n \geq 5$, \cite[(12.9)]{grosswald2012representations}
\begin{equation}
r_n(\sigma) = S_{n,\sigma} A_{n-1} \sigma^{\frac n 2 - 1}  + C_{n,\sigma} \sigma^{\frac n 4}, \label{eq:number}
\end{equation}
where $S_{n,\sigma}$ is the so-called singular series bounded as $a < S_n(\sigma) < b$, and $C_{n,\sigma}$ is bounded as $|C_{n, \sigma}| \leq c$, where $a, b, c > 0$ are universal constants.  Actually, as $n \to \infty$, $S_n(\sigma) \to 1$. 
\end{remark}

A couple of ALTERNATIVE WAYS TO BOUND $r_n()$. 
Since for $a, b > 0$
\begin{align}
\log (a + b) - \log a \leq \frac b a \log e,
\end{align}
we have
 \begin{align}
\log r_n(\sigma ) &\leq \left( \frac n 2 - 1\right) \log \sigma + \log S_{n,\sigma} + \log A_{n-1} + \frac{C_{n,\sigma}}{S_{n,\sigma} A_{n-1}} \frac 1 {\sigma^{\frac n 4 - 1}}
\end{align}
Alternatively, 
\begin{align}
a^{2k} + b^k \leq (a + \sqrt b)^{2k}
\end{align}
for $k = \frac n 2- 1$, we have
\begin{align}
\log r_n(\sigma ) &\leq \left( n - 2\right) \log \left( \sqrt{\sigma} + \alpha_{n, \sigma} \sqrt{\sigma}^{\frac 1 2 + \frac 1 {n - 2}}\right) +   \log S_{n,\sigma} + \log A_{n-1} 
\end{align}
where
\begin{align}
\alpha_{n, \sigma} = \left|\frac{C_{n,\sigma}}{S_{n,\sigma} A_{n-1}}  \right|^{\frac 2 {n - 2}} \to \infty, \qquad {n \to \infty}
\end{align}
A better estimate of the remainder term in \eqref{eq:number} is required. If we could show that in fact $C_{n,\sigma} = A_{n-1} C$ it would suffice. 
\end{apxonly}

\bibliographystyle{IEEEtran}
\bibliography{../../rateDistortion,../../vk}

\begin{IEEEbiographynophoto}{Victoria Kostina}(S'12--M'14)
joined Caltech as an Assistant Professor of Electrical Engineering in the fall of 2014. She holds a Bachelor's degree from Moscow institute of Physics and Technology (2004), where she was affiliated with the Institute for Information Transmission Problems of the Russian Academy of Sciences, a Master's degree from University of Ottawa (2006), and a PhD from Princeton University (2013). Her PhD dissertation on information-theoretic limits of lossy data compression received Princeton Electrical Engineering Best Dissertation award.  She is also a recipient of Simons-Berkeley research fellowship (2015). Victoria Kostina's research spans information theory, coding, wireless communications and control. 
\end{IEEEbiographynophoto}

\end{document}